\documentclass[a4paper,11pt]{article}
\usepackage{stmaryrd}
\usepackage{amsfonts}
\usepackage{bbm}
\usepackage{float} 
\usepackage{float}
\usepackage{amscd}
\usepackage{mathrsfs}
\usepackage{latexsym,amssymb,amsmath,amscd,amscd,amsthm,amsxtra,xypic}
\usepackage[dvips]{graphicx}
\usepackage[utf8]{inputenc}
\usepackage[T1]{fontenc}
\usepackage{lmodern}
\usepackage{amssymb}
\usepackage[all]{xy}
\usepackage{nicefrac,mathtools,enumitem}
\usepackage{microtype}
\usepackage{lipsum}
\usepackage{mathtools}
\usepackage{xfrac}

\usepackage{geometry}
\usepackage{hyperref}

\usepackage{times}
\usepackage{amsthm}
\usepackage{amsmath}
\usepackage{amsfonts}
\usepackage{amssymb}
\usepackage{mathdots}
\usepackage{color}
\usepackage{mathrsfs}
\usepackage{stmaryrd}
\SetSymbolFont{stmry}{bold}{U}{stmry}{m}{n}

\usepackage{bm}
\usepackage{tikz}
\usetikzlibrary{calc,math}

\usepackage{tikz-cd}
\usepackage{tikz}
\usetikzlibrary{matrix,arrows,decorations.pathmorphing}
\usepackage[all]{xy}
\usepackage{graphicx}
\usepackage{subfigure}

\usepackage[toc,page]{appendix}
\usepackage[usestackEOL]{stackengine}

\usepackage{theoremref}
\newtheorem{thm}{Theorem}[section]
\newtheorem{lem}[thm]{Lemma}
\newtheorem{cor}[thm]{Corollary}
\newtheorem{pro}[thm]{Proposition}

\theoremstyle{definition}
\newtheorem{ex}[thm]{Example}
\newtheorem{rmk}[thm]{Remark}
\newtheorem{defi}[thm]{Definition}
\newtheorem*{assumption*}{Assumption}

\newcommand{\be }{\begin{equation}}
\newcommand{\ee }{\end{equation}}

\newcommand{\mathi}{\mathrm{i}}

\newcommand{\var}{\varepsilon}

\newcommand{\h}{\mathfrak h}

\newcommand{\br}[1]{   [ \cdot,    \cdot  ]   }

\newcommand {\Sect}{{\rm Sect}}

\newcommand{\D}{\mathbb{D}}

\newcommand {\IC}{\mathbb{C}}

\newcommand{\C}{\mathbb C}
\newcommand{\End}{\operatorname{End}}

\newcommand{\ad}{\operatorname{ad}}

\newcommand {\Uph}{\mathfrak{U}_{p,\hbar}}  
\newcommand {\wUph}{\widehat{\mathfrak{U}}_{p,\hbar}}

\newcommand{\I}{{\mathrm{i}}}
\usepackage{pgfplots}
\pgfplotsset{compat=1.18}
\title{Quantum Stokes matrices and quantum Riemann-Hilbert-Birkhoff maps}
\author{Xiaomeng Xu}
\date{}

\newcommand{\Addresses}{{
  \bigskip
  \footnotesize
\noindent \textsc{
School of Mathematical Sciences \& Beijing International Center
for Mathematical Research, Peking University, Beijing 100871, China}\par\nopagebreak
  \textit{E-mail address}: \texttt{xxu@bicmr.pku.edu.cn}
}}
\begin{document}

\maketitle
    
\begin{abstract}
In this paper, we define quantum Stokes matrices of meromorphic linear systems of ordinary differential equations with noncommutative coefficients and a pole of order $p+1$. We prove that these quantum Stokes
matrices satisfy natural quantum exchange relations. These relations allow us to interpret the quantum Stokes matrices as an associative algebra homomorphism,
which may be viewed as a quantization of the
Riemann-Hilbert-Birkhoff map, regarded as a Poisson map, for
meromorphic connections.
\end{abstract}

\section{Introduction}

The Stokes phenomenon describes the fact that a solution of a meromorphic
linear ordinary differential equation may have different asymptotic expansions
as $z$ approaches an irregular singular point from different sectors. We refer to \cite{Balser, LR, Wa} for detailed introductions
to the summability theory and Stokes phenomenon. The jumps between these sectorial asymptotic
expansions are measured by Stokes matrices.

Over the past decades, Stokes matrices for meromorphic
linear systems with poles of order two have played important roles in many areas of mathematics and
physics. However, they are highly transcendental objects and are
therefore difficult to study directly. In \cite{Xu1, Xu, TX} we relate the Stokes phenomenon, WKB approximation
and isomonodromy deformations to algebraic structures governed by the Yang–Baxter equation, using which we study various analysis problems. The
purpose of this paper is to extend this relation to higher order poles.
In particular, we
derive quantum exchange relations hidden behind the Stokes phenomenon for
poles of arbitrary order, with the hope that the representation theory of the associated quantum algebras can be used to understand various asymptotic analysis problems for general meromorphic linear systems. As a byproduct, we obtain a quantization of the
Riemann-Hilbert-Birkhoff (RHB) map for
meromorphic connections with a pole of order $p+1$. Here the classical RHB map is regarded as a Poisson morphism between two algebraic Poisson varieties, the de Rham and Betti moduli spaces. 

\subsection*{Main theorem: the exchange relations for quantum Stokes matrices}
Let $p>1$ be an integer, and $u_1,\ldots,u_n$ be $n$ distinct complex numbers. 
Let $\Uph$ be the associative
$\mathbb C[\hbar]$-algebra generated by
\[
  \hbar e_{ij},\qquad \hbar e_{ij}^{(a)},
  \qquad 1\le i,j\le n,\quad 1\le a\le p-1,
\]
subject to the relations for all $i,j,k,l=1,\ldots,n$ and $a,b=1,\ldots,p-1$
\begin{align}\label{algebra1}
[ \hbar e_{ij}^{(a)},\hbar e_{kl}^{(b)}]&=\left\{\begin{array}{lr}  \hbar (\delta_{jk} \hbar e_{il}^{(a+b)}-\delta_{li}  \hbar e_{kj}^{(a+b)}),  & \text{ if } a+b\le p-1 \\
\hbar\delta_{jk}\delta_{il}(u_i-u_j), & \text{ if } a+b=p,
 \\
0, & \text{ if } a+b\ge p+1.
             \end{array}
\right.
\\ \label{algebra2}
[\hbar e_{ij},\hbar e_{kl}]&=\hbar(\delta_{jk}\hbar e_{il}-\delta_{li} \hbar e_{kj}), \quad \text{and} \quad [\hbar e_{ij}^{(a)},\hbar e_{kl}]=0.
\end{align}
We introduce a completion $\wUph$ of this algebra in which algebra valued holomorphic functions can be defined. For more details, see Section \ref{beginsec1}.

Let $\h_{\rm reg}$ denote the set of $n\times n$ diagonal matrices with distinct eigenvalues. Given $u\in\h_{\rm reg}$, let us consider the linear differential equation for a holomorphic function $F(z)\in {\wUph} {\otimes} \End(\C^n)$
\begin{equation}\label{hqeq}
\frac{dF}{dz} = \left(\frac{u}{z^{p+1}}+\frac{\hbar T_{[p]}}{z^p}+\cdots+\frac{\hbar T_{[2]}}{z^2}-\frac{\hbar T}{z}\right)\cdot F,\end{equation}
where $u={\rm diag}(u_1,..,u_n)$ is regarded as an $n\times n$ matrix with scalar entries, and $T, T_{[m]}$ are $n\times n$ matrices with entries 
\begin{align*}
(T_{[m]})_{ij}&= e_{ij}^{(m-1)}, \ \ \  \text{for} \ 1\le i,j\le n, \ \ 2\le m\le p,\\
(T)_{ij}&= e_{ij}, \ \ \ \ \  \text{for} \ 1\le i,j\le n.
\end{align*}
Thus $u, \hbar T$ and $\hbar T_{[m]}$ are elements in $\Uph\otimes \End(\C^n)$ and the product $\cdot$ in \eqref{hqeq} is the multiplication in ${\wUph} {\otimes} \End(\C^n)$. 

Equation \eqref{hqeq} has a formal solution $\widehat{F}$ at
$z=0$ (see Theorem \ref{oneformal}). Standard resummation theory (see e.g., \cite{Balser, LR, Wa}) gives sectorial regions around $z=0$, on each of which there is a unique (therefore canonical) holomorphic solution with the prescribed formal asymptotic expansion. The transition matrices between these solutions are encoded by $2p$ (quantum) Stokes matrices $S_{1}, ..., S_{2p}\in\wUph\otimes{\rm End}(\mathbb{C}^n)$. They depend on the choice of an initial anti-Stokes direction, and after a simultaneous permutation determined by the direction and the ordering
of $u_1,\ldots,u_n$, are unipotent triangular matrices. See Section \ref{q-Stokes} for more details. 

\begin{thm}\label{thmRLL}
For $p>1$ and any $u\in\h_{\rm reg}$, choose the initial anti-Stokes ray and the ordering of $u_1,\ldots,u_n$ so that
$S_{2k+1}(u)$ and $S_{2k}(u)$ are lower and upper triangular, respectively. With this convention, the Stokes matrices satisfy the commutation relations (as identities in $\wUph\otimes  {\rm End}(\mathbb{C}^n)\otimes {\rm End}(\mathbb{C}^n)$)
\begin{align*}
e^{\pi\mathi \hbar\delta P}S^{(1)}_i(u)e^{-\pi\mathi \hbar\delta P}S_{i+k}^{(2)}(u)&=S^{(2)}_{i+k}(u)e^{\pi\mathi \hbar\delta P}S^{(1)}_i(u)e^{-\pi\mathi \hbar\delta P}, \ \ \text{for any $i$ and } 1<k<2p-1,\\
R^{12}S^{(1)}_i(u)e^{-\pi\mathi \hbar\delta P}S^{(2)}_i(u) &=S^{(2)}_i(u)e^{-\pi\mathi \hbar\delta P}S^{(1)}_i(u) R^{12}, \ \ \text{for $i$ odd} \\ 
R^{12}S^{(2)}_i(u)e^{-\pi\mathi \hbar\delta P}S^{(1)}_i(u) &=S^{(1)}_i(u)e^{-\pi\mathi \hbar\delta P}S^{(2)}_i(u)R^{12}, \ \ \text{for $i$ even} \\
S^{(2)}_{i-1}(u)e^{-\pi\mathi \hbar\delta P} S^{(1)}_{i}(u)e^{\pi\mathi \hbar\delta P} &=e^{-\pi\mathi \hbar\delta P}S^{(1)}_{i}(u) R^{12} S^{(2)}_{i-1}(u), \ \ \text{for $i$ odd},\\
S^{(1)}_{i-1}(u)e^{-\pi\mathi \hbar\delta P}S^{(2)}_{i}(u)e^{\pi\mathi \hbar\delta P} &=e^{-\pi\mathi \hbar\delta P}S^{(2)}_{i}(u) R^{12} S^{(1)}_{i-1}(u), \ \ \text{for $i$ even},\\
e^{2\pi \mathi D_1^{(1)}}S_i^{(2)}(u)e^{-2\pi \mathi D_1^{(1)}}
 &=
 e^{-2\pi \mathi \hbar\delta P}
 S_i^{(2)}(u)
 e^{2\pi \mathi \hbar\delta P}.
\end{align*}
Here the matrix $R\in {\rm End}(\IC^n)\otimes {\rm End}(\IC^n)\llbracket\hbar\rrbracket$ is defined by (with $E_{ij}$ the matrix units)
\begin{equation}\label{sRmatrix}
    R=\sum_{i\ne j, i,j=1}^n E_{ii}\otimes E_{jj}+e^{\pi\mathi \hbar}\sum_{i=1}^n E_{ii}\otimes E_{ii}+(e^{{\pi\mathi \hbar}}-e^{-{\pi\mathi \hbar}})\sum_{1\le j<i\le n}E_{ij}\otimes E_{ji},
\end{equation} 
and $\delta P=\sum_{k=1}^n 1\otimes E_{kk}\otimes E_{kk}$. We take the convention $S_{i+2p}(u):=e^{-2\pi\I D_1}S_{i}(u) e^{2\pi\I D_1}$, where $D_1$ is the residue term in the diagonal irregular type of the formal solution in Theorem \ref{oneformal}, and
\[S^{(1)}_{k}(u):= \sum_{i,j}S_{k}(u)_{ij}\otimes E_{ij}\otimes 1, \ S^{(2)}_{k}(u):= \sum_{i,j}S_{k}(u)_{ij}\otimes 1\otimes E_{ij}, \text{ and } \ R^{12}:=1\otimes R\] 
as elements in $\wUph\otimes  {\rm End}(\mathbb{C}^n)\otimes {\rm End}(\mathbb{C}^n)$.
\end{thm}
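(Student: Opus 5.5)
We will prove the relations by comparing canonical sectorial solutions of two auxiliary systems obtained by tensoring \eqref{hqeq} with itself. Concretely, let $F_k(z)$ be the canonical solution on the $k$-th Stokes sector, so that $F_{k+1}=F_k S_k$ on overlaps. Form the two $\wUph\otimes\End(\C^n)\otimes\End(\C^n)$-valued functions $F_k^{(1)}(z)$ and $F_l^{(2)}(z)$, placing the matrix part in the first or second tensor slot. They do not commute, because the entries of $T^{(1)}$ and $T^{(2)}$ do not commute in $\Uph$. The key input is the structure of \eqref{algebra1}–\eqref{algebra2}. They say that $r(z)=\sum_{a}(\cdot)$, built from the residue pairing $e^{(a)}\leftrightarrow e^{(p-a)}$ and from $e_{ij}$, makes $\hbar T^{(1)}(z)$ and $\hbar T^{(2)}(w)$ satisfy a classical-$r$-matrix-type commutation relation. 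In particular $[A^{(1)}(z),A^{(2)}(z)]$, with $A(z)$ the coefficient of \eqref{hqeq}, equals $\hbar[\,\Omega(z)/\text{appropriate power of } z,\;A^{(1)}+A^{(2)}]$ up to a central piece involving $u$. Here $\Omega=\sum E_{ij}\otimes E_{ji}=P$ and $\delta P$ enters through the diagonal part.

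\textbf{Step 1} (auxiliary ODE). Using these commutators we derive a linear ODE in $z$ for the product $G_{kl}(z):=F_k^{(1)}(z)^{-1}F_l^{(2)}(z)F_k^{(1)}(z)F_l^{(2)}(z)^{-1}$, or equivalently for the "braiding'' $F_k^{(1)}F_l^{(2)}\big(F_l^{(2)}F_k^{(1)}\big)^{-1}$. The commutator $[e_{ij},\,\cdot\,]$ acts as the adjoint $\mathfrak{gl}_n$-action. Hence $F^{(2)}$ conjugated by $F^{(1)}$ satisfies the system of $F^{(2)}$ twisted by $\hbar P/z$ plus irregular terms coming from the $a+b=p$ relation. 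The latter produce exactly the $u_i-u_j$ that cancels the diagonal $u/z^{p+1}$ interaction.

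\textbf{Step 2} (asymptotics). We show that on each sector the braiding element has an asymptotic expansion whose leading term is a function of $z^{\hbar\,\delta P}$-type monodromy, e.g. $z^{\pm\hbar\delta P}$ and its off-diagonal correction. We use the formal solution of Theorem \ref{oneformal} and the uniqueness of sectorial solutions with prescribed asymptotics. Comparing its values on adjacent sectors, the braiding element is constant up to these explicit factors, so it equals $e^{\pm\pi\mathi\hbar\delta P}$ or $R^{12}$. Which one appears depends on whether the two sectors coincide, are adjacent, or are far apart ($1<k<2p-1$). In the far-apart case the triangular structures are compatible and only the diagonal phase $e^{\pi\mathi\hbar\delta P}$ survives. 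For equal index the Yang–Baxter-type $R$ of \eqref{sRmatrix} arises from solving the rank-two comparison explicitly, with lower or upper triangularity selecting $R$ versus its conjugate for $i$ odd or even.

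\textbf{Step 3} (reading off the relations). Substitute $F_{k+1}=F_kS_k$ into the braiding identities at different sectors and cancel the solutions to obtain the stated $RSS$ relations. The last relation follows from the monodromy convention $S_{i+2p}=e^{-2\pi\I D_1}S_ie^{2\pi\I D_1}$ together with the commutation of $D_1^{(1)}$ with $S^{(2)}$, which is computed from the formal solution.

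The main obstacle is Step 2. We need to control noncommutative asymptotic expansions of the braiding element and to identify the constant precisely as $R^{12}$, with the correct $e^{\pm\pi\mathi\hbar}$ phases. We would first try to reduce to the case where everything factors through the $\mathfrak{gl}_n$-weight decomposition, so the computation becomes a scalar Gamma-function-free comparison of $z^{\hbar}$ branches across one anti-Stokes ray.
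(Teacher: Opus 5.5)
There is a genuine gap in Step 1, and Steps 2--3 inherit it. The relations of $\Uph$ do not yield a closed linear ODE for a same-point braiding element such as $F^{(1)}_k(z)F^{(2)}_l(z)\bigl(F^{(2)}_l(z)F^{(1)}_k(z)\bigr)^{-1}$, and the structural facts you invoke to get one are false.

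\textbf{The residue generators.} By \eqref{algebra2}, $\hbar e_{ij}$ commutes with every $\hbar e^{(a)}_{kl}$. Hence $[\hbar e_{ij}\otimes 1,A(z)]$ only sees the residue $-\hbar T/z$, whereas $\operatorname{ad}(\hbar E_{ji})$ would also rotate $u$ and the $T_{[m]}$. So $e_{ij}$ does not act by the adjoint $\mathfrak{gl}_n$-action; only the diagonal elements $(D_1)_{kk}$ act as weight operators on everything (Lemma \ref{Dcommute}).

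\textbf{The irregular generators.} $\hbar e^{(a)}_{ij}$ acts on $A(z)$ as $\operatorname{ad}(\hbar E_{ji}z^a)$ only modulo a remainder with a simple pole, see \eqref{xT}. Consequently $[\hbar e^{(a)}_{ij},F(z)]-\hbar E_{ji}z^aF(z)$ equals $F(z)$ times a primitive of $F^{-1}\bigl(R_{\hbar e^{(a)}_{ij}}-a\hbar E_{ji}z^{a-1}\bigr)F$. So $[A^{(1)}(z),F^{(2)}(z)]$ is not a closed expression in $F^{(2)}(z)$.

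\textbf{Your same-point commutator formula.} It is vacuous. Since $PX^{(1)}P=X^{(2)}$ for every $X$, we have $[P,A^{(1)}(z)+A^{(2)}(z)]=0$. But Lemma \ref{12commute} gives
\[
[A^{(1)}(z),A^{(2)}(z)]=\frac{[\hbar P,\hbar T^{(2)}]}{z^{2}}+\sum_{k=3}^{p}\frac{(k-2)[\hbar P,\hbar T^{(2)}_{[k]}]}{z^{k+1}}+\frac{(p-1)\hbar(u^{(1)}-u^{(2)})P}{z^{p+2}},
\]
whose non-central terms are not of your form. These derivative-type weights are the diagonal remnant of the only available exchange identity, the two-variable one
\[
[A^{(1)}(z_1),A^{(2)}(z_2)]=[K_2,A^{(1)}(z_1)]+[A^{(2)}(z_2),K_1],
\]
with $K_1=\hbar P/z_1+\hbar P/(z_2-z_1)$ and $K_2=\hbar P/z_2+\hbar P/(z_1-z_2)$. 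This identity is exactly the flatness of \eqref{kKZ1}--\eqref{kKZ2}. Its kernels blow up on $z_1=z_2$, and restricting to the diagonal discards precisely the local monodromy that produces $R$ and the $e^{\pm\pi\mathi\hbar\delta P}$ twists. So Step 2 lacks not just estimates but any mechanism that could output $R^{12}$.

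\textbf{The missing idea is point splitting.} The paper works with the compatible system \eqref{kKZ1}--\eqref{kKZ2} and builds its formal solution $\widehat Y(z,t)$, with $z=z_1$ and $t=z_2/z_1$. It factorizes $\widehat Y$ in three regimes:
\begin{itemize}
\item near $t=\infty$ as ${}_1\widehat W(z_1;z_2)\widehat F^{(2)}(z_2)$;
\item near $t=0$ as ${}_2\widehat W(z_2;z_1)\widehat F^{(1)}(z_1)$;
\item near $t=1$ as $\widehat U(z,t)\widehat X(\omega)p^{\hbar\delta P}$.
\end{itemize}
After resummation on the chambers cut out by $\mathcal C(u;d)_0\cup\mathcal C(u;d)_1$, the connection matrices between the solutions $Y_{d,\mathbb D}$ are of three kinds:
\begin{itemize}
\item $S^{(1)}_i$;
\item $e^{\mp\pi\mathi\hbar\delta P}S^{(2)}_je^{\pm\pi\mathi\hbar\delta P}$, where the phases come from the branch of $((t-1)/(1-t))^{\hbar\delta P}$;
\item $R^{12}_0$, the Stokes matrix of the exactly solvable collision equation \eqref{Xeq}, which splits into $2\times 2$ blocks.
\end{itemize}
Each exchange relation is then the equality of the products along two homotopic paths, using $R=e^{\pi\mathi\hbar\delta P}R_0$.

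The one piece of your Step 3 that survives is the last relation, but its precise input is $[S^{(2)}_i,D^{(1)}_1+\hbar\delta P]=0$. This holds because $D^{(1)}_1+\hbar\delta P$ commutes with the coefficients and the normalization of the slot-$2$ equation, by \eqref{D1T}--\eqref{D1T1}. The convention $S_{i+2p}=e^{-2\pi\I D_1}S_ie^{2\pi\I D_1}$ does not by itself give it.
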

The tensor product identities above encode quadratic $q$-exchange relations among the matrix entries of the
Stokes matrices in $\wUph$, where $q=e^{\pi\mathi\hbar}$. For general $u_1,\ldots,u_n$, the Stokes matrices $S_i(u)$ may not be triangular. After the corresponding
simultaneous permutation, the same relations in Theorem \ref{thmRLL} still
hold. 
\begin{rmk}
The matrix $R$ is the standard $R$-matrix for the quantum group $U_\hbar(\frak{gl}_n)$, see \cite{Jimbo2}\cite{FRT}. For $p=1$, the same construction gives the FRT realization of the quantum group $U_\hbar(\mathfrak{gl}_n)$. See \cite{Xu}, and \cite{LWX} for an analog in the super setting.
\end{rmk}

To prove Theorem \ref{thmRLL}, in Section \ref{SPDE} we study the Stokes phenomenon of the linear PDE system \eqref{kKZ1}-\eqref{kKZ2} with variables $z_1,z_2$. This system is a Knizhnik-Zamolodchikov (KZ) type equation with poles of order $p+1$. In Section \ref{SPDE}, we construct the formal solution $\widehat{Y}(z_1,z_2)$ of the system, and its factorization at $z_2/z_1=0, 1$ and $\infty$. In Section \ref{sec34}, we study the resummation of the formal solution. The resulting holomorphic solutions depend on the summation directions of $z_1,z_2$ and regions. 
In Corollary \ref{mainpro}, we obtain the factorization of these holomorphic solutions, via the factorization of the underlying formal solution, at $z_2/z_1=0,1$ and $\infty$. Since the local singular behavior of the holomorphic solutions are encoded by the second factors in the factorizations, the connection formula between these solutions are given by the Stokes matrices of the second factors, which in turn can be expressed by the standard $R$-matrix and quantum Stokes matrices $S_i(u)$. Thus, we derive the explicit connection formula or Stokes matrices for the solutions of the system \eqref{kKZ1}-\eqref{kKZ2}. In Section \ref{proofThm}, by comparing the connection formula along homotopy paths in the space $z_1,z_2\in (\mathbb{C}\times \mathbb{C})\setminus\{z_1=0,z_2=0,z_1=z_2\}$, we finally prove the identities in Theorem \ref{thmRLL}. 
Although a general analytic framework for the asymptotic
analysis and Stokes phenomenon of integrable meromorphic connections
in several variables is available, see \cite{Sab}, there are still relatively few explicit studies of concrete PDE systems with higher order poles, including their factorization properties and connection formulas between different asymptotic zones. The analysis developed here is therefore also of independent interest.

\subsubsection*{Motivation and Quantum Riemann-Hilbert-Birkhoff maps}

One motivation for studying the algebraic structures underlying the (quantum) Stokes matrices is to solve the WKB approximation problem of meromorphic linear differential equations. For instance, in the case $p = 1$, the algebraic structures allow us to show that the WKB approximation is governed by the crystal structures of quantum groups, see \cite{Xu1}. Another motivation is to  clarify the Poisson geometric nature of the Riemann–Hilbert–Birkhoff (RHB) maps, as explained in the following.

The Poisson geometric nature of the RHB maps was developed in a series of works \cite{Boalch1, Boalch2, Boalch4}, see also Section \ref{QRHBmap} for a brief review. We interpret Theorem \ref{thmRLL} as a quantization of the classical RHB maps. 

Consider the differential equations for a function $f(z)\in {\rm GL}(n,\mathbb{C})$
\begin{equation}\label{heq}
\frac{df}{dz} = \left(\frac{u}{z^{p+1}}+\frac{A_p}{z^p}+\cdots+\frac{A_2}{z^2}-\frac{B}{z}\right)\cdot f,\end{equation}
where $u={\rm diag}(u_1,\ldots,u_n)\in \h_{\rm reg}$, and $A_i, B\in\frak{gl}_n$ for $i=2,...,p$. 
For fixed $u\in \h_{\rm reg}$, the moduli space of differential equations \eqref{heq} can be identified with the product space 
\begin{equation}
{\mathcal M}_{dR}^{(p)}(u)=\{(A_p,...,A_2,B)\in \frak{gl}_n\times \cdots \times \frak{gl}_n\}.
\end{equation}
The
classical equation \eqref{heq} admits a formal solution
\[
\widehat f(z)
=
\left(1+h_1z+h_2z^2+\cdots\right)\cdot
\exp\int
\left(
\frac{u}{z^{p+1}}
+\frac{d_p}{z^p}
+\cdots
+\frac{d_1}{z}
\right)dz,
\]
where $d_p,\ldots,d_1$ are diagonal matrices. The entries of the matrices
$h_m$ and $d_r$ are polynomial functions on ${\mathcal M}_{dR}^{(p)}(u)$.
After the resummation (order-$p$ Borel-Laplace transform), the classical formal solution gives sectorial holomorphic
solutions and hence classical Stokes matrices $(s_1,\ldots,s_{2p})$. These Stokes matrices are holomorphic matrix functions of the coefficients of \eqref{heq}. Then
the classical Riemann-Hilbert-Birkhoff map 
\[\nu(u): (A_p,...,A_2,B)\in {\mathcal M}_{dR}^{(p)}(u)\rightarrow (s_1,...,s_{2p},d_1)\in\mathcal{M}_B^{(p)}\cong (U_-\times U_+)^{p}\times \h\]
associating the classical Stokes matrices to the differential equation \eqref{heq}, is an analytic Poisson map. See Theorem \ref{kthm}. Here $U_\pm$ are the space of unipotent upper and lower triangular matrices.

On the
quantum side, 
the quantum equation \eqref{hqeq} has an analogous formal solution (see Theorem \ref{oneformal})
\[
\widehat F(z)
=
\left(1+H_1z+H_2z^2+\cdots\right)
\cdot \exp\int
\left(
\frac{u}{z^{p+1}}
+\frac{D_p}{z^p}
+\cdots
+\frac{D_1}{z}
\right)dz,
\]
where $D_p,\ldots,D_1$ are diagonal matrices with entries in $\Uph$, and
$H_m\in \Uph\otimes\End(\C^n)$. The same resummation procedure produces sectorial holomorphic solutions and hence quantum Stokes matrices with entries in the completed algebra $\wUph$.

The semiclassical limits of $\Uph$ and $\wUph$ induce the Poisson brackets on the polynomial coordinate ring of ${\mathcal M}_{dR}^{(p)}(u)$ and on its completed local ring $\widehat{\mathcal{O}}_{{\mathcal M}_{dR}^{(p)}(u)}$ at $0\in {\mathcal M}_{dR}^{(p)}$. Taking the semiclassical limit at the level of formal solutions gives 
\[{\rm scl}: H_k\in\Uph\otimes \End(\C^n)\rightarrow h_k\in {{\mathcal O}}_{{\mathcal M}_{dR}^{(p)}}{\otimes} \End(\C^n)\]
and at the level of holomorphic solutions and Stokes matrices it gives
\[{\rm scl}: S_i\in\wUph\otimes \End(\C^n)\rightarrow s_i\in {\widehat{\mathcal O}}_{{\mathcal M}_{dR}^{(p)}}\widehat{\otimes} \End(\C^n).\]

In Section \ref{PoissonStokes}, we define an associative algebra
$U_{\hbar}^{(p)}(R)$, whose generators are assembled into triangular matrices $L_1,...,L_{2p}$ and a diagonal matrix $K$, subject to the relations in Theorem \ref{thmRLL}. Its semiclassical limit induces the Poisson bracket on the completed local coordinate ring 
$\widehat{\mathcal O}_{\mathcal M_B^{(p)}}$, at $(s_1,\ldots,s_{2p},d_1)=(1,\ldots,1,0)$. As a consequence of Theorem \ref{thmRLL}, in Section \ref{qRHBmap} we show that
\begin{thm}
    The quantum Stokes matrices define
an algebra homomorphism
\[
  \nu_\hbar(u):U_{\hbar}^{(p)}(R)\longrightarrow \wUph~;~ (L_k)_{ij}\mapsto S_k(u)_{ij}, \ K_a\mapsto e^{2\pi\mathi (D_1)_{aa}}, \qquad k=1,\ldots,2p, \quad a=1,\ldots,n,
\]
which maps the matrix generators $L_k, K$ to $S_k, e^{2\pi\mathi D_1}$. Its reduction modulo $\hbar$ coincides with the pullback induced by the classical analytic RHB map on the completed local rings. That is we have a commutative diagram
\[
\begin{CD}
U_{\hbar}^{(p)}(R)
@>{\nu_\hbar(u)}>>
\wUph
\\
@VV{\rm scl}V
@VV{\rm scl}V
\\
\widehat{\mathcal O}_{\mathcal M_B^{(p)}}
@>{\nu(u)^*}>>
 \widehat{\mathcal O}_{{\mathcal M}_{dR}^{(p)}(u)}.
\end{CD}
\]
\end{thm}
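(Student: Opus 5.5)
The plan is to deduce the statement almost formally from Theorem \ref{thmRLL}, together with the construction of $U_{\hbar}^{(p)}(R)$ in Section \ref{PoissonStokes}. The only genuinely analytic input is that semiclassical limits commute with resummation.

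\textbf{Step 1: well-definedness of $\nu_\hbar(u)$.} By definition, $U_{\hbar}^{(p)}(R)$ is generated by the entries of the triangular matrices $L_1,\dots,L_{2p}$ and the diagonal $K$. Its defining relations are exactly the identities of Theorem \ref{thmRLL}, with $S_k$ replaced by $L_k$ and $e^{2\pi\mathi D_1}$ replaced by $K$. The convention $L_{i+2p}=K^{-1}L_iK$ mirrors $S_{i+2p}=e^{-2\pi\mathi D_1}S_ie^{2\pi\mathi D_1}$.

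To define the homomorphism on the free algebra, I send $(L_k)_{ij}\mapsto S_k(u)_{ij}$ and $K_a\mapsto e^{2\pi\mathi (D_1)_{aa}}$. The triangularity and unipotency constraints built into $U^{(p)}_\hbar(R)$ hold for the $S_k(u)$ by the choice of anti-Stokes ray and ordering made in Theorem \ref{thmRLL}. Every defining relation is a matrix identity in ${\rm End}(\C^n)^{\otimes 2}$ whose entries are polynomial in the generators. Applying the map entrywise therefore turns each relation into the corresponding identity in $\wUph\otimes{\rm End}(\C^n)^{\otimes 2}$, which holds by Theorem \ref{thmRLL}.

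One point needs checking: that $e^{\pm\pi\mathi\hbar\delta P}$ and $e^{2\pi\mathi D_1^{(1)}}$ make sense in the image. They do, because $\delta P$ has scalar entries in $\C[\![\hbar]\!]$ and $D_1$ has entries in $\Uph$. Any relation involving $K^{\pm1}$ also requires $e^{2\pi\mathi (D_1)_{aa}}$ to be invertible, which is clear. So the map descends to $U_{\hbar}^{(p)}(R)$.

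\textbf{Step 2: commutativity of the diagram.} Both composites are continuous algebra maps, so it suffices to check them on generators. Applied to the generators, the composite ${\rm scl}\circ\nu_\hbar(u)$ gives ${\rm scl}(S_k)$ and ${\rm scl}(e^{2\pi\mathi D_1})$. The other composite $\nu(u)^*\circ{\rm scl}$ gives the coordinate functions $s_k$ and $e^{2\pi\mathi d_1}$, pulled back along the RHB map and viewed in $\widehat{\mathcal O}_{{\mathcal M}_{dR}^{(p)}(u)}$.

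The generator identities to establish are therefore ${\rm scl}(S_k)=s_k$ and ${\rm scl}(D_1)=d_1$. For the formal data this is immediate from Theorem \ref{oneformal}. The recursion determining $H_m$ and $D_r$ is algebraic in the entries of $T,T_{[m]}$, and modulo $\hbar$ (after rescaling generators) it becomes the classical recursion for $h_m,d_r$ in the commutative ring. Hence ${\rm scl}(H_m)=h_m$ and ${\rm scl}(D_r)=d_r$.

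\textbf{Step 3: the main obstacle.} The hard step is passing this compatibility through the order-$p$ Borel–Laplace resummation. I would argue as follows.

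First, show that the sectorial solutions $F_k$ are given by Borel–Laplace integrals whose Borel transforms are convergent series in $\zeta$ with coefficients in $\Uph\otimes\End(\C^n)$. The reduction ${\rm scl}$ is continuous for the completion topology on $\wUph$, and it is compatible with products and with the integrals. Hence ${\rm scl}(F_k)$ is a solution of the classical equation \eqref{heq}, with coefficients in $\widehat{\mathcal O}_{{\mathcal M}_{dR}^{(p)}}$, which has the asymptotic expansion ${\rm scl}(\widehat F)=\widehat f$ on the same sector.

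By uniqueness of solutions with prescribed asymptotics on sectors of opening larger than $\pi/p$, ${\rm scl}(F_k)$ equals the Taylor expansion at $0$ of $f_k$. The Stokes matrices are quotients $F_{k+1}^{-1}F_k$, so ${\rm scl}(S_k)=s_k$ follows.

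The delicate points are the continuity of ${\rm scl}$ on $\wUph$ and the fact that analytic dependence of $s_k$ on $(A_p,\dots,B)$ near $0$ matches the expansion in $\widehat{\mathcal O}$. I expect to handle both by uniform estimates on the Borel transform, which I would extract from the construction in Section \ref{q-Stokes}.
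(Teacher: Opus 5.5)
Your proposal is correct and follows essentially the same route as the paper. The homomorphism property is read off from the relations of Theorem \ref{thmRLL}; for $[K_a,K_b]=0$ you additionally need the pairwise commutativity of the diagonal entries of $D_1$ from Theorem \ref{oneformal}. The diagram is then checked on generators: reducing \eqref{Hrecursion} modulo $\hbar$ gives ${\rm scl}(H_m)=h_m$ and ${\rm scl}(D_r)=d_r$, and since the semiclassical limit commutes with order-$p$ Borel--Laplace summation in each finite quotient $\Uph/\mathfrak m^N$, one gets ${\rm scl}(F_d)=f_d$ and hence ${\rm scl}(S_k)=s_k$. Your uniqueness-of-sectorial-asymptotics argument is a slightly more explicit version of the paper's one-line claim that the semiclassical limit is compatible with resummation.
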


Since the semiclassical limit of the associative algebra homomorphism $\nu_\hbar(u)$ yields a Poisson map, it interprets the Poisson geometric nature of the classical RHB map. See Section \ref{PoissonRHB}. In this sense, the algebra homomorphism $\nu_\hbar(u)$, or equivalently the quantum Stokes matrices, is a quantization of the highly transcendental RHB map as a Poisson map.

\subsubsection*{Irregular  Knizhnik-Zamolodchikov equations}
The irregular KZ equation was first introduced in \cite{Re}, and was given a
representation theoretic interpretation in \cite{FR}. Its one variable specialization has the form
\begin{equation}\label{hqeqKZ}
\frac{dF}{dz} = \left(\frac{\hbar E_{[p+1]}}{z^{p+1}}+\frac{\hbar E_{[p]}}{z^p}+\cdots+\frac{\hbar E_{[2]}}{z^2}+\frac{\hbar E_{[1]}}{z}\right)\cdot F,
\end{equation}
where $E_{[m]}$ are the matrix valued generators of the truncated current (Takiff) algebra $
  \frak{gl}_n\llbracket t\rrbracket/
  t^{p+1} \frak{gl}_n\llbracket t\rrbracket.$

The equation \eqref{hqeq} (and the linear PDE system \eqref{kKZ1}-\eqref{kKZ2} with two variables) studied in this paper is related to the irregular KZ equations, but its coefficient algebra comes from the modified loop algebra \cite{RSTS}, see Remark \ref{modiloop}. In other words, the leading irregular term of the equation is fixed to be $u\in\h_{\rm reg}$, and $\Uph$ is a quantization of the Poisson structure on the space ${\mathcal M}_{dR}^{(p)}(u)$ of meromorphic connections with that leading term $u\in\h_{\rm reg}$. In particular, the decoupled relation $[\hbar e_{ij}^{(a)},\hbar e_{kl}]=0$ in \eqref{algebra2} is the quantum counterpart of the decoupled
Poisson bracket on ${\mathcal M}_{dR}^{(p)}(u)$ (arising from Lemma 2.4 in \cite{Boalch2}, or Proposition 14(2) in \cite{Boalch4}).
In some sense, our equation
may be viewed as a regular semisimple normal form model for irregular KZ
equations. 

After choosing a representation in which the leading operator $\hbar E_{[p+1]}$ admits a spectral
decomposition, and specializing $\hbar$ to a nonzero complex value, one may apply the Hukuhara-Levelt-Turrittin reduction to equation 
\eqref{hqeqKZ} to get formal irregular type and then study the Stokes matrices. We leave the study of the Stokes phenomenon in that setting and its relation with the regular semisimple normal form used here to future work.
\vspace{3mm}

In the second order pole case, the Stokes phenomenon and quantum Stokes matrices of the irregular KZ equation were studied in \cite{TL, TLX}. As far as we know, the Stokes phenomenon of higher order irregular KZ
equations has not yet been systematically studied. The present work is an attempt in this direction. Several natural questions remain. One is to construct a quantum isomonodromy system associated with the differential equation \eqref{hqeq}. In the quantum case, the deformation (multitime) parameters are the diagonal entries of $D_r$, $r=p,\cdots,2$ in the irregular type, which are central in $\Uph$ and therefore may be specialized to complex parameters. It is then interesting to ask for the equation governing the variation of the quantum Stokes matrices with respect to these parameters. Another question is to study the WKB approximation of the quantum Stokes matrices and its relation with canonical/crystal basis, generalizing the second order pole results in \cite{Xu1}. 

\section{Formal solutions, canonical holomorphic solutions and quantum Stokes matrices}\label{beginsec1}
Section \ref{beginsec} introduces the coefficient algebra of the differential equation \eqref{hqeq}. Section \ref{formalsolution} constructs the formal solutions. In the end, Section \ref{q-Stokes} defines canonical holomorphic solutions and quantum Stokes matrices of \eqref{hqeq}.

\subsection{Coefficient algebras}\label{beginsec}
In this subsection we fix the coefficient algebra and its completion. 

The coefficient algebra $\Uph$ is the associative
$\mathbb C[\hbar]$-algebra generated by $\hbar e_{ij},\hbar e_{ij}^{(a)}$, $1\le i,j\le n, 1\le a\le p-1,$
subject to the relations \eqref{algebra1} and \eqref{algebra2}. The elements $e_{ij}$ and $e_{ij}^{(a)}$ may be interpreted in the localization $\Uph[\hbar^{-1}]$. However, every coefficient of the differential equations considered in this paper belongs to $\Uph$, and all algebraic arguments in the following sections use the $\hbar$ scaled generators.

The assignments
\[
 \epsilon(\hbar)=\epsilon(\hbar e_{ij})
 =\epsilon(\hbar e_{ij}^{(a)})=0
\]
define an augmentation $\epsilon:\Uph\longrightarrow\mathbb C$. Let
\[
 \mathfrak m:=\ker\epsilon
 =
 (\hbar,\hbar e_{ij},\hbar e_{ij}^{(a)})\subset\Uph
\]
be its augmentation ideal.  Since \(\Uph\) is generated as a
\(\mathbb C\)-algebra by the finite set
$\{\hbar,\hbar e_{ij},\hbar e_{ij}^{(a)}\}\subset\mathfrak m,$ the quotient $\Uph/\frak m^N$ is spanned by the
images of the finitely many words of length less than $N$, and is
therefore finite dimensional over $\mathbb C$.

We define the $\frak m$-adic completion of $\Uph$
\[\wUph:=\varprojlim_N\Uph/\mathfrak m^N.
\]
Then the canonical map 
\[
 \Uph\longrightarrow\wUph
\]
is injective. It is because the defining relations \eqref{algebra1} and \eqref{algebra2} are homogeneous for the positive weights
\[
 {\rm wt}(\hbar)={\rm wt}(\hbar e_{ij})=p,
 \qquad
 {\rm wt}(\hbar e_{ij}^{(a)})=p-a.
\]
If $W_{r}$ denotes the space of the homogeneous elements of weight $r$, then
\[
\bigoplus_{r\geq pN}W_r\subseteq\mathfrak m^N\subseteq \bigoplus_{r\geq N}W_r.
\]
Therefore, we have
$\bigcap_{N\geq1}\frak m^N=0.$ The PBW property
and its semiclassical limit will be
given in Section \ref{PoissonStokes}. 

Let $\mathbb{D}\subset\mathbb C$ be a domain. A function
\[
f:\mathbb{D}\longrightarrow {\wUph}
\]
is called holomorphic if  for every $N\ge 1$, the induced map to the finite dimensional quotient 
\[
f_N:\mathbb{D}\longrightarrow \Uph/\frak m^N
\]
is holomorphic in the usual sense. 
The same convention is used in the paper for functions with values in
\[
\wUph\otimes\End(\C^n),
\qquad
\wUph\otimes
\End(\C^n)\otimes\End(\C^n).
\]

Throughout the paper, we denote by $X_{ij}$ the matrix coefficient of an element $X\in \wUph\otimes\End(\C^n)$ in the matrix basis $E_{ij}$, i.e., we write $X=\sum_{i,j}X_{ij}\otimes E_{i,j}$. We call $X_{ij}$ the off-diagonal entry if $i\ne j$. We denote by $Y_{ijkl}$ the matrix coefficient of an element $Y\in \wUph\otimes\End(\C^n)\otimes\End(\C^n)$ in the matrix basis $E_{ij}\otimes E_{kl}$, i.e., we write $Y=\sum_{i,j,k,l}Y_{ijkl}\otimes E_{i,j}\otimes E_{k,l}$. 

\subsection{The formal solution}\label{formalsolution}
The existence of the following formal normal form of \eqref{hqeq} is a quantum analog of the Hukuhara-Levelt-Turrittin
formal classification theorem for the unramified
regular semisimple case.

\begin{thm}\label{oneformal}
For any fixed $u\in\h_{\rm reg}$, there exists a unique formal power series \[\widehat{H}(z)=1+H_1z+H_2z^{2}+\cdot\cdot\cdot \in \Uph\otimes{\rm End}(\mathbb{C}^n)\llbracket z\rrbracket\] 
and $p$ diagonal matrices  $D_k=\sum_{i=1}^n(D_{k})_{ii}\otimes E_{ii}\in \Uph\otimes{\rm End}(\mathbb{C}^n)$, $k=1,..,p$ with pairwise
commuting diagonal entries
\begin{equation}
    [(D_l)_{ii}, (D_s)_{jj}]=0 \in \Uph, \qquad l,s=1,...,p, \quad i,j=1,...,n,
\end{equation} 
such that
\begin{eqnarray}\label{formalsum}
\widehat{F}(z)=\widehat{H}(z)\cdot  e^{\int D(z)dz}z^{D_1}
\end{eqnarray}
is a formal solution of \eqref{hqeq}. Here the element 
\begin{equation}\label{Dformal}
D(z):=\frac{u}{z^{p+1}}+\frac{D_p}{z^p}+\frac{D_{p-1}}{z^{p-1}}+\cdots +\frac{D_2}{z^2}\in \Uph\otimes{\rm End}(\mathbb{C}^n)[z^{-1}]
\end{equation}
is called the formal irregular type of the equation \eqref{hqeq}, and 
\[e^{\int D(z)}:={\rm exp}{\left(\frac{-u}{pz^p}+\sum_{m=1}^{p-1}\frac{-D_{m+1}}{mz^m}\right)}=e^{\frac{-u}{pz^p}}\prod_{m=1}^{p-1}e^{\frac{-D_{m+1}}{mz^m}} .\]
\end{thm}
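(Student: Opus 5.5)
Substituting $\widehat F=\widehat H\cdot e^{\int D(z)dz}z^{D_1}$ into \eqref{hqeq} turns the problem into a recursion, solved order by order in $z$. Write $A(z)=\frac{u}{z^{p+1}}+\sum_{m=2}^{p}\frac{\hbar T_{[m]}}{z^m}-\frac{\hbar T}{z}$ and $\Lambda(z)=D(z)+D_1/z$. Assume the diagonal entries of all $D_k$ commute pairwise; this is to be verified at the end. Then $\frac{d}{dz}\big(e^{\int D}z^{D_1}\big)=\Lambda(z)e^{\int D}z^{D_1}$, and it suffices to solve
\[
z^{p+1}\frac{d\widehat H}{dz}=z^{p+1}A(z)\widehat H-\widehat H\, z^{p+1}\Lambda(z).
\]
Here the product is in $\Uph\otimes\End(\C^n)$, so the order of the algebra factors matters. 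Comparing coefficients of $z^{k}$ for $k=0,1,\dots$ gives, at order $z^{k+p}$,
\[
[u,H_{k+p}]\;=\;(\text{terms in }H_j,\ j<k+p,\ \text{and } D_1,\dots,D_p) \;-\;\Big(\sum \text{new } D\text{'s}\Big).
\]
More precisely, at orders $z^0,\dots,z^{p}$ the diagonal part of the equation determines $D_{p},D_{p-1},\dots,D_1$ in turn. For example, $D_p=\hbar\,{\rm diag}(T_{[p]})$ and $D_1=-\hbar\,{\rm diag}(T)+(\text{correction terms from }H_j)$. The off-diagonal part determines the off-diagonal entries of $H_m$, since $u$ is a scalar matrix with distinct entries and $\ad_u$ acts invertibly on off-diagonal matrices, dividing the $(i,j)$ entry by $u_i-u_j$.

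The diagonal entries of $H_m$ are not fixed by $\ad_u$. They are determined one step later, from the $z^{m-1+p+1}$-level equation $m(H_m)_{\rm diag}=(\dots)_{\rm diag}$. The coefficient $m$ comes from $z^{p+1}\frac{d}{dz}$ together with the $D_1/z$ term; because $(\widehat H)_{\rm diag}$ multiplies $D_1$ on the right, this step also requires commuting $D_1$ past lower terms, and those contributions are included in the right-hand side. The standard triangular induction, on $m$ with the pair (off-diagonal part of $H_{m}$, diagonal part of $H_{m-1}$, $D$'s), gives existence and uniqueness, with all coefficients in $\Uph$. Only divisions by nonzero scalars $u_i-u_j$ and integers $m$ occur, and every term is built from $\hbar T$ and $\hbar T_{[m]}$. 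Uniqueness follows because each unknown is forced at its stage.

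I expect the main obstacle to be the commutativity claim $[(D_l)_{ii},(D_s)_{jj}]=0$. The recursion above implicitly uses it to write the derivative of the exponential factor as $\Lambda$ times itself, so the argument should be organized consistently. My plan is to first perform a formal gauge transformation $G(z)=1+G_1z+\dots$ in the noncommutative setting, reducing $A$ to a diagonal form $\Lambda(z)+O(z^{\infty})$, where the $z^{-1}$ and more singular parts are diagonal and the regular part is then killed. This gauge step needs no commutativity. Then show that the resulting diagonal connection $\Lambda(z)$ has commuting entries.

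For that second step I would use the relations \eqref{algebra1}–\eqref{algebra2} and a weight/centrality argument. Since $\ad$ of the $\hbar e^{(a)}$ block annihilates $\hbar e_{kl}$, one can check inductively that $D_p,\dots,D_2$ involve only the $e^{(a)}$ generators. Moreover, the $D_r$ for $r\ge2$ should be central, as claimed in the introduction. I would try to prove this by showing that the $D_r$ are invariants of the residual gauge symmetry: they are formal invariants, and conjugating the equation by $e^{\hbar\,\ad X}$ for a generator $X$ yields an equivalent equation with the same $u$. Failing that, I would verify directly that $[\hbar e^{(a)}_{kl},(D_r)_{ii}]=0$ by induction on $r$, using \eqref{algebra1}.

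Finally, $D_1=-\hbar\,{\rm diag}(T)+(\text{central terms})$. Its entries commute with each other because the diagonal $\hbar e_{ii}$ commute under \eqref{algebra2}. They commute with the $D_r$, $r\ge2$, by \eqref{algebra2} and centrality. This closes the argument.
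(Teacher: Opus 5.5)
Your fallback plan (gauge $A$ to diagonal form, show $D_2,\dots,D_p$ are central via a residual gauge symmetry, then treat $D_1$ separately) has the same shape as the paper's first proof. However, the two steps that carry the argument do not go through as stated.

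\textbf{First: the recursion is not triangular for $p>1$.} The equation that fixes $(H_m)_{\rm diag}$ is the diagonal part at order $z^{m+p}$. It contains $\hbar T_{[k]}H_{m-1+k}-H_{m-1+k}D_k$ for $k=2,\dots,p$, i.e.\ the later coefficients $H_{m+1},\dots,H_{m+p-1}$. Their off-diagonal parts themselves depend on $(H_m)_{\rm diag}$: for $p=2$, the order-$z^{m+1}$ equation puts $-\hbar e^{(1)}_{li}(H_m)_{ii}/(u_l-u_i)$ into $(H_{m+1})_{li}$. For $p\ge3$, later diagonal entries enter as well, e.g.\ through $\hbar(T_{[2]})_{ii}(H_{m+1})_{ii}-(H_{m+1})_{ii}(D_2)_{ii}$, which does not cancel. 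Even for $p=2$, collecting all occurrences of the unknown gives $\bigl(m-\ad_{(D_1)_{ii}}\bigr)(H_m)_{ii}=(\text{known})$. Here $\ad_{(D_1)_{ii}}$ acts on monomials in the generators with eigenvalues in $\hbar\mathbb Z$, so solving requires dividing by $m+\hbar w$. That is impossible in $\Uph$ unless you already know the right side has weight zero.

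So the claims ``only divisions by $u_i-u_j$ and integers'' and ``each unknown is forced at its stage'' are unjustified; the pairing you propose is the $p=1$ pattern. The paper avoids this in one of two ways:
\begin{itemize}
\item It takes the gauge $\widehat H^o$ purely off-diagonal. Then the diagonal part of every equation simply defines some $D_m$, for all $m\le p$, including a regular part $N(z)=\sum_{k\ge0}D_{-k}z^k$. It then sets $\widehat H=\widehat H^oW$ with $W'=NW$.
\item Alternatively, it solves the coupled recursion by induction on the $\mathfrak m$-adic order, using that every later coefficient is multiplied by an element of $\mathfrak m$. This gives existence and uniqueness, but only in $\wUph$.
\end{itemize}

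\textbf{Second: commutativity of the $D$'s.} Killing the regular part in $\Uph$ is exactly where commutativity is needed. One has $W[\Lambda+N]=W^{-1}\Lambda W$, which equals $\Lambda$ only if the entries of the $D_{-k}$ commute with those of $D_1,\dots,D_p$; your plan never addresses this.

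Your closing argument also rests on a false claim: $D_1+\hbar\,{\rm diag}(T)$ is not central. For $p=2$ one computes
\[
(D_1)_{kk}=-\hbar e_{kk}-\sum_{j\ne k}\frac{\hbar e^{(1)}_{kj}\,\hbar e^{(1)}_{jk}}{u_j-u_k},
\]
and $[(D_1)_{kk},\hbar e^{(1)}_{kl}]=-\hbar\cdot\hbar e^{(1)}_{kl}\neq0$ for $l\ne k$, as Lemma \ref{Dcommute} predicts. Hence $[(D_1)_{kk},(D_1)_{jj}]=0$ does not follow from $[\hbar e_{kk},\hbar e_{jj}]=0$.

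Two ingredients are missing.
\begin{itemize}
\item The identity $[\hbar e^{(a)}_{ij}\otimes 1,T(z)]=[\hbar E_{ji}z^a,T(z)]+R(z)$, where $R$ has at most a simple pole with residue $-[\hbar E_{ji},\hbar T_{[a+1]}]$. This is the precise form of your residual-gauge heuristic. Apply $\ad_{\hbar e^{(a)}_{ij}\otimes1}$ to $\widehat H^o[T(z)]=\Lambda+N$ and compare $z^{-m}$ coefficients. This gives centrality of $D_p,\dots,D_2$ by descending induction, and also the nonzero commutators of $D_1$ with the generators.
\item The weight argument. Feeding these commutators into the off-diagonal recursion gives $[(D_1)_{kk},(H^o_m)_{ij}]=-\hbar(\delta_{ki}-\delta_{jk})(H^o_m)_{ij}$. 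So every $(D_m)_{ii}$, $m\le p$, has weight zero and commutes with $(D_1)_{kk}$. This one fact yields both $[(D_1)_{kk},(D_1)_{ii}]=0$ and the commutation of $W$ with $\Lambda$ needed to remove $N(z)$.
\end{itemize}
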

We give two proofs. The first one shows
that every entry of the $H_m$ and $D_k$ is a finite algebraic expression,
hence lies in the algebra $\Uph$. The second proof solves the recursion relation modulo $\frak m^N$ and
then passes to the inverse limit, producing the (same) formal solution in the
completed algebra $\wUph$. Although the result of the second proof looks weaker, it is enough for our purpose. The finite quotient
construction will also be used for the formal solutions in Section \ref{SPDE} and the Borel–Laplace summation.

\begin{proof}[First proof of the existence]
First, let us prove that there exists an off-diagonal matrix formal power series 
\[\widehat{H}^{o}(z)=1+\sum_{i\ge 1}H^{o}_i z^i\in \Uph\otimes{\rm End}(\mathbb{C}^n)\llbracket z\rrbracket,\] 
(here off-diagonal means that if we write $H_m^{o}=\sum_{i,j=1}^nH_{m,ij}\otimes E_{ij}$, then $H^{o}_{m,ii}=0$ for $i=1,\ldots, n$) that diagonalizes \begin{equation}\label{Tz}
    T(z)=\frac{u}{z^{p+1}}+\sum_{j=2}^p\frac{\hbar T_{[j]}}{z^j}-\frac{\hbar T}{z}
\end{equation}
under the gauge transformation
\[\widehat{H}^{o}(z)[T(z)]:=\widehat{H}^{o}(z)^{-1}\cdot T(z) \cdot \widehat{H}^{o}(z)-\widehat{H}^{o}(z)^{-1}\cdot\frac{d\widehat{H}^{o}}{dz}.\] That is there exist diagonal matrices $D_k\in \Uph\otimes{\rm End}(\mathbb{C}^n)$ for $k\le p$ such that
\begin{equation}
\widehat{H}^{o}(z)[T(z)]=\frac{u}{z^{p+1}}+\frac{D_p}{z^p}+\cdots+\frac{D_1}{z}+D_{0}+D_{-1}z^1+\cdots.
\end{equation}

Plugging $\widehat{H}^{o}(z)=1+\sum_{k\ge 1}H^{o}_k z^k$ into the above equation gives (here $H^{o}_0=1$ and by convention $H^{o}_m=0=T_{[m+1]}$ if $m<0$ and $T_{[1]}:=-T$)
\begin{align}
[u,H^{o}_{m+1}]=(m-p+1)H^{o}_{m-p+1}+\sum_{l=p-m}^{p} H^{o}_{m-p+l}D_l-\sum_{l=p-m}^p \hbar T_{[l]}H^{o}_{m-p+l}.
\end{align}
On the one hand, 
since $H^{o}_{m+1}$ is off-diagonal and $u_1,\ldots,u_n$ are distinct, it is uniquely determined by the recursion relations. On the other hand, the left hand side is off-diagonal, which forces the  diagonal matrices $D_l$ to satisfy 
\begin{align}
D_{p-m}&={\rm diag}\left(\sum_{l=2}^p \hbar T_{[l]} H^{o}_{m+l-p}-\hbar TH^{o}_{m+1-p}-\sum_{l=p-m+1}^p H^{o}_{m+l-p}D_l\right).
\end{align}
In this way, the above recurrence relations uniquely determine $D_m$ for all $m\le p$ and the off-diagonal matrices $H^{o}_m$ for $m\ge 1$.

Second, by using Lemma \ref{Dcommute} and by applying the above recurrence relations repeatedly, we obtain for $i,j,k=1,...,n$,
\begin{equation}
[(D_1)_{kk}, (H^{o}_m)_{ij}]=-\hbar(\delta_{ki} (H^{o}_m)_{ij}-\delta_{jk}(H^{o}_m)_{ij}),
\end{equation}
and then 
\begin{equation}\label{1commutem}
[(D_1)_{kk}, (D_m)_{ii}]=0 \text{ for  } m=p,...,1,0,-1,....
\end{equation}

To complete the proof, let $W(z)=1+\sum_{k=1} w_k z^k\in \Uph\otimes{\rm End}(\mathbb{C}^n)\llbracket z\rrbracket$ be the diagonal matrix formal power series with the coefficients $w_k$ determined by ($w_0:=1)$ 
\begin{equation}
(k+1)w_{k+1}=\sum_{i=0}^{k}D_{i-k}\cdot w_i.
\end{equation}
That is the unique normalized formal solution of $W'(z)=N(z)W(z)$ with $N(z)=\sum_{k\ge 0}D_{-k}z^k$. Set $\widehat{H}(z):=\widehat{H}^{o}(z)\cdot W(z)$. Then we have
\[\widehat{H}(z)[T(z)]=W^{-1}\left(D(z)+\frac{D_1}{z}+N(z)\right)W-W^{-1}W'=W^{-1}\left(D(z)+\frac{D_1}{z}\right)W=D(z)+\frac{D_1}{z}.\]
Here the last identity holds because the entries $(w_k)_{ii}$ of the diagonal matrix $w_k$ are polynomials in the entries $(D_{-r})_{ii}$, which commute with $(D_1)_{ii}$ by \eqref{1commutem} and with $(D_l)_{ii}$ for $l=2,\ldots,p$ by Lemma \ref{Dcommute} (hence $[w_k, D(z)+\frac{D_1}{z}]=0$ for each $k$ and thus $[W, D(z)+\frac{D_1}{z}]=0$). As a consequence, 
$\widehat{F}(z):=\widehat{H}(z)\cdot  e^{\int D(z)}z^{D_1}$
is the formal solution as required in Theorem \ref{oneformal}.
\end{proof}

\begin{proof}[Second proof of the existence and the uniqueness] Assume that $\widehat{H}(z)=1+\sum_{k>0}H_kz^k$ is the required formal power series.
Set $T_{[1]}=-T$, $H_0=1$ and $H_k=0$ for $k<0$. Comparing the coefficient of
$z^{m-p}$ in of \eqref{hqeq} gives
\begin{equation}\label{Hrecursion}
  [u,H_{m+1}]
  =
  (m-p+1)H_{m-p+1}
  +
  \sum_{k=1}^{p}
  \left(
    H_{m-p+k}D_k
    -
    \hbar T_{[k]}H_{m-p+k}
  \right).
\end{equation}
We solve the equation \eqref{Hrecursion} by induction on the $\frak m$-adic order. 
First, modulo $\frak m$, we see that
$H^{[1]}_0=1$ and $H^{[1]}_m=0$ for $m\ge 1$ are the unique solution of \eqref{Hrecursion}.
Suppose that the solution is known modulo $\frak m^{N-1}$ at every recursion step, and has already been determined modulo $\frak m^{N}$ through step $b\le m$. Let us then solve $H_{m+1}^{[N]}$.

We write $H_m=\sum_{i,j}(H_m)_{ij}\otimes E_{ij}$. Since $[u,H_m]_{ij}=(u_i-u_j)(H_m)_{ij}$, the operator $\ad_u$ is invertible on off-diagonal entries.
Modulo $\frak m^N$, taking the off-diagonal part of \eqref{Hrecursion} determines
the off-diagonal part of $H_{m+1}^{[N]}$
uniquely.  For
$0 \le m\le p-1$, the diagonal part of the same equation determines $D_{p-m}^{[N]}$, successively from $D_p$ down
to $D_1$.

It remains to determine the diagonal part of $H_{m+1}^{[N]}$. Taking the diagonal part of the recursion relation $p$ step later (replacing $m$ by $m+p$ in \eqref{Hrecursion}) gives
\begin{equation}\label{HrecursionN}
  0={\rm diag} ((m+1)H_{m+1}^{[N]})+
  {\rm diag} 
  \sum_{k=1}^{p}
  \left(
    (H_{m+k}D_k)^{[N]}
    -
    (\hbar T_{[k]}H_{m+k})^{[N]}
  \right).
\end{equation}
Note that 
since entries of $\hbar T_{[k]}$ lie in $\frak m$, and the $D_k$
are obtained from these entries, one
gets $D_k\in\frak m $ for all $k$. Thus in \eqref{HrecursionN}, every term $(H_{m+k}D_k)^{[N]}$, $(\hbar T_{[k]}H_{m+k})^{[N]}$ in the sum carries a factor in $\frak m$, and depends only on the already known
$H_{m+k}^{[N-1]}$. Therefore, ${\rm diag}(H_{m+1}^{[N]})$ is uniquely determined by \eqref{HrecursionN}.

In the end, the projection
\[\pi_N^{N+1}:(\Uph/\mathfrak m^{N+1})
\otimes\End(\mathbb C^n)\longrightarrow (\Uph/\mathfrak m^N)
\otimes\End(\mathbb C^n)
\]
maps $H_m^{[N+1]}$ to a solution of the recursion with the
same initial condition. Uniqueness above therefore ensures $\pi_N^{N+1}(H_{m}^{[N+1]})=H_{m}^{[N]}.$ Passing to the inverse limit gives the unique solution $H_m$ in
$H_m\in \wUph
\otimes\End(\mathbb C^n)$. 
\end{proof}
\begin{rmk}
In the construction of solutions in Lemma \ref{YWFfactorization} and Proposition \ref{t1fact}, we shall repeatedly use the above double induction on the recursion step $m$ and the $\frak m$-adic order $N$. The key point is that every undetermined coefficient occurring in a $p$ step delayed equation is multiplied by an element of $\frak m$.
Consequently, modulo $\frak m^N$, it depends only on its already
known class modulo $\mathfrak m^{N-1}$. Uniqueness at every finite
$\frak m$-adic order then implies compatibility under the projections $\Uph/\frak m^{N+1}\to \Uph/\frak m^{N}$ such that
passing to the inverse limit gives a unique solution in $\wUph$.
\end{rmk}
\begin{rmk}
From the first proof we see that the coefficients $H_k$ of the power series part of the formal solution are finite algebraic
expressions for each fixed $k$, i.e., 
\[
  H_k\in
  \Uph\otimes\operatorname{End}(\mathbb C^n).
\]
Thus
\[
  \widehat H(z)=1+\sum_{k\ge1}H_kz^k
  \in
  \left(
    \Uph\otimes\operatorname{End}(\mathbb C^n)
  \right)\llbracket z\rrbracket.
\]
The completion of the coefficient algebra is needed only when one passes from
the formal series $\widehat{H}$ to its order-$p$ Borel-Laplace resummation. 
The full formal solution is written in the form
\[
  \widehat F(z)
  =
  \widehat H(z)\cdot
  e^{\int D(z)dz}z^{D_1},
\]
where the second factor is regarded as the diagonal singular factor. It is understood in the differential extension obtained by adjoining the corresponding exponential and logarithmic factors. After resummation it becomes a multivalued holomorphic function with values in $\wUph \otimes
\operatorname{End}(\mathbb C^n)$. 
\end{rmk}

\begin{lem}\label{Dcommute}
The diagonal elements of the matrices $D_p,...,D_2$ lie in the center of the associative algebra $\Uph$. That is
\begin{align}\label{Dcenter2}
[(D_m)_{kk}, (\hbar T)_{ij}]=0, \quad
[(D_m)_{kk}, (\hbar T_{[l]})_{ij}]&=0  \ \text{ for all } l,m=2,...,p, \text{ and } \ i, j, k=1,...,n.
\end{align}
The diagonal elements of the matrix $D_1$ satisfy
\begin{align}\label{D1T}
[(D_1)_{kk}, (\hbar T)_{ij}]&=-\hbar\left(\delta_{ki} (\hbar T)_{kj}-\delta_{jk} (\hbar T)_{ik}\right),\\ \label{D1T1}
[(D_1)_{kk}, (\hbar T_{[l]})_{ij}]&=-\hbar(\delta_{ki} (\hbar T_{[l]})_{kj}-\delta_{jk} (\hbar T_{[l]})_{ik}) \ \text{ for } l=2,...,p, \text{ and } \ i, j, k=1,...,n,
\end{align}

\end{lem}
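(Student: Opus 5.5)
The plan is to prove both statements together by induction along the recursion from the first proof of Theorem \ref{oneformal}. The first step is to determine the commutator of each $(D_m)_{kk}$ with the generators and with the entries of $H^{o}_r$, for $r$ up to the relevant range. The base case is $D_p$. From the recursion, $D_p={\rm diag}(\hbar T_{[p]})$, so $(D_p)_{kk}=\hbar e^{(p-1)}_{kk}$.
\begin{itemize}
\item By \eqref{algebra2}, $[\hbar e^{(p-1)}_{kk},\hbar e_{ij}]=0$.
\item By \eqref{algebra1}, $[\hbar e^{(p-1)}_{kk},\hbar e^{(a)}_{ij}]$ vanishes for $a\ge 2$. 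For $a=1$ it equals $\hbar\delta_{ki}\delta_{kj}(u_k-u_k)=0$.
\end{itemize}
Hence $(D_p)_{kk}$ is central. I also want the complementary fact that for $a+b=p$,
\[
[(\hbar T_{[a+1]})_{kk},(\hbar T_{[b+1]})_{ij}]=\hbar\delta_{ki}\delta_{kj}\cdot 0,
\]
while $[(\hbar T_{[p]})_{ki},\,\cdot\,]$ produces the factors $u_k-u_i$. These factors are exactly what $\ad_u^{-1}$ in the recursion for $H^{o}_{m+1}$ cancels.

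To organise the induction I would work with the matrix generating function
\[
A(z)=z^{p+1}T(z)=u+\sum_{l=2}^p\hbar T_{[l]}z^{p+1-l}-\hbar T z^p .
\]
I would rewrite \eqref{algebra1}--\eqref{algebra2} as an $r$-matrix-type identity in $\wUph\otimes\End(\C^n)\otimes\End(\C^n)\llbracket z,w\rrbracket$, truncated modulo $z^{p}$ in the $e^{(a)}$-part. This is the quantum form of the modified loop algebra bracket of Remark \ref{modiloop}. The induction statement for step $m$ is then: for $0\le m\le p-1$ the entry $(D_{p-m})_{kk}$, and for $r\le m$ the entries $(H^{o}_{r})_{ij}$, satisfy the commutation rules stated in the lemma. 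The required rule is the lemma itself for the $D$'s. For the $H^{o}_r$ it is centrality when the index is at least $2$, and the $\ad$-Cartan rule $[(D_1)_{kk},(H^o_r)_{ij}]=-\hbar(\delta_{ki}-\delta_{kj})(H^o_r)_{ij}$ when the index is $1$.

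At each step I would commute $(D_{p-m})_{kk}$ through the right-hand sides of the recursion formulas for $H^{o}_{m+1}$ and $D_{p-m}$. The Leibniz rule, the induction hypothesis and the invertibility of $\ad_u$ on off-diagonal entries then reduce everything to an identity in the generators.

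The cleanest way to organise the cancellations is to prove a statement about the gauge-transformed connection $\widehat H^{o}[T(z)]$. I expect the relevant fact to be that conjugation by $\widehat H^{o}$ intertwines the commutator $[\,\cdot\,,A(w)]$ with its diagonal truncation. In particular, the diagonal coefficients of $z^{-p},\dots,z^{-2}$ then become Casimir-type elements, because the modified loop bracket pairs $z^{a}$ with $z^{-a}$ only through the central term $u$. The coefficient of $z^{-1}$ instead pairs with the $e_{ij}$-part and reproduces the $\mathfrak{gl}_n$ Cartan action. Concretely, $(D_1)_{kk}=-\hbar e_{kk}+(\text{corrections built from }e^{(a)})$. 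The leading term gives \eqref{D1T} by \eqref{algebra2}. For \eqref{D1T1}, the Cartan action on $T_{[l]}$ must come entirely from the corrections: the leading term commutes with $e^{(a)}$, yet the required action on $T_{[l]}$ is nonzero. For $p=2$, for example, the correction is $\sum_{j}\hbar^2 e^{(1)}_{kj}e^{(1)}_{jk}/(u_k-u_j)$. Its commutator with $e^{(1)}_{ij}$ yields precisely $-\hbar(\delta_{ki}-\delta_{kj})\hbar e^{(1)}_{ij}$, via the $a+b=p$ term $\hbar\delta\delta(u_i-u_j)$ divided by $u_k-u_j$.

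I expect the main obstacle to be this general-$p$ cancellation for $D_1$ against the $T_{[l]}$, and likewise the vanishing for $D_m$ with $m\ge 2$. These involve a telescoping of the central terms $(u_i-u_j)$ against the denominators coming from $\ad_u^{-1}$, across all nested levels of the recursion. I would first carry out the case $p=2$ completely as a template, then verify the case $p=3$ by hand. After that I would try to phrase the general induction through the generating-function identity, so that the telescoping becomes a single residue computation in $z$.
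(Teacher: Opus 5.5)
Your base case (centrality of $(D_p)_{kk}=\hbar e^{(p-1)}_{kk}$) and your $p=2$ computation of \eqref{D1T1} are correct. For $p\ge 3$, however, the real content of the lemma is what you leave as ``I expect'': centrality of $D_{p-1},\dots,D_2$, and the Cartan action of $D_1$ on the $T_{[l]}$. So as it stands this is a plan, not a proof.

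The induction you set up also does not close. Suppose you expand $(D_{p-m})_{kk}$ through the recursion and apply $\ad_{\hbar e^{(a)}_{ij}}$ with the Leibniz rule. You then need the commutators $[\hbar e^{(a)}_{ij},(H^{o}_r)_{st}]$ of the \emph{generators} with the entries of $H^{o}_r$. These are nested expressions in $\ad_u^{-1}$, and no rule you propose controls them. Your inductive hypothesis records only how the $D$'s commute with the $H^{o}_r$, which does not help:
\begin{itemize}
\item for $m\ge 2$ that statement is a trivial consequence of centrality;
\item the rule $[(D_1)_{kk},(H^o_r)_{ij}]=-\hbar(\delta_{ki}-\delta_{kj})(H^o_r)_{ij}$ is a consequence of the lemma, not an input. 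The paper derives it from the lemma in the first proof of Theorem~\ref{oneformal}, and $D_1$ only appears at the last step of the recursion.
\end{itemize}

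The missing idea is to apply the derivation $\ad_{\hbar e^{(a)}_{ij}\otimes 1}$ to the whole gauge identity
\[
(\widehat H^{o})^{-1}T(z)\widehat H^{o}-(\widehat H^{o})^{-1}\partial_z\widehat H^{o}=uz^{-p-1}+\sum_{m\le p}D_mz^{-m}.
\]
The relations \eqref{algebra1}--\eqref{algebra2} say exactly that
\[
[\hbar e^{(a)}_{ij}\otimes 1,T(z)]=[\hbar E_{ji}z^a,T(z)]+R(z),\qquad {\rm Res}_{z=0}R=-[\hbar E_{ji},\hbar T_{[a+1]}],
\]
where $R$ has at most a simple pole. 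The central term $\hbar\delta_{jk}\delta_{il}(u_i-u_j)$ is what produces $[\hbar E_{ji}z^a,uz^{-p-1}]$.

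Consequently, the variation of the normal form is an infinitesimal gauge transformation $[L,\cdot\,]+\partial_zL$, plus a term with at most a simple pole and the same residue as $R$. Here
\[
L=(\widehat H^{o})^{-1}\hbar E_{ji}z^a\widehat H^{o}-(\widehat H^{o})^{-1}[\hbar e^{(a)}_{ij}\otimes1,\widehat H^{o}]=O(z).
\]
All the generator--$H^{o}$ commutators you would otherwise have to compute are absorbed into $L$ and never need to be evaluated.

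Because the normal form is diagonal, the $(k,k)$ entry of $[L,\cdot\,]$ is $[L_{kk},\cdot\,]$. Because $u$ is scalar, comparing coefficients of $z^{-m}$ for $m\ge1$ gives
\[
[\hbar e^{(a)}_{ij},(D_m)_{kk}]=\sum_{r=1}^{p-m}[(L_r)_{kk},(D_{m+r})_{kk}]+\delta_{m,1}({\rm Res}_{z=0}R)_{kk}.
\]
The lemma then follows in three steps.
\begin{itemize}
\item Descending induction from $m=p$ shows that $(D_m)_{kk}$, $m\ge 2$, commutes with every $\hbar e^{(a)}_{ij}$. It also commutes with the $\hbar e_{ij}$, since $D_p,\dots,D_2$ are built only from the $\hbar e^{(a)}$.
\item The case $m=1$ gives \eqref{D1T1}, read off directly from the residue.
\item \eqref{D1T} follows from $(D_1)_{kk}=-\hbar T_{kk}+(\text{terms in the }\hbar e^{(a)})$.
\end{itemize}
Your remark that conjugation by $\widehat H^{o}$ should intertwine the commutator with its diagonal truncation points this way. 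Without this identity for $[\hbar e^{(a)}_{ij}\otimes1,T(z)]$ and the gauge-variation formula it yields, though, there is no argument beyond $p=2$.
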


\begin{proof}
Recall that the off-diagonal formal gauge transformation $\widehat{H}^0$, constructed in the proof of
Theorem \ref{oneformal} (its construction does
not use the present lemma), diagonalizes the coefficient matrix $T(z)$, i.e.,  
\begin{equation}\label{HdiagT}(\widehat{H}^o)^{-1}\cdot T(z)\cdot \widehat{H}^o-(\widehat{H}^o)^{-1}\cdot \frac{d\widehat{H}^o}{dz}=
\frac{u}{z^{p+1}}+\sum_{m\leq p}D_m z^{-m}.
\end{equation}
We compute the entrywise action of the generators $\hbar e^{(a)}_{ij}$ on the matrix identity \eqref{HdiagT}. The entrywise action can be expressed via the commutator, for example, $[\hbar e^{(a)}_{ij}\otimes 1, T(z)]=\sum_{k,l}[\hbar e^{(a)}_{ij},T(z)_{kl}]\otimes E_{kl}$.
The defining relations of the coefficient algebra give
\begin{equation}\label{xT}
[\hbar e^{(a)}_{ij}\otimes 1, T(z)]
=
[\hbar E_{ji}z^a,T(z)]+R_{\hbar e^{(a)}_{ij}}(z),
\end{equation}
where $R_{\hbar e^{(a)}_{ij}}(z)$ has at most a simple pole at $z=0$, and
\begin{equation}\label{resT}
    {\rm Res}_{z=0}R_{\hbar e^{(a)}_{ij}}(z)
=
-[\hbar E_{ji},\hbar T_{[a+1]}].
\end{equation}
Taking the commutator of ${\hbar e^{(a)}_{ij}}\otimes 1$ with both sides of the identity \eqref{HdiagT} and using the Leibniz rule, formula \eqref{xT}, we get
\begin{equation}\label{XD=XH}
\left[{\hbar e^{(a)}_{ij}}\otimes 1, \frac{u}{z^{p+1}}+\sum_{m\leq p}D_m z^{-m}\right]=\left[L,\frac{u}{z^{p+1}}+\sum_{m\leq p}D_m z^{-m}\right]+\frac{dL}{dz}+M,
\end{equation}
where
\[
L:=(\widehat{H}^o)^{-1}(\hbar E_{ji}z^a)\widehat{H}^o-(\widehat{H}^o)^{-1}\cdot [{\hbar e^{(a)}_{ij}}\otimes 1,\widehat{H}^o],
\quad M:=(\widehat{H}^o)^{-1}\cdot \left(R_{\hbar e^{(a)}_{ij}}(z)-\frac{d}{dz}(\hbar E_{ji}z^a)\right)\cdot \widehat{H}^o.
\]
Since $d(E_{ji}z^a)/dz=O(1)$ and $\widehat{H}^o=1+O(z)$, we have $L=\sum_{r\ge 1}L_rz^r=O(z)$ and $M$ has at most a simple
pole with
\[
{\rm Res}_{z=0}M={\rm Res}_{z=0}R_{\hbar e^{(a)}_{ij}}.
\]

Taking the $(k,k)$-entry of the identity \eqref{XD=XH} and comparing the coefficient of $z^{-m}$,
$m\geq 1$, gives
\begin{equation}\label{commidentity}
[{\hbar e^{(a)}_{ij}},(D_{m})_{kk}]
=
\sum_{\substack{1\leq r\leq p-m}}
\bigl[(L_r)_{kk},(D_{m+r})_{kk}\bigr]
+
\delta_{m,1}({\rm Res}_{z=0}R_{\hbar e^{(a)}_{ij}})_{kk}.
\end{equation}

For $m=p$, the sum on the right hand side is empty, and thus 
$[\hbar e^{(a)}_{ij},(D_p)_{kk}]=0$ for $k=1,\dots,n$. We can then apply descending induction
on $m$ to show that $(D_m)_{kk}$ commutes with
every generator $\hbar e^{(a)}_{ij}$
for $2\le m\le p$. Note that the construction of $D_p,\ldots, D_2$ involves only the
generators $\hbar e^{(a)}_{ij}$, $a=1,\ldots,p-1$, which commute with the residue generators $\hbar e_{ij}$ by \eqref{algebra2}. Thus the entries of $D_m$, $m=p,\ldots,2$, are central. 

Letting $m=1$ in \eqref{commidentity} and using \eqref{resT} give the identity \eqref{D1T1}. In the end, notice that by the construction of $D_1$
\[(D_1)_{kk}=-\hbar T_{kk}+\text{terms in the generators ${\hbar e^{(a)}_{ij}}$ for $a=1,\ldots,p-1$.}\] 
Then the defining relation \eqref{algebra2} leads to \eqref{D1T}.
\end{proof}
\begin{ex}
For $p=3$, the formal type is $
D(z)=\frac{u}{z^{4}}+\frac{D_3}{z^3}+\frac{D_{2}}{z^{2}}$, and the residue part is $\frac{D_1}{z}$ with
\begin{align*}
(D_3)_{ii}&=\hbar ( T_{[3]})_{ii}, \ \ \ (D_2)_{ii}=\hbar (T_{[2]})_{ii}+\hbar^2\sum_{j=1,j\ne i}^n\frac{(T_{[3]})_{ij}(T_{[3]})_{ji}}{u_i-u_j}\\ 
(D_1)_{kk}=&-\hbar T_{kk}+\hbar^2\sum_{i=1,i\ne k}^n\frac{(T_{[2]})_{ki}(T_{[3]})_{ik}}{u_k-u_i}-\hbar^2\sum_{i=1,i\ne k}^n\frac{(T_{[3]})_{ki}(T_{[2]})_{ik}}{u_i-u_k}\\
&+\hbar^3\sum_{i=1,i\ne k}^n\frac{(T_{[3]})_{ki}(T_{[3]})_{ik}(T_{[3]})_{kk}}{(u_i-u_k)(u_k-u_i)}-\hbar^3\sum_{i=1,i\ne k}^n\sum_{j=1,j\ne k}^n\frac{(T_{[3]})_{ki}(T_{[3]})_{ij}(T_{[3]})_{jk}}{(u_i-u_k)(u_k-u_j)}.
\end{align*}
\end{ex}

\subsection{Canonical holomorphic solutions and quantum Stokes matrices}\label{q-Stokes}

Note that the
leading term in the exponential factor $e^{\int D(z)dz}$ of the equation \eqref{hqeq} is $e^{\frac{-u}{pz^p}}$.

\begin{defi}\label{Stokesrays}
The \textbf{anti-Stokes rays} of the equation \eqref{hqeq} are the directions along which ${\rm exp}(\frac{u_i-u_j}{pz^p})$ decays most rapidly as $z\rightarrow 0$ for some $i\ne j$, i.e., the directions along which $\frac{u_i-u_j}{pz^p}$ is real and negative. Denote by
\begin{align*}
        {\rm aSR}(u) &:=\left\{\frac{1}{p}\mathrm{Arg}(u_j-u_i)+\frac{2k\pi}{p}~|~ \text{ for all }  k\in \mathbb{Z}, \text{ and } i\neq j\right\}
    \end{align*}
the set of anti-Stokes rays/directions.
\end{defi}

In this paper we denote a direction/ray by its argument, understood as the lifted argument on the universal cover $\widetilde{\mathbb{C}^\times}$ of $\mathbb{C}^\times=\mathbb{C}\setminus\{0\}$. Let us choose an initial anti-Stokes direction $\tau_0\in\mathbb{R}$ and then
arrange the anti-Stokes directions into a strictly monotonically increasing sequence 
\begin{align*}
        \cdots <\tau_{-1}<\tau_0<\tau_1<\cdots.
    \end{align*}

\begin{defi}
For any direction $d\in (\tau_j,\tau_{j+1})$, the \textbf{Stokes sector} $\Sect_d$ is defined as 
    \begin{align*}
        \Sect_{d}:=\left\{z\in\widetilde{\mathbb{C}^\times} : \mathrm{arg}(z)\in\bigg(\tau_j-\frac{\pi}{2p},\tau_{j+1}+\frac{\pi}{2p}\bigg)\right\}.
    \end{align*}
    \end{defi}

\begin{pro}
\label{CanSol}
Given any fixed $u\in\h_{\rm reg}$ and a direction $d\notin{\rm aSR}(u)$, there exists a unique $\wUph\otimes{\rm End}(\mathbb{C}^n)$ valued holomorphic solution ${F}_d(z)$
on ${\Sect}_d$ of the equation \eqref{hqeq} with the asymptotics 
\begin{equation}\label{preasy}
    {F}_d(z) \cdot z^{-D_1}e^{-\int D(z)}\sim 1, \ \text{ as } z\rightarrow 0 \text{ in } {\Sect}_d.
\end{equation}
It is the canonical sectorial solution associated with $d$, and admits analytic continuation to
$\widetilde{\mathbb{C}\setminus\{0\}}$. 
\end{pro}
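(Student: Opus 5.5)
We work modulo $\mathfrak m^N$ for each $N$ and then pass to the inverse limit, exactly as in the second proof of Theorem \ref{oneformal}. Fix $N$. The quotient $\Uph/\mathfrak m^N$ is a finite dimensional $\mathbb C$-algebra, and left multiplication by its elements gives a faithful representation $\rho_N$ on $V_N:=\Uph/\mathfrak m^N$. The projection of \eqref{hqeq} is then an ordinary linear system $dF/dz=T^{[N]}(z)F$ of finite size. Its coefficients are $n\times n$ block matrices with entries in $\End(V_N)$. The leading term is $u\otimes 1$, and all lower coefficients $\hbar T_{[m]},\hbar T$ act nilpotently, since they lie in $\mathfrak m$. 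The formal solution of Theorem \ref{oneformal}, reduced mod $\mathfrak m^N$, gives a formal solution $\widehat H^{[N]}(z)e^{\int D^{[N]}(z)}z^{D_1^{[N]}}$ of this finite system. Here the $D_k^{[N]}$ are commuting diagonal blocks with $D_k^{[N]}\equiv 0$ modulo nilpotents. So $e^{\int D^{[N]}}$ is $e^{-u/(pz^p)}$ times a \emph{polynomial} in $z^{-1}$, and $z^{D_1^{[N]}}$ is $z^{\text{scalar}}$ times a polynomial in $\log z$.

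\textbf{Existence.} For the finite system we apply the classical multisummability theorem of Ramis--Balser. Since the exponential factors $e^{(u_i-u_j)/(pz^p)}$ have the single level $p$ (the nilpotent corrections are only polynomial and do not change the levels), the series $\widehat H^{[N]}$ is $p$-summable. It is Borel--Laplace summable of order $p$ in every direction $d\notin{\rm aSR}(u)$. The sum $H_d^{[N]}$ is holomorphic on a sector of opening $\tau_{j+1}-\tau_j+\pi/p$, which is exactly $\Sect_d$, and it has asymptotic expansion $\widehat H^{[N]}$ there. We then set $F^{[N]}_d:=H_d^{[N]}e^{\int D^{[N]}}z^{D_1^{[N]}}$. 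This is a holomorphic solution with the asymptotics \eqref{preasy} mod $\mathfrak m^N$. By uniqueness (below) together with the fact that the projection $\pi_N^{N+1}$ commutes with Borel and Laplace transforms, we get $\pi^{N+1}_N F_d^{[N+1]}=F_d^{[N]}$. Hence $F_d:=\varprojlim F_d^{[N]}$ is holomorphic in the sense of Section \ref{beginsec}.

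\textbf{Uniqueness.} We expect this to be the main obstacle, because the exponent factor is not scalar. Suppose $F,F'$ are two solutions mod $\mathfrak m^N$ with the same asymptotics. Then $F'=F C$ with $C$ constant, and
\[
e^{\int D}z^{D_1}\,C\,z^{-D_1}e^{-\int D}=H_d^{-1}H_d'\sim 1 .
\]
Take the $(i,j)$ block. The left side is $e^{(u_j-u_i)/(pz^p)}$ times $z^{\text{scalar}}$ times a polynomial in $z^{-1}$ and $\log z$, applied to $C_{ij}$. We argue by induction on the nilpotency filtration (the $\mathfrak m$-adic order). Look at the lowest order nonzero part of $C_{ij}-\delta_{ij}$. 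The polynomial corrections act trivially on it at that order, so this part behaves like $e^{(u_j-u_i)/(pz^p)}c$ with a scalar-type exponential. The key input is that $\Sect_d$ has opening exceeding $\pi/p$ and contains a full closed sector around some direction where $\operatorname{Re}\big((u_j-u_i)/z^p\big)>0$, for every $i\neq j$. Hence boundedness of the left side forces $c=0$. For $i=j$, the diagonal blocks commute with $D_k$ and $D_1$ by Lemma \ref{Dcommute} and \eqref{1commutem}, so the conjugation is trivial and $C_{ii}=1$. This gives $C=1$ modulo $\mathfrak m^N$ for every $N$.

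\textbf{Analytic continuation.} The finite system mod $\mathfrak m^N$ has its only singularity at $0$ in $\mathbb C$. Therefore every local solution continues along any path in $\mathbb C^\times$, giving a holomorphic function on $\widetilde{\mathbb C\setminus\{0\}}$. These continuations are compatible with the projections $\pi_N^{N+1}$, so they define the continuation of $F_d$ with values in $\wUph\otimes\End(\mathbb C^n)$.
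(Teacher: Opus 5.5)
Your route is the paper's: reduce modulo $\mathfrak m^N$, note that $D_1,\dots,D_p$ are nilpotent there, so the only exponential parts are the scalar $e^{-u_i/(pz^p)}$ and there is a single level $p$. Then apply the $p$-summability theorem in each finite quotient, and pass to the inverse limit using compatibility with $\pi_N^{N+1}$. The paper leaves uniqueness and continuation to the standard theory, so your explicit arguments add detail. The off-diagonal half of your uniqueness argument is sound.

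The diagonal case, however, is justified incorrectly. Lemma \ref{Dcommute} and \eqref{1commutem} only say that the entries $(D_m)_{ii}$ commute with one another and that $D_2,\dots,D_p$ are central. They say nothing about the unknown constant $C_{ii}\in\Uph/\mathfrak m^N$. Moreover $(D_1)_{ii}$ is not central (see \eqref{D1T}), so $z^{(D_1)_{ii}}C_{ii}z^{-(D_1)_{ii}}$ need not equal $C_{ii}$.

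The repair is immediate. This expression equals $\sum_{k\ge 0}\frac{(\log z)^k}{k!}\operatorname{ad}^k_{(D_1)_{ii}}(C_{ii})$. The sum is finite, since each $\operatorname{ad}$ raises the $\mathfrak m$-adic order, so it is a polynomial in $\log z$ with constant term $C_{ii}$. It tends to $1$ along a ray into $0$, so all higher coefficients vanish and $C_{ii}=1$. Equivalently, run your lowest-order argument on $C_{ii}-1$: at the lowest nonvanishing $\mathfrak m$-adic order the conjugation acts trivially and there is no exponential factor, so that part must already vanish. With this fix the proposal is complete.
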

\begin{proof}

It follows from a standard construction via the order-$p$ Borel-Laplace transform of the formal power series $\widehat{H}(z)$ along $d$. The order-$p$ Borel-Laplace transform/resummation for power series with coefficients in the infinite dimensional vector space $  \Uph  \otimes \End(\C^n)$ is understood as follows. 

For every $N$, the reduction of the equation modulo $\frak m^N$ gives a
meromorphic linear system with coefficients in the finite dimensional algebra
$ (\Uph/\frak m^N)  \otimes \End(\C^n)$.
Let
\[
  \widehat{F}^{[N]}(z)
  =
  \widehat{H}^{[N]}(z)
  \cdot e^{\int D^{[N]}(z)dz}z^{D_{1}^{[N]}}
\]
be the formal solution of the finite dimensional system over $ \Uph/\frak m^N$. 
Note that the coefficient matrices of these reduced systems are compatible with the
natural projections
\[
  \pi_N^{N+1}: \Uph/\frak m^{N+1} \longrightarrow  \Uph/\frak m^N.
\] Thus, applying
$\pi_N^{N+1}$ termwise to the formal solution over
$\Uph/\frak m^{N+1}$ gives a normalized formal solution over $\Uph/\frak m^N$, that is
\[
  \pi_N^{N+1}(\widehat{H}^{[N+1]}(z))
  =
  \widehat{H}^{[N]}(z),
  \qquad
  \pi_N^{N+1}(D_{r}^{[N+1]})
  =
  D_{r}^{[N]}.
\]

The standard order-$p$ Borel-Laplace resummation theorem then applies in each finite
dimensional quotient. To be more precise, after the
reduction modulo $\frak m^{N}$ the system becomes an ordinary finite rank $\C$-linear system of finite rank, whose leading coefficient has each eigenvalue $u_i$ with multiplicity ${\rm dim}(\Uph/\frak m^{N})$.  Thus the leading coefficient is not
regular semisimple in the usual sense. 
However, all diagonal coefficients
$D_{r}^{[N]}$, $r=1,\ldots,p$, lie in the nilpotent ideal
$\frak m/\frak m^N$.
As a consequence, after separating the scalar exponential $e^{-\frac{u}{p z^p}}$ from $e^{\int D^{[N]}(z)dz}$, the remaining exponential factor
\[
{\rm exp}\left(-\sum_{r=2}^{p}\frac{D_{r}^{[N]}}{(r-1)z^{r-1}}\right)
\]
is a Laurent polynomial in $1/z$. If we absorb the factor into the series part, then the formal solution takes the (unramified) form 
\[ \widehat{F}^{[N]}(z)=\widehat{G}^{[N]}(z) \cdot e^{\frac{-u}{pz^p}}z^{D_{1}^{[N]}}\]
where $\widehat G^{[N]}(z)$ is a formal Laurent series. 
Therefore the only exponential parts are the scalar exponentials
$e^{-\frac{u_i}{p z^p}}$ for $i=1,\ldots,n.$
All differences of exponential parts have the same order \(z^{-p}\).  Hence
the system has a single level $p$.  By the $p$-summability theorem for
single-level meromorphic systems (see e.g., the books
\cite{Balser, LR}, and see \cite{BBRS,MR} for the multisummability theorem), the formal series $\widehat{G}^{[N]}(z)$, and
therefore equivalently the original formal series $\widehat {H}^{[N]}(z)$, is
Borel-Laplace $p$-summable along every direction that is not an anti-Stokes
direction, i.e. along every $d\notin{\rm aSR}(u)$.

Thus, for every $d\notin{\rm aSR}(u)$,  the Borel-Laplace $p$-sum of $\widehat{H}^{[N]}(z)$ gives a holomorphic function
\[
H_{d}^{[N]}(z)\in
(\Uph/\frak m^N)
\otimes\operatorname{End}(\mathbb C^n).
\]
Since the Borel-Laplace $p$-sum is unique in each finite dimensional quotient, and the order-$p$ Borel transform (also known as $p$-Borel transform)
and the order-$p$ Laplace transform (also known as $p$-Laplace transform) commute with the finite dimensional projection
$\pi_N^{N+1}$, we have
\[
  \pi_N^{N+1}(H_{d}^{[N+1]})
  =
  H_{d}^{[N]}.
\]
Therefore the inverse limit defines a holomorphic function
\[
H_d(z)\in
\wUph
\otimes\operatorname{End}(\mathbb C^n).
\]
Note that the
diagonal singular factor $e^{\int D(z)dz}z^{D_1}$
is also a holomorphic
$\wUph
\otimes\End(\C^n)$-valued function (on the covering of $\mathbb{C}\setminus \{0\}$). Hence we can define the holomorphic solution
\[F_d(z):=H_d(z)\cdot e^{\int D(z)}z^{D_1}\in
\wUph
\otimes\End(\C^n).\]
\end{proof}
Holomorphic functions below, as well as the quantum Stokes matrices, are understood in this completed
$\frak m$-adic sense.

\begin{defi}
$(1)$ 
For any fixed $u\in\h_{\rm reg}$, the quantum Stokes factor $\mathcal{K}_{j}(u)\in \wUph\otimes {\rm End}(\mathbb{C}^n)$ of the equation \eqref{hqeq}, associated to the anti-Stokes ray $\tau_j\in {\rm aSR}(u)$, is determined by the identity
   \begin{align}
    F_{d}(z)=F_{d'} (z)\cdot \mathcal{K}_{j}(u).
    \end{align}
Here $d,d'\notin {\rm aSR}(u)$ are two arbitrary directions such that $\tau_{j-1}<d<\tau_j<d'<\tau_{j+1}$ (thus no other anti-Stokes ray lies between $d$ and $d'$).

$(2).$ The quantum Stokes matrix $S_i(u)\in \wUph\otimes{\rm End}(\mathbb{C}^n)$ of the equation \eqref{hqeq} is
\[S_i(u):=\mathcal{K}_{(i+1)l}(u)\cdots \mathcal{K}_{il+1}(u).\]
Here $l=\frac{\#({\rm aSR}(u)\cap [0,2\pi))}{2p}$ (note that ${\rm aSR}(u)$ are invariant under the rotation by $\pi/p$, thus the integer $l$ is well defined, and $l=\frac{n(n-1)}{2}$ for generic $u$). Equivalently,
\begin{align}
    F_{d}(z)=F_{d'} (z)\cdot S_{i}(u),
    \end{align}
if $d,d'\notin {\rm aSR}(u)$ are two arbitrary directions such that $\tau_{il}<d<\tau_{il+1}<\cdots<\tau_{(i+1)l}<d'<\tau_{(i+1)l+1}$.

$(3).$ 
We call a direction $d$ an \textbf{admissible direction} if $d\notin {\rm aSR}(u)$ or if $d=\tau_l\in {\rm aSR}(u)$ but the associated Stokes factor $\mathcal{K}_{l}=1$.
\end{defi}
 
We remark that the Stokes matrices depend on the chosen initial anti-Stokes direction $\tau_0$. Up to a possible permutation of $u_1,\ldots,u_n$, the Stokes matrices $S_{2k+1}(u)$ and $S_{2k}(u)$ for any integer $k$ are lower and upper triangular respectively, and have ones along the diagonal (see e.g., \cite{Boalch2}). Furthermore, the Stokes matrices satisfy 
\begin{equation}S_{k+2p}(u)=e^{-2\pi \I D_1} S_{k}(u) e^{2\pi \I  D_1},
\end{equation}
when the branch of the multivalued function $z^{-D_1}$ in \eqref{preasy} is taken into account.

\section{Formal solutions for a two variable system and their factorization}\label{SPDE}
The proof of Theorem \ref{thmRLL} relies on the study of the Stokes phenomenon of a compatible system of partial differential equations \eqref{kKZ1}-\eqref{kKZ2} with two variables. In this section, we study the formal solutions of the system.
First, in Section \ref{sec33} we construct the formal solution
$\widehat Y$ and show that the coefficients
of its formal power series part have poles only at $t=1,\infty$. Second, in Sections \ref{factYinf}-\ref{factY1} we prove that 
near each of the points $t=0,1,\infty$, $\widehat Y$ has a formal fractorization as a product of two factors. The analytic factorization after the resummation is considered in the next section. 

\subsection{A compatible system of partial differential equations with two variables}\label{twoKZ}
Consider a system of equations, for a $\wUph\otimes {\rm End}(\IC^n)\otimes {\rm End}(\IC^n)$ valued function $Y(z_1,z_2)$ of two complex variables, 
\begin{align}\label{kKZ1}
\frac{{\partial Y}}{\partial z_1}&= \left(\frac{u^{(1)}}{z_1^{p+1}}+\frac{ \hbar T_{[p]}^{(1)}}{z_1^p}+\cdots+\frac{\hbar T_{[2]}^{(1)}}{z_1^2}-\frac{\hbar T^{(1)}}{z_1}+\frac{\hbar P}{z_1}+
\frac{\hbar P}{z_2-z_1}\right)\cdot Y, \\ \label{kKZ2}
\frac{{\partial Y}}{\partial z_2}&= \left(\frac{u^{(2)}}{z_2^{p+1}}+\frac{\hbar T_{[p]}^{(2)}}{z_2^p}+\cdots+\frac{\hbar T_{[2]}^{(2)}}{z_2^2}-\frac{\hbar T^{(2)}}{z_2}+\frac{\hbar P}{z_2}+
\frac{\hbar P}{z_1-z_2}\right)\cdot Y,
\end{align}
Here $u={\rm diag}(u_1,\ldots,u_n)\in\h_{\rm reg}$, and
\begin{align*}
&u^{(1)}= \sum_i 1\otimes u_i E_{ii}\otimes 1, \ u^{(2)}= \sum_i 1\otimes1 \otimes u_i E_{ii},& \\ 
&\hbar T^{(1)}=\sum_{k,l} \hbar e_{kl}\otimes E_{kl}\otimes 1, \ \hbar T^{(2)}=\sum_{k,l} \hbar e_{kl}\otimes 1\otimes E_{kl},\\ 
&\hbar T_{[j]}^{(1)}=\sum_{k,l} \hbar e_{kl}^{(j-1)}\otimes E_{kl}\otimes 1, \ \hbar T_{[j]}^{(2)}=\sum_{k,l} \hbar e_{kl}^{(j-1)}\otimes 1\otimes E_{kl}, \ \hbar P=\hbar \sum_{k,l} 1\otimes E_{kl}\otimes E_{lk},&
\end{align*}
are elements in $\Uph\otimes {\rm End}(\IC^n)\otimes {\rm End}(\IC^n)$. 

To show the system is compatible, we need the following lemma, whose proof follows by direct computation.
\begin{lem}\label{12commute}
We have the identities 
\begin{align}
[u^{(1)},\hbar T^{(2)}]&=[u^{(2)},\hbar T^{(1)}]=0,\\
\label{TP0}
[\hbar T^{(1)},\hbar T_{[m]}^{(2)}]&=[\hbar T^{(2)},\hbar T_{[m]}^{(1)}]=0,  \hspace{3mm}  \text{ for all } m=2,...,p,\\  
 [\hbar T^{(1)}+\hbar T^{(2)}, \hbar P]&=0, \ \ [\hbar T^{(1)}, \hbar P-\hbar T^{(2)}]=0,
\end{align}
and
\begin{align}
[u^{(1)},\hbar T^{(2)}_{[j]}]&=[ u^{(2)},\hbar T^{(1)}_{[j]}]=0 \  \text{ for } j=2,...,p, \\
\hbar T^{(1)}_{[i]}\hbar T^{(2)}_{[j]}&=\hbar T^{(2)}_{[j]}\hbar T^{(1)}_{[i]} \  \text{ for all } i+j \ge p+3,\\
\label{TP1}
    [\hbar T_{[m]}^{(1)}+\hbar T_{[m]}^{(2)}, \hbar P]&=0,   \hspace{3mm}  \text{ for all } m=2,...,p,\\  \label{TP2}
    [\hbar T_{[i]}^{(1)}, \hbar T_{[j]}^{(2)}]&=[\hbar P,\hbar T^{(2)}_{[i+j-1]}], \hspace{3mm}  \text{ for all } i,j\ge 2, \ i+j\le p+1,\\ \label{TP3}
   \hbar (u^{(1)}P-u^{(2)}P)&=\hbar T^{(1)}_{[i]}\hbar T^{(2)}_{[p+2-i]}-\hbar T^{(2)}_{[p+2-i]}\hbar T^{(1)}_{[i]}, \text{ for all } i=2,...,p.
\end{align}
\end{lem}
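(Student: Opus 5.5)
The plan is to verify every identity in Lemma \ref{12commute} entrywise, using only the defining relations \eqref{algebra1}--\eqref{algebra2} and the matrix-unit calculus in $\End(\C^n)\otimes\End(\C^n)$. I will write every operator in the form $\sum X_{\alpha}\otimes A_\alpha\otimes B_\alpha$. A commutator of two such elements then splits into two parts: the commutator of the algebra coefficients times the products of the matrix factors, plus the algebra products times the commutators of the matrix factors. Two facts will do most of the work. First, operators acting in different tensor slots have commuting matrix parts. Second, $P$ is the flip operator, so $P X^{(1)} = X^{(2)}P$ for matrices with scalar entries, and $P(E_{kl}\otimes 1)P = 1\otimes E_{kl}$.

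First I treat the trivial identities. The operators $u^{(1)}$ and $\hbar T^{(2)}$, and likewise $u^{(1)}$ and $\hbar T_{[j]}^{(2)}$, live in different slots, and $u$ has scalar entries, so they commute. The pair $\hbar T^{(1)}$, $\hbar T^{(2)}_{[m]}$ also lives in different slots, and its algebra coefficients commute by $[\hbar e_{ij}^{(a)},\hbar e_{kl}]=0$. For $i+j\ge p+3$, that is $(i-1)+(j-1)\ge p+1$, relation \eqref{algebra1} gives vanishing commutators. Hence $\hbar T^{(1)}_{[i]}$ and $\hbar T^{(2)}_{[j]}$ commute.

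Next come the $P$-identities. For \eqref{TP1}, and for $[\hbar T^{(1)}+\hbar T^{(2)},\hbar P]=0$, the coefficients of $P$ are scalars. The identity therefore reduces to $[E_{kl}\otimes1+1\otimes E_{kl},P]=0$, which holds because $P$ commutes with the diagonal $\mathfrak{gl}_n$ action. For $[\hbar T^{(1)},\hbar P-\hbar T^{(2)}]=0$, I compute $[\hbar T^{(1)},\hbar T^{(2)}]=\sum[\hbar e_{ij},\hbar e_{kl}]\otimes E_{ij}\otimes E_{kl}$. Applying \eqref{algebra2} and resumming gives $\hbar\sum \hbar e_{il}\otimes(E_{ij}\otimes E_{jl} - E_{lj}\otimes E_{ji})$. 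This equals $\hbar[\hbar T^{(1)},P]$ because $(E_{il}\otimes 1)P-P(E_{il}\otimes1)=\sum_j (E_{ij}\otimes E_{jl}-E_{jl}\otimes E_{ij})$, after relabeling.

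The substantive identities are \eqref{TP2} and \eqref{TP3}, and I treat them the same way. With $a=i-1$ and $b=j-1$, we have $[\hbar T^{(1)}_{[i]},\hbar T^{(2)}_{[j]}]=\sum_{rskl}[\hbar e^{(a)}_{rs},\hbar e^{(b)}_{kl}]\otimes E_{rs}\otimes E_{kl}$. When $a+b\le p-1$, the relation gives $\hbar\sum\hbar e^{(a+b)}_{rl}\otimes(\ldots)$. The same resummation as above then identifies this with $\hbar[\hbar T^{(1)}_{[i+j-1]},P]$. Using \eqref{TP1}, this equals $[\hbar P,\hbar T^{(2)}_{[i+j-1]}]$, which is the claimed right-hand side of \eqref{TP2}. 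When $a+b=p$, that is $i+j=p+2$, relation \eqref{algebra1} gives $\sum_{r,s}\hbar(u_r-u_s)\otimes E_{rs}\otimes E_{sr}$. This is exactly $\hbar(u^{(1)}-u^{(2)})P$, since $u^{(1)}(E_{rs}\otimes E_{sr})=u_r E_{rs}\otimes E_{sr}$ and $u^{(2)}(E_{rs}\otimes E_{sr})=u_sE_{rs}\otimes E_{sr}$. That gives \eqref{TP3}.

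The main obstacle I expect is bookkeeping: the index relabeling and the signs in matching $\sum(\delta_{sk}e_{rl}-\delta_{lr}e_{ks})E_{rs}\otimes E_{kl}$ with $[T^{(1)},P]$ versus $[P,T^{(2)}]$. I would fix these by checking a single matrix coefficient $E_{\alpha\beta}\otimes E_{\gamma\delta}$ on both sides.
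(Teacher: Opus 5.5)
Your proposal is correct and is essentially the paper's proof: both verify each identity coefficientwise from the defining relations \eqref{algebra1}--\eqref{algebra2}. The only cosmetic difference is that you obtain \eqref{TP2} through $[\hbar T^{(1)}_{[i+j-1]},\hbar P]$ and \eqref{TP1}, whereas the paper matches against $[\hbar P,\hbar T^{(2)}_{[i+j-1]}]$ directly. One index slip needs fixing: resumming $[\hbar T^{(1)},\hbar T^{(2)}]$ gives $\hbar\sum_{i,j,l}\hbar e_{il}\otimes(E_{ij}\otimes E_{jl}-E_{jl}\otimes E_{ij})$, not the version with $-E_{lj}\otimes E_{ji}$ that you wrote (the two are not related by relabeling $j$). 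With this correction, the expression matches $\hbar[\hbar T^{(1)},P]$ exactly as you claim.
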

\begin{proof}
In terms of the basis $E_{ab}\otimes E_{cd}$ we have
\begin{align*}
    [\hbar T_{[i]}^{(1)},\hbar T_{[j]}^{(2)}]&=\sum_{a,b,c,d}[\hbar e_{ab}^{(i-1)}, \hbar e_{cd}^{(j-1)}]\otimes E_{ab}\otimes E_{cd}, \\ [\hbar P, \hbar T_{[i]}^{(2)}]&=\hbar \sum_{a,b,c,d}(\delta_{bc}\hbar e_{ad}^{(i-1)}-\delta_{ad}\hbar e_{cb}^{(i-1)})\otimes E_{ab}\otimes E_{cd},\\
    [\hbar T_{[i]}^{(1)},\hbar T_{[p+2-i]}^{(2)}]&=\hbar \sum_{a,b,c,d}\delta_{bc}\delta_{ad}(u_a-u_b)\otimes E_{ab}\otimes E_{cd},
\end{align*} 
All identities can be checked coefficientwise directly using the defining relations of the algebra.
\end{proof}

\begin{pro}
The linear system of partial differential equations \eqref{kKZ1} and \eqref{kKZ2} is compatible.
\end{pro}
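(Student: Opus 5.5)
Write the system as $\partial_{z_1}Y=A_1Y$, $\partial_{z_2}Y=A_2Y$ with
\[
A_1=\frac{u^{(1)}}{z_1^{p+1}}+\sum_{m=2}^p\frac{\hbar T^{(1)}_{[m]}}{z_1^m}-\frac{\hbar T^{(1)}}{z_1}+\frac{\hbar P}{z_1}+\frac{\hbar P}{z_2-z_1},
\]
and $A_2$ obtained by exchanging the superscripts $(1)\leftrightarrow(2)$ and $z_1\leftrightarrow z_2$. Compatibility (flatness) is the zero-curvature condition
\[
\partial_{z_2}A_1-\partial_{z_1}A_2+[A_1,A_2]=0
\]
in $\Uph\otimes\End(\C^n)\otimes\End(\C^n)$ with rational dependence on $z_1,z_2$. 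Once this holds, formal and convergent (mod $\frak m^N$) solutions exist by the usual Frobenius argument.

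First the derivative terms. Only the $\hbar P/(z_2-z_1)$ parts depend on the other variable, giving
\[
\partial_{z_2}A_1-\partial_{z_1}A_2=-\frac{\hbar P}{(z_2-z_1)^2}+\frac{\hbar P}{(z_1-z_2)^2}=0.
\]
It therefore remains to show $[A_1,A_2]=0$. Write $A_1=B_1(z_1)+\hbar P/(z_2-z_1)$ with $B_1$ the one-variable part (including $\hbar P/z_1$), and similarly for $A_2$. Expand $[A_1,A_2]$ into four groups of terms and sort them by their pole structure: pure monomials $z_1^{-i}z_2^{-j}$ coming from $[B_1,B_2]$, and terms with the factor $1/(z_2-z_1)$ coming from $[B_1,\hbar P]$ and $[\hbar P,B_2]$. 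The term $[\hbar P,\hbar P]/(z_2-z_1)(z_1-z_2)$ vanishes.

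For the $1/(z_2-z_1)$ terms, use $[\hbar T^{(1)}_{[m]}+\hbar T^{(2)}_{[m]},\hbar P]=0$ \eqref{TP1}, its residue analogue $[\hbar T^{(1)}+\hbar T^{(2)},\hbar P]=0$ from Lemma \ref{12commute}, and the fact that $u^{(1)}P=Pu^{(2)}$. This rewrites $[B_1,\hbar P]$ as $-[\hbar P,\tilde B_1]$, where $\tilde B_1$ has superscript $(2)$ but argument $z_1$. The combination then becomes
\[
\frac{[\hbar P,\,B_2(z_2)-\tilde B_1(z_1)]}{z_2-z_1}.
\]
Use the partial fraction identity
\[
\frac{z_2^{-m}-z_1^{-m}}{z_2-z_1}=-\sum_{i+j=m+1,\,i,j\ge1}z_1^{-i}z_2^{-j}
\]
to convert it into monomials $z_1^{-i}z_2^{-j}$ with coefficients $[\hbar P,\hbar T^{(2)}_{[i+j-1]}]$ (and $u$-terms $\hbar(u^{(1)}P-u^{(2)}P)$ when $i+j-1=p+1$). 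The $\hbar P/z_k$ and $-\hbar T/z_k$ pieces give no $1/(z_2-z_1)$ pole after this step, since $[\hbar P,\hbar P-\hbar T^{(2)}+\ldots]$ handled via $[\hbar T^{(1)},\hbar P-\hbar T^{(2)}]=0$.

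Then match coefficients of each $z_1^{-i}z_2^{-j}$ against $[B_1,B_2]$:
\begin{itemize}
\item $[u^{(1)},u^{(2)}]=0$, and the mixed terms $[u^{(1)},\hbar T^{(2)}_{[j]}]$ and $[u^{(1)},\hbar T^{(2)}]$ vanish by Lemma \ref{12commute}.
\item For $i+j\le p+1$, \eqref{TP2} cancels $[\hbar T^{(1)}_{[i]},\hbar T^{(2)}_{[j]}]$ against the $P$-contribution.
\item For $i+j=p+2$, \eqref{TP3} cancels the $u$-contribution.
\item For $i+j\ge p+3$, the commutator is $0$.
\item The residue terms $-\hbar T^{(k)}+\hbar P$ are handled using \eqref{TP0} and $[\hbar T^{(1)},\hbar P-\hbar T^{(2)}]=0$.
\end{itemize}

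The main obstacle is bookkeeping. One must verify that the partial fraction expansion produces exactly the index set $i,j\ge1$, $i+j=m+1$, with the right signs, so that each identity of Lemma \ref{12commute} is applied with matching coefficients. This is especially delicate for the residue-level terms, where $\hbar P/z_k$ and $-\hbar T^{(k)}/z_k$ interact both with each other and with the $1/(z_2-z_1)$ term. I would first check the case $p=2$ by hand to fix the signs, then do the general case.
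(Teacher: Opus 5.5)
Your proposal is correct and is essentially the paper's proof: you check the zero-curvature condition directly, sort terms by pole structure, use \eqref{TP1} to move the $\hbar P$-commutators onto the second tensor factor, and cancel coefficients using \eqref{TP0}, \eqref{TP2}, \eqref{TP3} and $[\hbar T^{(1)},\hbar P-\hbar T^{(2)}]=0$. (The paper merely merges $\hbar P/z_1$ with $\hbar P/(z_2-z_1)$ first, so only $i,j\ge2$ appear in its partial-fraction step.) There are only minor sign slips: $[B_1,\hbar P]=+[\hbar P,\tilde B_1]$, and the partial fractions put the coefficient $-[\hbar P,\cdot\,]$ on $z_1^{-i}z_2^{-j}$, which is exactly what makes \eqref{TP2} cancel; your combined expression $[\hbar P,B_2-\tilde B_1]/(z_2-z_1)$ is nonetheless right.
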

\begin{proof}
Denote the coefficients of the equations \eqref{kKZ1} and \eqref{kKZ2} by $A(z_1,z_2)$ and $B(z_1,z_2)$ respectively. Then the compatibility condition is 
\[A(z_1,z_2)B(z_1,z_2)+\frac{\partial A(z_1,z_2)}{\partial z_2}=B(z_1,z_2)A(z_1,z_2)+\frac{\partial B(z_1,z_2)}{\partial z_1}.\]
We group the terms according to the total pole order in $z_1$ and $z_2$. By a direct computation, all
terms except the two families below cancel by Lemma \ref{12commute}:

(1) for all $m=2,...,p$
\begin{align}\nonumber
    &\sum_{\substack{i+j=m+1 \\ i,j\ge 2}}\frac{\hbar T^{(1)}_{[i]}\hbar T^{(2)}_{[j]}}{z_1^iz_2^j}+\frac{\hbar P\hbar T^{(2)}_{[m]}z_2}{z_1(z_2-z_1)z_2^m}+\frac{\hbar T^{(1)}_{[m]}\hbar Pz_1}{z_2(z_1-z_2)z_1^m}\\ \label{ijm0}=&\sum_{\substack{i+j=m+1 \\ i,j\ge 2}}\frac{\hbar T^{(2)}_{[j]}\hbar T^{(1)}_{[i]}}{z_2^jz_1^i} +\frac{\hbar T^{(2)}_{[m]}\hbar Pz_2}{z_1(z_2-z_1)z_2^m}+\frac{\hbar P\hbar T^{(1)}_{[m]}z_1}{z_2(z_1-z_2)z_1^m};
\end{align}

(2) for $m=p+1$ 
\begin{align}\nonumber
&\sum_{i+j=p+2}\frac{\hbar T^{(1)}_{[i]}\hbar T^{(2)}_{[j]}}{z_1^iz_2^j}
+\frac{u^{(1)}\hbar Pz_1}{z_2z_1^{p+1}(z_1-z_2)}+\frac{\hbar Pu^{(2)}z_2}{z_1z_2^{p+1}(z_2-z_1)}\\ \label{ijmk}
=&\sum_{i+j=p+2}\frac{\hbar T^{(2)}_{[j]}\hbar T^{(1)}_{[i]}}{z_2^jz_1^i} +\frac{\hbar Pu^{(1)}z_1}{z_2z_1^{p+1}(z_1-z_2)}+\frac{u^{(2)}\hbar Pz_2}{z_1z_2^{p+1}(z_2-z_1)}.
\end{align}

These identities follow from the commutation relations in Lemma \ref{12commute}. For instance, identity \eqref{ijm0} simplifies using $[P,T^{(1)}_{[m]}]=-[P,T^{(2)}_{[m]}]$ and the relation \eqref{TP2}. Identity \eqref{ijmk} follows similarly using \eqref{TP3}. Hence, the two expressions are equal, establishing compatibility. 
\end{proof}

\subsection{Construction of the formal solution $\widehat{Y}$ as $z_1\rightarrow 0$ and $z_2/z_1$ fixed}\label{sec33}

In terms of the new coordinate $z=z_1$ and $t=z_2/z_1$, the system \eqref{kKZ1} and \eqref{kKZ2} becomes
\begin{align}\label{ztkKZ1}
  \frac{{\partial Y}}{\partial z}&= \left(  \frac{u^{(1)}+t^{-p} u^{(2)}}{z^{p+1}}+\sum_{r=2}^p\frac{\hbar T_{[r]}^{(1)}+t^{1-r}\hbar T_{[r]}^{(2)}}{z^r}-\frac{\hbar T^{(1)}+\hbar T^{(2)}}{z}+
\frac{\hbar P}{z}\right)\cdot Y, \\ 
\label{ztkKZ2}
  \frac{{\partial Y}}{\partial t}&=\left(  \frac{u^{(2)}}{t^{p+1}z^p}+\sum_{r=2}^p\frac{\hbar T_{[r]}^{(2)}}{t^rz^{r-1}}-\frac{\hbar T^{(2)}}{t}+\frac{\hbar P}{t}+
\frac{\hbar P}{1-t}\right)\cdot Y,
\end{align}

\begin{pro}\label{t1fact}
For any fixed $u\in\h_{\rm reg}$, the  system of equations \eqref{ztkKZ1} and \eqref{ztkKZ2} has a unique formal solution taking the form \begin{eqnarray}\label{Yformalsum}
\widehat{Y}(z;t)=\widehat{Q}(z,t)\cdot e^{\int \left(D^{(1)}(z)+tD^{(2)}(zt)\right)dz}z^{\left(D_1^{(1)}+D_1^{(2)}+\hbar\delta P\right)}t^{D_1^{(2)}+\hbar\delta P}(1-t)^{-\hbar\delta P}, \end{eqnarray}
where 
\[\widehat{Q}=1+Q_1(t)z+Q_2(t)z^{2}+\cdot\cdot\cdot,\] and each coefficient 
\begin{equation}\label{Qtrational}
  Q_m(t)\in \varprojlim_N \left[\left(
\Uph/\mathfrak m^N
\otimes\End(\mathbb C^n)^{\otimes 2}\right){\otimes}\C[t,(t-1)^{-1}]\right].
\end{equation}
That is, for every $m,N\ge 1$, the reduction $Q^{[N]}_m(t)$ modulo $\frak m^N$ is an $
(\Uph/\frak m^N)
\otimes\End(\mathbb C^n)^{\otimes 2}$ valued rational function of $t$ whose poles are contained in
$\{1,\infty\}$. Moreover, its pole orders at $t=\infty$ and $t=1$ satisfy
\begin{equation}\label{Qinfdegree}
{\rm ord}^{\rm pole}_
{t=\infty}(Q^{[N]}_m(t))\le m,
\end{equation}
and
\begin{equation}\label{Q1degree}
{\rm ord}^{\rm pole}_
{t=1}(Q^{[N]}_m(t))
 \leq\left\lfloor\frac mp\right\rfloor.
\end{equation}
\end{pro}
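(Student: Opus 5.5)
Substituting the ansatz \eqref{Yformalsum} into the two equations turns the problem into recursions for the coefficients $Q_m(t)$. The $z$-equation \eqref{ztkKZ1} gives a $p$-step recursion of the same type as \eqref{Hrecursion}, which determines the $Q_m$ algebraically. The $t$-equation \eqref{ztkKZ2} then serves to control the $t$-dependence, i.e. the location and order of the poles.

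\textbf{Setting up the recursion.} The irregular part of the exponent is $D^{(1)}(z)+tD^{(2)}(zt)$. It is diagonal in $\End(\C^n)^{\otimes 2}$, with $(i,j)$ entry $\frac{u_i+t^{-p}u_j}{z^{p+1}}+\cdots$. Write $U(t):=u^{(1)}+t^{-p}u^{(2)}$; it is the leading coefficient of \eqref{ztkKZ1}. Comparing coefficients of $z^{m-p}$ in \eqref{ztkKZ1} gives
\[
[U(t),Q_{m+1}]=(m-p+1)Q_{m-p+1}+\sum_{k=1}^p\bigl(Q_{m-p+k}\mathcal D_k(t)-\mathcal T_{[k]}(t)Q_{m-p+k}\bigr).
\]
Here $\mathcal T_{[r]}(t)=\hbar T^{(1)}_{[r]}+t^{1-r}\hbar T^{(2)}_{[r]}$ for $r\ge2$, $\mathcal T_{[1]}=-\hbar T^{(1)}-\hbar T^{(2)}+\hbar P$, and $\mathcal D_k$ are the diagonal coefficients of the exponent: $D^{(1)}_k+t^{1-k}D^{(2)}_k$, plus $D^{(1)}_1+D^{(2)}_1+\hbar\delta P$ for $k=1$. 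The eigenvalue differences of $\ad_{U(t)}$ are $u_i-u_k+t^{-p}(u_j-u_l)$. These can vanish for special $t$, and they vanish identically on the entries with $(i,j)=(k,l)$ or $(i,j)=(l,k)$ only in the diagonal and flip directions.

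So I would not solve pointwise in $t$. Instead I run the double induction on $(m,N)$ from the second proof of Theorem \ref{oneformal}, working in $(\Uph/\frak m^N)\otimes\End(\C^n)^{\otimes2}\otimes\C(t)$. Entries in the kernel of $\ad_{U}$ over $\C(t)$ are genuinely degenerate only for the $E_{ii}\otimes E_{jj}$ directions. For these entries I use the $p$-step delayed equation, as in \eqref{HrecursionN}; every coefficient there carries a factor in $\frak m$. This gives existence and uniqueness of $\widehat Q$ with coefficients in $\C(t)$, compatible under $\pi^{N+1}_N$. The consistency of the diagonal exponents is checked as in Theorem \ref{oneformal}: uniqueness of the normal form forces it, since the $t\to$ generic diagonalization agrees with the one-variable normal forms of factors $1$ and $2$ together with the $\hbar\delta P$ correction from $\hbar P$. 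Compatibility of \eqref{ztkKZ1} and \eqref{ztkKZ2} (the preceding Proposition) then shows that the $t$-equation is automatically satisfied. The standard argument: apply $\partial_t-(\text{coefficient of }\eqref{ztkKZ2})$ to $\widehat Y$, get another formal solution of \eqref{ztkKZ1} of the form $\widehat Y\cdot C$, and show $C=0$ by uniqueness.

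\textbf{Pole structure.} This is the main obstacle: the recursion alone produces denominators $u_i-u_k+t^{-p}(u_j-u_l)$, which have spurious roots in $t$. To show that $Q^{[N]}_m$ is in fact a polynomial in $t$ and $(t-1)^{-1}$, I would use the $t$-equation \eqref{ztkKZ2}. It gives a second recursion
\[
\partial_tQ_m+Q_m\cdot(\text{diag.\ $t$-exponent terms})-(\text{coeff.})Q_m=\cdots,
\]
whose $z^{-p}$-part reads $t^{-p-1}[u^{(2)},Q_{m+p}]$ plus lower terms. Equivalently, it is a linear ODE in $t$ for $Q_m$, with coefficients singular only at $t=0,1,\infty$ modulo $\frak m^N$. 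Hence $Q^{[N]}_m$ is holomorphic away from $\{0,1,\infty\}$, so the spurious roots are removable.

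Regularity at $t=0$ I would get from the equation in the variable $s=1/t$ and $w=z_2$: since $z=z_1=w s$, the series is expanded in $z_2$ with $z_1/z_2$ fixed, and there $t=0$ is a regular point. Alternatively, track degrees of $t^{-1}$ in the recursion: each $t^{-p}$ from $U(t)$ is balanced by $t^{1-r}$ factors, which yields polynomiality in $t$.

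The bounds \eqref{Qinfdegree} and \eqref{Q1degree} would follow by induction on the recursion. Each step $m\to m+1$ multiplies by at most one power of $t$: a scaling weight argument, assigning $z$ weight $1$ and $t$ weight $-1$ at $\infty$, i.e. using $z_2=zt$. For \eqref{Q1degree}, the pole at $t=1$ comes only from $\frac{\hbar P}{1-t}$. Near $t=1$ the leading matrix $U(t)$ degenerates on the $P$-eigenspace only at the $p$-th order, so one new pole is created only every $p$ steps; I would formalize this by Laurent expanding in $(t-1)$ and checking the leading coefficients in the delayed equation. I expect making these two bounds rigorous, and the removability of the spurious poles, to be the main work.
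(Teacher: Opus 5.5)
\textbf{There is a genuine gap.} Your existence/uniqueness outline is reasonable: solve \eqref{Qzrecur} over $\C(t)$ modulo $\mathfrak m^N$, use the delayed equation for the $E_{ii}\otimes E_{kk}$ entries, and recover \eqref{ztkKZ2} from compatibility. But the pole structure, which is the real content of the proposition, is not established, and the step you rely on fails.

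\textbf{Why the ODE step fails.} The $t$-equation is not a linear ODE in $t$ for $Q_m$. Because of the term $u^{(2)}/(t^{p+1}z^p)$, its coefficient of $z^{m+1-p}$ is \eqref{Qtrecur}, namely $t^{-p-1}[u^{(2)},Q_{m+1}]=\frac{d}{dt}Q_{m+1-p}+\cdots$. This ties the derivative of a lower coefficient to a higher one. For relations of this shape, holomorphy of the coefficients of \eqref{ztkKZ2} away from $\{0,1,\infty\}$ says nothing by itself about the $Q_m$: if the operator multiplying $Q_{m+1}$ had zeros in $t$, poles would be created there. So you cannot conclude that the zeros of $u_i-u_j+t^{-p}(u_k-u_l)$, the eigenvalues of $\ad_{u^{(1)}+t^{-p}u^{(2)}}$ on $E_{ij}\otimes E_{kl}$, are removable. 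These zeros include all $p$-th roots of unity, via the entries $E_{ij}\otimes E_{ji}$, and the claim is that only $t=1$ survives.

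\textbf{The missing idea.} Use \eqref{Qtrecur} as an algebraic equation for $Q_{m+1}$, not as an ODE.
\begin{itemize}
\item On entries $E_{ij}\otimes E_{kl}$ with $k\neq l$, the coefficient is $t^{-p-1}(u_k-u_l)$, a monomial with no zeros on $\C^\times$. These entries are therefore $t^{p+1}/(u_k-u_l)$ times already known quantities, and they also vanish at $t=0$.
\item On entries with $k=l$, $i\neq j$, the eigenvalue in \eqref{Qzrecur} collapses to the constant $u_i-u_j$.
\item Only the fully diagonal entries need the delayed equation.
\end{itemize}
With this split no $t$-dependent denominator ever appears, which is how the paper's proof proceeds.

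\textbf{Your other heuristics.} Your $(w,s)=(z_2,z_1/z_2)$ argument does not address $t=0$: that point is $s=\infty$, and the argument concerns a different expansion. Your heuristics for \eqref{Qinfdegree} and \eqref{Q1degree} do not hold either.
\begin{itemize}
\item The leading matrix does not degenerate to order $p$ at $t=1$, since $1-t^{-p}$ has a simple zero there. Because $\hbar(T^{(1)}_{[p]}+t^{1-p}T^{(2)}_{[p]})Q_m$ occurs in \eqref{Qzrecur}, the entries $E_{ij}\otimes E_{ji}$ of $Q_{m+1}$ are divided by $(u_i-u_j)(1-t^{-p})$ at every step. That recursion alone therefore only suggests pole order $m$, not $\lfloor m/p\rfloor$.
\item In \eqref{Qtrecur}, by contrast, $(1-t)^{-1}$ and $d/dt$ act only on $Q_{m+1-p}$, so the pole at $t=1$ grows by one every $p$ steps.
\item The prefactor $t^{p+1}$, against right-hand terms of degree at most $m-p$, gives degree at most $m+1$ at $\infty$.
\item A weight argument is not available: the term $\hbar P/(z_1-z_2)$ forces $z_1$ and $z_2$ to have equal weight in any homogeneity of \eqref{kKZ1}--\eqref{kKZ2}, so $t$ has weight $0$.
\end{itemize}

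\textbf{Exponent matching.} Your appeal to ``uniqueness of the normal form'' for the exponents is an assertion rather than an argument. So is your step $C=0$, since uniqueness only gives $V=\widehat Y N_0$ with $N_0$ diagonal and constant in $z$. Showing $N_0=0$ requires checking that the $t$-exponent $t^{D_1^{(2)}+\hbar\delta P}(1-t)^{-\hbar\delta P}$ matches. The paper is equally brief on this point.
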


\begin{proof}
Substitution of \eqref{Yformalsum} into the $z$-equation and
comparison of the coefficient of $z^{m-p}$ gives
\begin{align}\nonumber
 &[u^{(1)}+t^{-p}u^{(2)},Q_{m+1}]\\ \nonumber
=&(m-p+1)Q_{m-p+1}
 +\hbar(T^{(1)}+T^{(2)}-P)Q_{m-p+1}+Q_{m-p+1}(D_1^{(1)}+D_1^{(2)}+\hbar\delta P)\\ \label{Qzrecur}
 &+\sum_{l=2}^pQ_{m-p+l}
       (D_l^{(1)}+t^{1-l}D_l^{(2)})-\hbar\sum_{l=2}^p
       (T_{[l]}^{(1)}+t^{1-l}T_{[l]}^{(2)})Q_{m-p+l},
\end{align}
where we take the convention $Q_0=1$ and $Q_a=0$ for $a<0$.  Similarly, comparison of the
coefficient of $z^{m+1-p}$ of the $t$-equation gives
\begin{align}\nonumber
 t^{-p-1}[u^{(2)},Q_{m+1}]
=&\frac{d}{dt}Q_{m-p+1}
 -\sum_{l=2}^p\frac{\hbar T_{[l]}^{(2)}}{t^l}Q_{m-p+l}
 +\frac{\hbar(T^{(2)}-P)}tQ_{m-p+1}-\frac{\hbar P}{1-t}Q_{m-p+1}\\ \label{Qtrecur}
 &
 +\sum_{l=2}^pQ_{m-p+l}\frac{D_l^{(2)}}{t^l}+Q_{m-p+1}\frac{D_1^{(2)}+\hbar\delta P}{t}
 +Q_{m-p+1}\frac{\hbar\delta P}{1-t}.
\end{align}
We solve these identities modulo $\mathfrak m^N$. We write $Q_{m}=
 \sum_{i,j,k,l}
 Q_{m,ijkl}\otimes E_{ij}\otimes E_{kl}$, and denote by $Q_{m}^{[N]}=
 \sum_{i,j,k,l}
(Q_{m,ijkl}^{[N]})\otimes E_{ij}\otimes E_{kl}$ its reduction modulo $\frak m^{N}$.

First, modulo $\frak m$, we have
$Q^{[1]}_0=1$ and $Q^{[1]}_m=0$ for $m\ge 1$. Suppose that the solution is known modulo $\frak m^{N-1}$ at every recursion step, and has been determined modulo $\frak m^N$ through recursion step $m$. Assume also the rationality
and the pole position for these known classes.
To determine the $Q_{m+1}^{[N]}$, we consider the following three cases, where all identities are understood in
$\left(
  (\Uph/\mathfrak m^N)
  \otimes\End(\mathbb C^n)\otimes\End(\C^n)
 \right)[t,(t-1)^{-1}].$

Case I: If $k\ne l$, taking the $(ijkl)$-entry of
\eqref{Qtrecur} gives an identity of the form
\begin{equation}\label{casekl}
 t^{-p-1}(u_k-u_l)Q_{m+1,ijkl}^{[N]}
 =\mathcal{F}_{\le m}^{[N]}(t),
\end{equation}
where the right hand side only involves the already known coefficients $Q_a^{[N]}$ with $a<m+1$, that are rational in $t$ with poles at $\{1,\infty\}$. Hence this $(ijkl)$-entry is uniquely
determined and it is rational in $t$ with poles contained in $\{1,\infty\}$. 

Case II: If $k=l$ and $i\ne j$, the corresponding entry of
\eqref{Qzrecur} gives
\begin{equation}\label{eq:solve-ij}
 (u_i-u_j)Q_{m+1,ijkk}^{[N]}
 =\mathcal{G}_{\le m}^{[N]}(t),
\end{equation}
where the right-hand side is again already known. The entry is therefore uniquely determined and is rational in $t$ with poles contained in $\{1,\infty\}$. 

Case III. If $i=j$ and $k=l$, taking the
$(iikk)$-entry of \eqref{Qzrecur} $p$ steps later,
(i.e., replace $m$ in \eqref{Qzrecur} by $m+p$) gives
\begin{equation}\label{eq:solve-diagonal-delayed}
0=(m+1)Q_{m+1,iikk}^{[N]}+\mathcal{H}_{\le m}^{[N]}(t)+\mathcal{H}_{\ge m+1}^{[N]}(t).
\end{equation}
Here $\mathcal{H}_{\le m}^{[N]}$ and $\mathcal{H}_{\ge m+1}^{[N]}$ are the terms involving $Q_a^{[N]}$ with $a\le m$ and $a\ge m+1$ respectively. 
In this $m+1+p$ step equation, $\mathcal{H}_{\ge m+1}^{[N]}(t)$ contains $Q_{m+1}$ and the future coefficients
$Q_{m+2},\ldots,Q_{m+p}$. However, each of them is multiplied by one of
elements $D_l$, $\hbar T_{[l]}$, $\hbar T$, $\hbar P$, or
$\hbar\delta P$ in $\frak m$, therefore their contribution modulo $\frak m^N$
depends only on the reduction $Q_{m+1}^{[N-1]},\ldots,Q_{m+p}^{[N-1]}$ modulo $\frak m^{N-1}$. These
components are known from the induction. 
Thus the term $\mathcal{H}_{\le m}^{[N]}(t)+\mathcal{H}_{\ge m+1}^{[N]}(t)$ in
\eqref{eq:solve-diagonal-delayed} is already known and is rational with only possible poles at $t=1,\infty$. For $m+1> 0$,
the diagonal $(iikk)$-entry is uniquely determined and is rational in $t$ with poles contained in $\{1,\infty\}$. 

The three cases therefore uniquely determine every $Q_m^{[N]}$ recursively in the finite quotient modulo $\frak m^N$. One checks that the compatibility of the two variable system ensures that the $z$- and $t$-recursions \eqref{Qzrecur} and \eqref{Qtrecur} can be applied alternately.  The same double induction gives estimate \eqref{Qinfdegree} and \eqref{Q1degree}. It completes the
proof.
\end{proof}
\begin{rmk}
One can prove a stronger result, that for a fixed $m$,
$Q_m(t)$ is a rational function of $t$ and has a finite expression in the (uncompleted) coefficient
algebra $\Uph$. The finite quotient statement proved above is enough for the use below.
\end{rmk}

\subsection{Factorization of the formal solution $\widehat{Y}(z,t)$ as $t\rightarrow \infty$}\label{factYinf}
In the following proposition, 
the meaning of the formal series appearing and the identity \eqref{Y=W1} will be explained in the proof.
\begin{pro}\label{YWFfactorization}
After the substitution
$z_1=z$, $z_2=zt$, we have 
\begin{equation}\label{Y=W1}
    \widehat{Y}(z,t)= {_1\widehat{W}}(z_1;z_2)\widehat{F}^{(2)}(z_2)\cdot \left(\frac{t-1}{1-t}\right)^{\hbar\delta P} , \text{ with } \ |t|>1.
\end{equation}
Here 
\begin{equation}\label{formalKZ1}
_1\widehat{W}(z_1;z_2)=\left(1+\sum_{m\ge 1}K_m(z_2)z_1^{m}\right)\cdot e^{\int D^{(1)}(z_1)dz_1}z_1^{D^{(1)}_1+\hbar\delta P}(z_2-z_1)^{-\hbar\delta P}z_2^{\hbar\delta P}
\end{equation}
is the unique formal solution of 
the equation \eqref{kKZ1} of the prescribed form, such that each coefficient $K_m$ is a polynomial of degree at most $m-1$ in $z_2^{-1}$
\begin{equation}\label{Kmpoly}
    K_m(z_2)=\sum_{i=0}^{m-1} K_{m,i} z_2^{-i}\in \wUph\otimes {\rm End}(\IC^n)\otimes {\rm End}(\IC^n)[z_2^{-1}],
\end{equation}
and $\widehat{F}^{(2)}(z_2)$ is the unique formal solution of the equation 
\begin{equation}\label{KZin2}
\frac{{d F}}{dz_2}= \left(\frac{u^{(2)}}{z_2^{p+1}}+\frac{\hbar T_{[p]}^{(2)}}{z_2^p}+\cdots+\frac{\hbar T_{[2]}^{(2)}}{z_2^2}-\frac{\hbar T^{(2)}}{z_2}\right)\cdot F.
\end{equation}
Note that $\widehat{F}^{(2)}(z_2)$ is just the extension of the formal solution $\widehat{F}(z=z_2)$ from $\wUph\otimes {\rm End}(\IC^n)$ to $\wUph\otimes {\rm End}(\IC^n)\otimes {\rm End}(\IC^n)$.
\end{pro}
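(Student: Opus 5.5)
We will prove Proposition \ref{YWFfactorization} in three steps: construct ${_1\widehat W}$ directly, show that the right-hand side of \eqref{Y=W1} solves the system \eqref{ztkKZ1}--\eqref{ztkKZ2}, and then invoke the uniqueness statement of Proposition \ref{t1fact}.

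\emph{Step 1: existence and uniqueness of ${_1\widehat W}$.} We regard \eqref{kKZ1} as an ODE in $z_1$ with parameter $z_2$ and expand $\frac{\hbar P}{z_2-z_1}=\hbar P\sum_{k\ge0}z_1^kz_2^{-k-1}$, which is the expansion valid in $|z_1|<|z_2|$, i.e.\ $|t|>1$. The coefficient of \eqref{kKZ1} is then of the same shape as \eqref{hqeq} in the first tensor factor. Its leading term is $u^{(1)}$, its residue is $-\hbar T^{(1)}+\hbar P$, and it has an extra holomorphic part $\hbar P\sum_k z_1^k z_2^{-k-1}$. We run the recursion \eqref{Hrecursion} with the double induction on the step $m$ and the $\mathfrak m$-adic order $N$. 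Off-diagonal entries (in the first tensor factor) are obtained by inverting $\mathrm{ad}_{u^{(1)}}$. The formal type $D_p^{(1)},\ldots,D_2^{(1)}$ coincides with that of Theorem \ref{oneformal}, since only the polar part up to order $z_1^{-2}$ enters it. The residue becomes $D_1^{(1)}+\hbar\delta P$: the diagonal part of $\hbar P$ in the first factor is $\hbar\delta P$ by definition of $\delta P$, and the only new contribution in the first-factor diagonal part is this one. The diagonal-part equations $p$ steps later are solved exactly as in the second proof of Theorem \ref{oneformal}. The factor $(z_2-z_1)^{-\hbar\delta P}z_2^{\hbar\delta P}=(1-z_1/z_2)^{-\hbar\delta P}$ is a power series in $z_1/z_2$. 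We absorb it by noting that it accounts for the diagonal part of the regular term $\hbar P/(z_2-z_1)$. The polynomiality \eqref{Kmpoly} follows by induction: the term $z_1^kz_2^{-k-1}$ contributes to $K_m$ only through $z_2^{-i}$ with $i\le m-1$, because each power of $z_2^{-1}$ comes with at least one power of $z_1$.

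\emph{Step 2: the product solves both equations.} Here we use that $\widehat F^{(2)}(z_2)$ depends only on $z_2$, and that its coefficients commute with those of \eqref{kKZ1} up to the correct corrections by Lemma \ref{12commute}. Equation \eqref{kKZ1} for the product is immediate, since $\widehat F^{(2)}$ multiplies on the right and is independent of $z_1$. For \eqref{kKZ2}, set $G:={_1\widehat W}\,\widehat F^{(2)}$ and compute
\[
\partial_{z_2}G\,G^{-1}=\partial_{z_2}{_1\widehat W}\,{_1\widehat W}^{-1}+{_1\widehat W}\,B^{(2)}(z_2)\,{_1\widehat W}^{-1}.
\]
Let $\Phi:=\partial_{z_2}G\,G^{-1}-B(z_1,z_2)$, where $B$ is the coefficient of \eqref{kKZ2}. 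The compatibility established in the Proposition on \eqref{kKZ1}--\eqref{kKZ2} shows that $\Phi$ satisfies $\partial_{z_1}\Phi=[A,\Phi]$, since ${_1\widehat W}$ solves \eqref{kKZ1}. Hence ${_1\widehat W}^{-1}\Phi\,{_1\widehat W}$ is independent of $z_1$. Comparing at $z_1\to0$, the normalization $1+O(z_1)$ together with the exponential factors shows this constant vanishes, provided the diagonal factors match. They do match: the factor $z_2^{\hbar\delta P}$ in ${_1\widehat W}$ is chosen so that the $z_1^0$ part reproduces $\frac{\hbar P}{z_2}$ restricted to its diagonal part, and $\frac{\hbar P}{z_1-z_2}$ is accounted for by $(z_2-z_1)^{-\hbar\delta P}$.

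\emph{Step 3: identification with $\widehat Y$.} We substitute $z_1=z$, $z_2=zt$ and expand in $z$. The exponential factors recombine: $e^{\int D^{(1)}}e^{\int D^{(2)}(z_2)dz_2}$ matches the factor in \eqref{Yformalsum}. The powers combine as $z^{D_1^{(1)}+\hbar\delta P}(zt)^{D_1^{(2)}}(zt)^{\hbar\delta P}(z(t-1))^{-\hbar\delta P}$, which gives $z^{D_1^{(1)}+D_1^{(2)}+\hbar\delta P}\,t^{D_1^{(2)}+\hbar\delta P}(t-1)^{-\hbar\delta P}$. The last factor $\left(\frac{t-1}{1-t}\right)^{\hbar\delta P}$ converts $(t-1)^{-\hbar\delta P}$ into $(1-t)^{-\hbar\delta P}$. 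We then commute the diagonal exponential factors of ${_1\widehat W}$ past $\widehat H^{(2)}(zt)$, using that $D^{(1)}_r$ for $r\ge2$ is central (Lemma \ref{Dcommute}) and that $D_1^{(1)}+\hbar\delta P$ acts on second-factor entries by the twisting in \eqref{D1T}, \eqref{D1T1}. After this reordering the product has the form $\widehat Q'(z,t)\cdot(\text{diagonal factor of }\eqref{Yformalsum})$, where each $Q'_m$ is a polynomial in $t$ with degree at most $m$, because $K_m(zt)z^m$ and $H_k^{(2)}(zt)^k$ only produce monomials $z^mt^j$ with $j\le m$. Since both sides solve \eqref{ztkKZ1}--\eqref{ztkKZ2}, uniqueness in Proposition \ref{t1fact} gives \eqref{Y=W1}, with the expansion understood in $|t|>1$.

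The main obstacle is the commutation in Step 3: moving $z_1^{D_1^{(1)}+\hbar\delta P}$ past $\widehat H^{(2)}(z_2)$, whose entries do not commute with $D_1^{(1)}$ in general. We would first try to show that $D_1^{(1)}+D_1^{(2)}+\hbar\delta P$ commutes with all coefficients of $\widehat H^{(2)}$ conjugated appropriately. Alternatively, we would verify directly that the reordered series still has coefficients in $z^m$ which are polynomial in $t$, so that uniqueness applies.
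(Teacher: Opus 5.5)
\textbf{Approach.} Your route is the reverse of the paper's. The paper \emph{defines} ${}_1\widehat W$ as the ratio $\widehat Y\,(\tfrac{t-1}{1-t})^{-\hbar\delta P}\widehat F^{(2)}(z_2)^{-1}$. Then \eqref{kKZ1} and \eqref{Y=W1} hold by construction, no analogue of your Step 2 is needed, and only the shape of the ratio must be checked. Constructing first and identifying by uniqueness could work, but there is a genuine missing idea.

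\textbf{The missing commutation.} The step you call the ``main obstacle'' is needed already in Step 2, not only in Step 3. Write $B^{(2)}(z_2)$ for the coefficient of \eqref{KZin2}. Then ${}_1\widehat W^{-1}\Phi\,{}_1\widehat W$ equals $B^{(2)}(z_2)$ minus the conjugate, by the singular factor, of $B^{(2)}(z_2)+O(z_1)$. Your $z_1\to0$ comparison kills this only if $e^{\int D^{(1)}}z_1^{D_1^{(1)}+\hbar\delta P}$ commutes with $B^{(2)}(z_2)$.

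It does, and this is the point you misread. By \eqref{D1T}--\eqref{D1T1}, $[D_1^{(1)},\hbar T^{(2)}_{[l]}]=-\hbar\sum_{k,i,j}(\delta_{ki}-\delta_{kj})\,\hbar e^{(l-1)}_{ij}\otimes E_{kk}\otimes E_{ij}$. On the other hand, $[\hbar\delta P,\hbar T^{(2)}_{[l]}]$ is the same sum with the opposite sign, and likewise for $\hbar T^{(2)}$. So adding $\hbar\delta P$ cancels the twisting rather than producing it, and $D_1^{(1)}+\hbar\delta P$ commutes with every coefficient of \eqref{KZin2}.

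Combine this with the centrality of $D^{(1)}_r$ for $r\ge 2$ (Lemma \ref{Dcommute}) and with Lemma \ref{12commute}. The whole factor then commutes with $\widehat H^{(2)}(z_2)$, by uniqueness of the normalized formal solution. Only $(1-z_1/z_2)^{\pm\hbar\delta P}$ fails to commute, and it is harmless because it is $1+O(z_1)$. This is exactly how the paper reaches \eqref{Wztform}. Your fallback target $D_1^{(1)}+D_1^{(2)}+\hbar\delta P$ is the wrong one: $D_1^{(2)}$ does not commute with the off-diagonal entries of $H^{(2)}_m$.

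\textbf{The degree bound.} Your counting argument does not prove \eqref{Kmpoly}. It works for the first-factor off-diagonal entries. However, the diagonal entries of $K_{m+1}$ are fixed by the recursion $p$ steps later. That equation contains the future coefficients $K_{m+2},\dots,K_{m+p}$ multiplied by $\hbar T^{(1)}_{[k]}$. Their inductive bound $\deg_{z_2^{-1}}K_{m+k}\le m+k-1$ exceeds $m$, so the induction does not close.

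The paper avoids this by first splitting off $\widehat H^{(1)}(z_1)$, writing ${}_1\widehat W=\widehat H^{(1)}\widehat V\cdots$ as in \eqref{WHV}. In the $\widehat V$-recursion \eqref{Vrecursion}, the future terms appear only as $[V_{m+a},D^{(1)}_a]$ with $a\ge2$. Their first-factor diagonal entries vanish because $(D_a)_{ii}$ is central. What remains is $(m+1)\mathrm{id}$ plus a nilpotent, degree-preserving operator acting on $V_{m+1}$. Your Step 3 needs this bound so that $\widehat Q'$ has no negative powers of $z$, so the gap propagates.

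\textbf{A smaller point.} $Q'_m$ is not a polynomial in $t$, since $K_m(zt)z^m$ contains negative powers of $t$. It is a Laurent series in $t^{-1}$ of $t$-degree at most $m$. You should then note that the uniqueness argument of Proposition \ref{t1fact} (Cases I--III) applies verbatim to expansions at $t=\infty$.
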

\begin{proof} 
We work first in the finite quotient $\Uph/\frak m^N$. All expansions and products below are
taken in the algebra of formal Laurent series in two variables $z_1/z_2$ and $z_2$
\[\left(\Uph/\mathfrak m^N
 \otimes\End(\mathbb C^n)^{\otimes2}\right)[[z_1/z_2,z_2]][(z_1/z_2)^{-1},z_2^{-1}].\]

Let us consider the ratio 
\[
_1\widehat{W}(z_1;z_2):=\widehat{Y}(z,t)\cdot \left(\frac{t-1}{1-t}\right)^{-\hbar\delta P}\cdot \widehat{F}^{(2)}(z_2)^{-1}=\widehat{Q}(z,t)\cdot z_2^{\hbar\delta P}(z_2-z_1)^{-\hbar\delta P}e^{\int D^{(1)}(z_1)}z_1^{D^{(1)}_1+\hbar\delta P}\cdot (\widehat{H}^{(2)})^{-1}.\]
It satisfies the equation \eqref{kKZ1}. Lemma \ref{Dcommute} and Lemma \ref{12commute} imply the commutator
\[\left[\frac{u^{(1)}}{z_1^{p+1}}+\frac{D^{(1)}_p}{z_1^p}+\cdots +\frac{D^{(1)}_2}{z_1^2}+\frac{D^{(1)}_1+\hbar\delta P}{z_1}, \ \ \frac{u^{(2)}}{z_2^{p+1}}+\frac{\hbar T_{[p]}^{(2)}}{z_2^p}+\cdots+\frac{\hbar T_{[2]}^{(2)}}{z_2^2}-\frac{\hbar T^{(2)}}{z_2}\right]=0. \]
Thus, $e^{\int D^{(1)}(z_1)}z_1^{D^{(1)}_1+\hbar\delta P}$ commutes with the coefficient matrix of the equation \eqref{KZin2}. Since it also commutes with the regularized initial condition, it commutes with the formal power series part $\widehat{H}^{(2)}$ of the solution $\widehat{F}^{(2)}(z_2)$. Using this, we write 
\begin{align}\nonumber
_1\widehat{W}&=\widehat{Q}(z,t)\cdot \left(1-\frac{z_1}{z_2}\right)^{-\hbar\delta P}\cdot (\widehat{H}^{(2)}(z_2))^{-1} \cdot e^{\int D^{(1)}(z_1)}z_1^{D^{(1)}_1+\hbar\delta P}\\ \label{Wztform}
&=\widehat{Q}(z,t) \left(1-\frac{z_1}{z_2}\right)^{-\hbar\delta P} (\widehat{H}^{(2)}(z_2))^{-1} \left(1-\frac{z_1}{z_2}\right)^{\hbar\delta P} \cdot e^{\int D^{(1)}(z_1)dz_1}z_1^{D^{(1)}_1+\hbar\delta P}(z_2-z_1)^{-\hbar\delta P}z_2^{\hbar\delta P}
\end{align}
Recall that 
\[\widehat{Q}=1+Q_1(t)z+Q_2(t)z^{2}+\cdot\cdot\cdot,\]
with each $Q_m(t)\in \left(
\Uph/\mathfrak m^N
\otimes\End(\mathbb C^n)^{\otimes 2}\right)[t,(t-1)^{-1}].$ Expanding $Q_m(t)=\sum_{k\le m}Q_{m,k}t^k$ at $t=\infty$, and replacing $t$ by $t=z_2/z_1$ we get
\begin{equation}\label{Qz1z2}
    \widehat{Q}\left(z_1,\frac{z_2}{z_1}\right)=\sum_{m\ge 0}\sum_{k\le m}Q_{m,k}z_2^k z_1^{m-k}.
\end{equation}
Note that $\widehat{Y}\left(z_1,\frac{z_2}{z_1}\right)$ satisfies the equation \eqref{kKZ2}.
Comparing the $z_1^0$ coefficient of the equation \eqref{kKZ2} shows that $
\left.
\widehat Q\!\left(z_1,\frac{z_2}{z_1}\right)
\right|_{z_1=0}
=
1+\sum_{m\ge1}Q_{m,m}z_2^m$ satisfies the equation \[
\begin{aligned}
\frac{d\left.
\widehat Q
\right|_{z_1=0}}{dz_2}
=
\left(
\frac{u^{(2)}}{z_2^{p+1}}
+\sum_{r=2}^{p}\frac{\hbar T_{[r]}^{(2)}}{z_2^r}
-\frac{\hbar T^{(2)}}{z_2}
\right)\cdot \left.
\widehat Q
\right|_{z_1=0}-\left.
\widehat Q
\right|_{z_1=0}\cdot
\left(
\frac{u^{(2)}}{z_2^{p+1}}
+\sum_{r=2}^{p}\frac{D_r^{(2)}}{z_2^r}
+\frac{D_1^{(2)}}{z_2}
\right).
\end{aligned}
\]
Since by definition, $\widehat{H}^{(2)}(z_2)=1+O(z_2)$ satisfies the same equation with the same initial condition, we have 
\begin{equation}\label{Qz1z22}
\left.
\widehat Q\!\left(z_1,\frac{z_2}{z_1}\right)
\right|_{z_1=0}=\widehat{H}^{(2)}(z_2).\end{equation}

Now by \eqref{Qz1z2} and \eqref{Qz1z22}, we have $\widehat{Q}\left(z_1,\frac{z_2}{z_1}\right)=\widehat{H}^{(2)}(z_2)+O(z_1)$. Together with the expansion 
\begin{equation}\label{texp}
\left(1-\frac{z_1}{z_2}\right)^{\hbar\delta P}=1+O(z_1),
\end{equation}
we can rewrite \eqref{Wztform} in the form of \eqref{formalKZ1}, where at this stage each coefficient $K_m(z_2)=\sum_{i\le m-1} K_{m,i} z_2^{-i}$ is a Laurent series. 
The rest is to show that for each $m$, $K_{m,i}=0$ for $i<0$.

Let $\widehat{H}(z)=1+O(z)$ be the formal power series part of $F(z)$ from Theorem
\ref{oneformal}.
Let us lift $\widehat{H}(z_1)$ to the first tensor factor to get $\widehat{H}^{(1)}(z_1)$, and define
\[(\widehat{H}^{(1)}(z_1))^{-1}\cdot P\cdot \widehat{H}^{(1)}(z_1)=\sum_{r\ge 0}P_r z_1^r,\]
with $P_0=P$.
Then we take $\widehat{V}(z_1;z_2)=1+\sum_{r\ge1}V_r(z_2)z_1^r$ such that 
\begin{equation}\label{WHV}
_1\widehat{W}(z_1;z_2)=\widehat{H}^{(1)}(z_1)\cdot \widehat{V}(z_1;z_2)\cdot
e^{\int D^{(1)}(z_1)dz_1}
z_1^{D_1^{(1)}+\hbar\delta P}
(z_2-z_1)^{-\hbar\delta P}z_2^{\hbar\delta P}
\end{equation}
solves \eqref{kKZ1}. 
Expanding 
\begin{equation}\label{Pexp}
\frac{P}{z_2-z_1}=\frac{P}{z_2}\cdot \left(1+\sum_{k\ge 1}\left(\frac{z_1}{z_2}\right)^k\right) \ \ \text{for } |z_1|<|z_2|
\end{equation} 
and 
comparing the coefficient in \eqref{kKZ1} give
\begin{align}\nonumber
[u^{(1)},V_{m+1}]
=&(m+1-p)V_{m+1-p}
+\sum_{a=1}^{p}
[V_{m+a-p},D_a^{(1)}] \\ \label{Vrecursion}
&-\hbar
\sum_{l=-1}^{m-p}
z_2^{-(l+1)}
\left(
\sum_{r=0}^{m-p-l}
P_rV_{m-p-l-r}
-
V_{m-p-l}\delta P
\right).
\end{align}
Here we take the convention $V_0=1$ and $V_r=0$ for $r<0$, $P_0=P$.

Let us assume inductively that for all $j\le m$, $V_{j}$ has the degree bound ${\rm deg}_{z_2^{-1}}(V_j)\le j-1$.
Then for $(ijkl)$-entries of $V_{m+1}$ with $i\ne j$, the left hand side of \eqref{Vrecursion} is $(u_i-u_j)(V_{m+1})_{ijkl}$, and the terms on the right are determined by lower orders $V_j$
uniquely. For the degree bound, just notice that by induction every term $V_j$, for $j\ge 1$, on the right has $z_2^{-1}$ degree less than or equal to $j-1$, and $V_0=1$.

For the $(iikl)$-entry of $V_{m+1}$, taking the
$(iikl)$-entry of \eqref{Vrecursion} $p$ steps later
(replace $m$ in \eqref{Vrecursion} by $m+p$) gives
\begin{equation*}
(m+1)(V_{m+1})_{iikl}-{\rm ad}_{(D_1)_{ii}}(V_{m+1})_{iikl}-(\hbar PV_{m+1}-V_{m+1}\hbar \delta P)_{iikl}
=
\text{terms involving only $V_j$ with $j\le m$}.
\end{equation*}
Here we use Lemma
\ref{Dcommute} to eliminate the contribution of terms $V_i$ with $i\ge m+1$,
\[
[V_{m+a},D_a^{(1)}]_{iikl}=[(V_{m+a})_{iikl}, (D_a)_{ii}]
=0, \qquad a=2,\ldots,p.
\]
The terms $D_1$ and $\hbar P, \hbar \delta P$ multiplying $V_{m+1}$ belong to $\frak m$ and preserve the space of polynomials in $z_2^{-1}$
of degree at most
$m$. Hence the left hand side is seen as action of an invertible operator on $V_{m+1}$, which is $(m + 1){\rm id}$ plus a nilpotent operator, in
every finite quotient. The right hand side has degree at most $m$ by the induction hypothesis. It proves the
required degree bound for the $(iikl)$-entries.

Therefore, the coefficients of the formal power series part of ${}_1\widehat W(z_1;z_2)$ are
\[K_m(z_2)=\sum_{i=0}^mH_i^{(1)}V_{m-i}(z_2).\]
Since $H_i^{(1)}$ is independent of $z_2$, ${\rm deg}_{z_2^{-1}}(K_m)\le m-1$ as claimed.
\end{proof}
\begin{rmk}\label{tbranchlem}
In the proof of Proposition \ref{YWFfactorization}, we have used the expansion \ref{texp} at $z_1/z_2=0$. Accordingly, the factorizations at $t=0$ and $t=\infty$ fix, respectively,
the branches of $\log(1-t)$ and $\log(t-1)$ by
\[
\left.\log(1-t)\right|_{t=0}=0
\quad \text{ and } \quad 
\left.\log(1-t^{-1})\right|_{t^{-1}=0}=0.
\]
If the $\log(t-1)-\log(1-t)=(2\nu+1)\pi\mathi$ 
for some $\nu\in\mathbb Z$, then the last factor in \eqref{Y=W1}
\[
\left(\frac{t-1}{1-t}\right)^{\hbar\delta P}
=
e^{(2\nu+1)\pi\mathi\hbar\delta P}.
\]
\end{rmk}
\begin{rmk}
From the proof, we see that the factorization above also determines the boundary value of
$\widehat Q$ at $t=\infty$. That is
\[
 \left.\widehat Q(z_1,z_2/z_1)\right|_{z_1=0}= \left.\widehat Q(z_2/t,t)\right|_{t^{-1}=0}
 =
 \widehat H^{(2)}(z_2).
\]
Here the restriction means taking the constant coefficient in
$t^{-1}=z_1/z_2$, with $z_2$ retained as a formal variable.
\end{rmk}

\subsection{Factorization of the formal solution $\widehat{Y}(z,t)$ as $t\rightarrow 0$}
Parallel to Proposition \ref{YWFfactorization}, we have
\begin{pro}\label{t2fact}
 After substituting $z=z_1$ and $t=z_2/z_1$, we have
\begin{equation}\label{Y=W2}
\widehat{Y}(z,t)= {_2\widehat{W}}(z_2;z_1)\widehat{F}^{(1)}(z_1), \text{ with } \ 0<|t|<1.
\end{equation}
Here 
\begin{equation}
_2\widehat{W}(z_2;z_1)=\left(1+\sum_{m\ge 1}K'_m(z_1)z_2^{m}\right)\cdot e^{\int D^{(2)}(z_2)dz_2}z_2^{D^{(2)}_1+\hbar\delta P}\left(z_1-{z_2}\right)^{-\hbar\delta P} z_1^{\hbar\delta P}, 
\end{equation}
is the unique formal solution of the equation \eqref{kKZ2} with the prescribed form, and each coefficient \begin{equation}\label{K'm}
K'_m(z_1)=\sum_{i=0}^{m-1} K'_{m,i} z_1^{-i}\in \wUph\otimes {\rm End}(\IC^n)\otimes {\rm End}(\IC^n)[z_1^{-1}]
\end{equation} 
is a polynomial of degree at most $m-1$ in $z_1^{-1}$, and  $\widehat{F}^{(1)}(z_1)$ is the unique formal solution of the equation 
\begin{equation}\label{KZin1}
 \frac{{d F}}{dz_1}= \left( \frac{u^{(1)}}{z_1^{p+1}}+\frac{\hbar T_{[p]}^{(1)}}{z_1^p}+\cdots+\frac{\hbar T_{[2]}^{(1)}}{z_1^2}-\frac{\hbar T^{(1)}}{z_1}\right)\cdot F,
\end{equation}
\end{pro}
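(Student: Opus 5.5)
The argument mirrors the proof of Proposition \ref{YWFfactorization}, with the roles of $z_1$ and $z_2$ exchanged. One difference is that the singular factor of $\widehat Y$ in \eqref{Yformalsum} is already adapted to the expansion at $t=0$. There $\log(1-t)$ is normalized by $\log(1-t)|_{t=0}=0$ (Remark \ref{tbranchlem}), so no extra constant factor $\left(\frac{t-1}{1-t}\right)^{\hbar\delta P}$ appears.

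\textbf{Reduction.} Work in a fixed finite quotient $\Uph/\mathfrak m^N$. Take formal Laurent series in $t=z_2/z_1$ and $z_1$, expanded at $t=0$, and then pass to the inverse limit as before. Define
\[
{}_2\widehat W:=\widehat Y(z,t)\cdot \widehat F^{(1)}(z_1)^{-1}.
\]
It solves \eqref{kKZ2}, because $\widehat F^{(1)}$ depends only on $z_1$ and acts from the right.

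Rewrite the singular factor of $\widehat Y$ in the variables $z_1,z_2$, using $z=z_1$, $t=z_2/z_1$:
\[
e^{\int (D^{(1)}(z)+tD^{(2)}(zt))dz}=e^{\int D^{(1)}(z_1)}e^{\int D^{(2)}(z_2)}.
\]
The power factors combine as
\[
z^{D_1^{(1)}+D_1^{(2)}+\hbar\delta P}\,t^{D_1^{(2)}+\hbar\delta P}(1-t)^{-\hbar\delta P}
= z_1^{D_1^{(1)}}\,z_2^{D_1^{(2)}+\hbar\delta P}(z_1-z_2)^{-\hbar\delta P}z_1^{\hbar\delta P}.
\]
By Lemma \ref{Dcommute} and Lemma \ref{12commute}, the $z_1$-exponential factor commutes with the coefficient of \eqref{KZin1}. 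Hence it commutes with $\widehat H^{(1)}$, exactly as in the $t\to\infty$ case, and we get
\[
{}_2\widehat W=\widehat Q\,(1-t)^{-\hbar\delta P}(\widehat H^{(1)}(z_1))^{-1}(1-t)^{\hbar\delta P}\cdot e^{\int D^{(2)}(z_2)}z_2^{D_1^{(2)}+\hbar\delta P}(z_1-z_2)^{-\hbar\delta P}z_1^{\hbar\delta P}.
\]

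\textbf{Boundary value at $t=0$.} Since $Q_m(t)$ is a polynomial in $t$ and $(t-1)^{-1}$ (Proposition \ref{t1fact}), it is regular at $t=0$. Expand $Q_m(t)=\sum_{k\ge0}Q_{m,k}t^k$ and substitute $t=z_2/z_1$. This gives $\widehat Q=\sum_m\sum_{k\ge0}Q_{m,k}z_2^kz_1^{m-k}$, where negative powers of $z_1$ are accompanied by positive powers of $z_2$.

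Take the $z_2^0$ coefficient of \eqref{kKZ1} satisfied by $\widehat Y$. It shows that $\widehat Q|_{z_2=0}=1+\sum_mQ_{m,0}z_1^m$ solves the gauge equation defining $\widehat H^{(1)}$, with the same normalization. By the uniqueness in Theorem \ref{oneformal}, $\widehat Q|_{z_2=0}=\widehat H^{(1)}(z_1)$. Together with $(1-t)^{\pm\hbar\delta P}=1+O(z_2)$, this puts ${}_2\widehat W$ in the stated form. At this stage each $K'_m(z_1)$ is only known to be a Laurent series in $z_1^{-1}$ with bounded positive part.

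\textbf{Degree bound.} To get \eqref{K'm}, factor
\[
{}_2\widehat W=\widehat H^{(2)}(z_2)\,\widehat V'(z_2;z_1)\cdot(\text{singular factor}).
\]
Expand $P/(z_1-z_2)=z_1^{-1}\sum_k(z_2/z_1)^k$ in \eqref{kKZ2}. This yields the analogue of \eqref{Vrecursion} with indices $1\leftrightarrow2$, and we run the same double induction.
\begin{itemize}
\item Off-diagonal entries ($k\ne l$ in the second slot) are solved directly by $\mathrm{ad}_{u^{(2)}}$.
\item Diagonal entries use the $p$-step delayed equation. There the operator acting on $V'_{m+1}$ is $(m+1)\mathrm{id}$ plus terms in $\mathfrak m$, hence invertible in each finite quotient. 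The higher $V'_{m+a}$ drop out by centrality of the $(D_a)_{kk}$, $a\ge2$ (Lemma \ref{Dcommute}).
\item Each factor $z_1^{-(l+1)}$ in the recursion comes with a drop in $z_2$-order, which preserves the degree bound $\deg_{z_1^{-1}}V'_j\le j-1$.
\end{itemize}
Then $K'_m=\sum_i H_i^{(2)}V'_{m-i}$ has degree at most $m-1$ in $z_1^{-1}$. Uniqueness follows from the same recursion.

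\textbf{Main obstacle.} The delicate step is the degree bookkeeping in the diagonal delayed equation. One must check that the contributions of $D_1^{(2)}$, $\hbar P$ and $\hbar\delta P$ preserve the space of polynomials in $z_1^{-1}$ of degree at most $m$. One must also check that the other terms of that equation, which come from lower steps, stay within this degree bound.
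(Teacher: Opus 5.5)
Your proposal is correct and follows the paper's route, which is simply to run the proof of Proposition~\ref{YWFfactorization} with $z_1\leftrightarrow z_2$: the boundary value $\widehat Q|_{z_2=0}=\widehat H^{(1)}(z_1)$ comes from the $z_2^0$ part of \eqref{kKZ1}, and the factorization $\widehat H^{(2)}\widehat V'$ with the $p$-step delayed diagonal recursion gives $\deg_{z_1^{-1}}K'_m\le m-1$. One slip to fix: the factor that must commute with $\widehat H^{(1)}$ is the $z_2$-factor $e^{\int D^{(2)}(z_2)dz_2}z_2^{D_1^{(2)}+\hbar\delta P}$, whose $\hbar\delta P$ cancels $[D_1^{(2)},\hbar T^{(1)}_{[j]}]$. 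It is not the $z_1$-factor, which does not commute with the coefficient of \eqref{KZin1} and instead just cancels against the singular part of $\widehat F^{(1)}(z_1)^{-1}$; your displayed formula for ${}_2\widehat W$ already uses the correct factor.
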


Note that $\widehat{F}^{(1)}(z_1)$ is the extension of the formal solution $\widehat{F}(z=z_1)$, given in \eqref{formalsum}, from $\Uph\otimes {\rm End}(\IC^n)$ to $\Uph\otimes {\rm End}(\IC^n)\otimes {\rm End}(\IC^n)$. It follows from the similar argument in the proof of Proposition \ref{YWFfactorization} that
the product $_2\widehat{W}(z_2;z_1)\hat{F}^{(1)}(z_1)$ 
 takes the form 
\begin{align}\nonumber
&_2\widehat{W}(z_2;z_1)\widehat{F}^{(1)}(z_1)\\
=&\left(1+\sum_{i=0}^\infty \left(\sum_{j=-i+1}^\infty g_{ij} z_1^{j}\right) z_2^{i}\right)\cdot e^{\int D^{(1)}(z_1)dz_1}e^{\int D^{(2)}(z_2)dz_2}z_1^{D_1^{(1)}+\hbar\delta P}z_2^{D_1^{(2)}+\hbar\delta P}(z_1-z_2)^{-\hbar\delta P},
\end{align}
where each coefficient $g_{ij}\in \Uph\otimes {\rm End}(\IC^n)\otimes {\rm End}(\IC^n)$. The identity \eqref{Y=W2} is then understood after expanding each rational coefficient $Q_m(t)$ of $\widehat{Y}$ for $m\ge 1$ as a Laurent series at $t=0$, and substituting $z=z_1$ and $t=z_2/z_1$.

\subsection{Factorization of the formal solution $\widehat{Y}(z,t)$ as $t\rightarrow 1$}\label{factY1}
Let
\begin{equation}
    \widehat{X}(\omega)=(1+\sum_{m=1} C_m \omega^m) \cdot e^{\frac{-u^{(2)}}{p\omega^p}}\omega^{p\hbar \delta P}
\end{equation}
be the unique formal solution of the prescribed form of the equation 
\begin{align}\label{Xeq}
\frac{{dX}}{d \omega}=\left(\frac{ u^{(2)}}{\omega^{p+1}}+p\frac{\hbar P}{\omega}\right)X.
\end{align}
(This equation is exactly solvable and its Stokes matrices have a closed expression, see Lemma \ref{PR0}.)

We will construct, in the proof of Proposition \ref{t3fact}, the unique formal solution $\widehat{U}(z,1)=\widehat I_0(z)\cdot E_U(z)$, with the regular part $\widehat I_0(z)=1+O(z)$, of the equation \eqref{ztkKZ1} at $t=1$ \begin{align}\label{Uzeq}
\frac{d \widehat{U}(z,1)}{dz}= \left(\frac{u^{(1)}+u^{(2)}}{z^{p+1}}+\hbar \sum_{j=2}^p\frac{T_{[j]}^{(1)}+T_{[j]}^{(2)}}{z^j}-\hbar \frac{T^{(1)}+T^{(2)}}{z}+\frac{\hbar P}{z}\right)\cdot \widehat{U}(z,1)-\widehat{U}(z,1)\cdot \frac{p\hbar P}{z},
\end{align}
where the singular term
\begin{align}
 E_U(z)=
 \exp\!\left(-\frac{u^{(1)}+u^{(2)}}{pz^p}
 -\sum_{r=2}^p
 \frac{D_r^{(1)}+D_r^{(2)}}{(r-1)z^{r-1}}\right)\cdot
 z^{D_1^{(1)}+D_1^{(2)}-(p-1)\hbar\delta P}.
\end{align}

\begin{pro}\label{t3fact}
We have
\begin{equation}\label{Y=UX}
    \widehat{Y}(z,t)=\widehat{U}(z,t)\cdot\widehat{X}\left((t^{-p}-1)^{-1/p}z\right)\cdot p^{\hbar\delta P}, \text{ as a formal identity near } \ t=1.
\end{equation}
where
\begin{equation}\label{Uformal}
\widehat U(z,t)
 =\left(\widehat I_0(z)
       +\sum_{k\geq1}\widehat{I}_k(z)(t-1)^k\right)E_U(z)
\end{equation}
is the unique formal solution of $t$-equation 
\begin{align}\label{Uteq}
\frac{d U}{d t}
=
\left(
\frac{u^{(2)}}{t^{p+1}z^p}
+
\sum_{r=2}^{p}
\frac{\hbar T_{[r]}^{(2)}}{t^rz^{r-1}}
-\frac{\hbar T^{(2)}}t
+\frac{\hbar P}{t}
+\frac{\hbar P}{1-t}
\right)\cdot U-
U\cdot \left(
\frac{u^{(2)}}{t^{p+1}z^p}
+\frac{p\hbar P}{t(1-t^p)}
\right).
\end{align}
with the initial condition
\begin{equation}\label{Uinitial}
\widehat{U}(z,1)=\widehat I_0(z)\cdot E_U(z).
\end{equation}
Moreover, for every $N\geq1$ and $k\geq1$, the reduction
$\widehat{I}_k^{[N]}(z)$ of $\widehat{I}_k(z)$ modulo $\mathfrak m^N$ satisfies the bound
\begin{equation}\label{Ikbound}
 \widehat{I}_k^{[N]}
 \in
 \sum_{r=0}^{N-1}
 z^{-(p-1)r}
 \left(\mathfrak m^r/\mathfrak m^N\right)\otimes {\rm End}(\mathbb{C}^n)^{\otimes 2}[[z]].
\end{equation}
\end{pro}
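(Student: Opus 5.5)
The plan is to define $\widehat U$ as a quotient and check that it has the stated properties. Set
\[
\widehat U(z,t):=\widehat Y(z,t)\cdot p^{-\hbar\delta P}\cdot \widehat X(\omega)^{-1},\qquad \omega=(t^{-p}-1)^{-1/p}z .
\]
All computations are done in the finite quotients $\Uph/\mathfrak m^N$ and then passed to the inverse limit, as in the proofs of Proposition \ref{t1fact} and Proposition \ref{YWFfactorization}.

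\textbf{Step 1: the $t$-equation.} From $\omega^{-p}=(t^{-p}-1)z^{-p}$ one computes
\[
\omega^{-p-1}\frac{\partial\omega}{\partial t}=\frac{1}{t^{p+1}z^p},\qquad \frac{\partial\log\omega}{\partial t}=\frac{1}{t(1-t^p)} .
\]
Hence \eqref{Xeq} gives
\[
\frac{\partial\widehat X}{\partial t}\,\widehat X^{-1}=\frac{u^{(2)}}{t^{p+1}z^p}+\frac{p\hbar P}{t(1-t^p)} .
\]
Since $\widehat Y$ solves \eqref{ztkKZ2}, the quotient $\widehat U$ satisfies exactly \eqref{Uteq}. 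The factor $p^{\hbar\delta P}$ is constant and commutes with $\widehat X$, because $\delta P$ commutes with $u^{(2)}$ and with $P$. In the same way, differentiating in $z$ shows that $\widehat U$ satisfies the $z$-equation \eqref{ztkKZ1} gauged on the right by
\[
\frac{\partial\widehat X}{\partial z}\,\widehat X^{-1}=\frac{t^{-p}-1}{z^{p+1}}\,u^{(2)}+\frac{p\hbar P}{z} .
\]
At $t=1$ this is exactly \eqref{Uzeq}.

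\textbf{Step 2: the initial value.} I construct $\widehat I_0(z)=1+O(z)$ as the unique formal solution of \eqref{Uzeq} of the form $\widehat I_0E_U$. The method copies the second proof of Theorem \ref{oneformal}. Off-diagonal entries for $u^{(1)}+u^{(2)}$ are solved by inverting $\ad_{u^{(1)}+u^{(2)}}$. On the remaining entries, the $p$-step delayed equations have the form $(m+1)\,\mathrm{id}$ plus an operator with coefficients in $\mathfrak m$. These are solved modulo $\mathfrak m^N$. Lemma \ref{Dcommute} and Lemma \ref{12commute} ensure that the exponents in $E_U$ commute as needed.

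\textbf{Step 3: expansion in $t-1$.} Next I solve \eqref{Uteq} as a power series in $t-1$ with initial value \eqref{Uinitial}, writing $\widehat U=(\sum_k\widehat I_k(t-1)^k)E_U$. Near $t=1$ one has $\frac{p}{t(1-t^p)}=\frac1{1-t}+O(1)$. So the poles at $t=1$ appear on both sides as $\frac{\hbar}{1-t}\,[P,\,\cdot\,]$, and \eqref{Uteq} is a Fuchsian equation at $t=1$ whose residue operator lies in $\hbar\,\mathrm{ad}_P\subset\mathfrak m$. The recursion therefore has the form $k\widehat I_k+(\text{operator with coefficients in }\mathfrak m)\widehat I_k=(\text{known terms})$. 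This is invertible in each finite quotient, which gives existence and uniqueness of each $\widehat I_k^{[N]}$.

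\textbf{Step 4: the bound \eqref{Ikbound}.} This is the step I expect to be the main obstacle. The term $\frac{u^{(2)}}{t^{p+1}z^p}$ enters the recursion through the commutator-like difference $(t^{-p-1}-1)\,[u^{(2)},\cdot\,]\,z^{-p}$ after conjugation by $E_U$. It has no factor in $\mathfrak m$, so a naive count would give powers $z^{-p}$ without any gain in $\mathfrak m$-adic order.

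To handle this, I would use the $z$-equation from Step 1 in the manner of Case I of Proposition \ref{t1fact}. The components of $\widehat I_k$ that are off-diagonal for $u^{(2)}$ are determined algebraically by inverting $\mathrm{ad}_{u^{(2)}}$, and $\mathrm{ad}_{u^{(2)}}$ multiplies by $z^{p}$. So the $z^{-p}$ produced by the $t$-equation is cancelled. What remains comes from $\hbar T^{(2)}_{[r]}/z^{r-1}$ and $\hbar T^{(2)}$, $\hbar P$, which carry a factor in $\mathfrak m$ at cost at most $z^{-(p-1)}$ each. An induction on $(k,N)$ then gives the weighted bound: each factor $z^{-(p-1)}$ is paired with one power of $\mathfrak m$.

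\textbf{Step 5: conclusion.} Both sides of \eqref{Y=UX} are formal solutions of the system \eqref{ztkKZ1}--\eqref{ztkKZ2} with the same exponential and monodromy factors. These factors are $E_U$, together with $e^{-u^{(2)}(t^{-p}-1)/(pz^p)}$ and $\omega^{p\hbar\delta P}$ from $\widehat X$. The normalization $p^{\hbar\delta P}$ matches the factor $t^{\hbar\delta P}(1-t)^{-\hbar\delta P}z^{\hbar\delta P}$ up to the expansion $(t^{-p}-1)^{-\hbar\delta P}=p^{-\hbar\delta P}(1-t)^{-\hbar\delta P}(1+O(t-1))$. Since $\widehat U$ was defined as the quotient in Step 1, Steps 2–4 identify it with the unique series solution of the stated form. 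Uniqueness follows from the uniqueness statements in Proposition \ref{t1fact} and Steps 2–3.
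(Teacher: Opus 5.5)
Your Step 1 matches the paper's starting point. But the proposal never proves the real content of the proposition: that the quotient $\widehat Y p^{-\hbar\delta P}\widehat X^{-1}$ is regular at $t=1$, with the $k$-uniform bound \eqref{Ikbound}. A priori its $z$-coefficients have poles at $t=1$ from two sources, and these must cancel:
\begin{itemize}
\item from $Q_m(t)$, of order $\le\lfloor m/p\rfloor$ by \eqref{Q1degree}, an estimate you never use;
\item from $\omega^{pk}=(t^{-p}-1)^{-k}z^{pk}$ in $\widehat X^{-1}$.
\end{itemize}

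Step 5 tries to bypass this by uniqueness, but its premise is false: the singular factors do not agree up to $p^{\hbar\delta P}$. One has $E_Yp^{-\hbar\delta P}E_X^{-1}=\Phi E_U$, where $\Phi$ contains the factor $\exp\bigl(-\sum_{r=2}^p\frac{(t^{1-r}-1)D_r^{(2)}}{(r-1)z^{r-1}}\bigr)\,t^{D_1^{(2)}+\hbar\delta P}$. Hence $\widehat U\widehat Xp^{\hbar\delta P}=\widehat I\,\widehat C\,\Phi^{-1}E_Y$, and $\Phi^{-1}$ and the $\widehat I_k$ carry negative powers of $z$. This is not visibly of the form $(1+\sum_mQ_m(t)z^m)E_Y$, so you cannot simply invoke the uniqueness in Proposition \ref{t1fact}.

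Worse, the $z^{pk}$-coefficient of $\widehat X$ has a pole of order $k$ at $t=1$. So $\widehat U\widehat X$ is defined termwise only once the $z$-degree of $\widehat I_k^{[N]}$ is bounded below uniformly in $k$, which is exactly \eqref{Ikbound}.

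Step 2 has a related omission. There $E_U$ and the right factor $p\hbar P$ are prescribed, and $u^{(1)}+u^{(2)}$ is not regular semisimple. So the block-diagonal equations at orders $z^{-p},\dots,z^{-1}$ are constraints you must verify, not definitions of the $D_r$. In the paper, the existence of $\widehat I_0E_U$, and the relation $[\hbar P,\widehat I_0E_U]=0$ needed at $k=0$ in your Step 3, both come out of the factorization.

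So everything rests on Step 4, and its induction does not close. In \eqref{Uteq} the commutator enters as $t^{-p-1}z^{-p}[u^{(2)},\cdot\,]$. There is no factor $t^{-p-1}-1$, since $E_U$ is $t$-independent and commutes with $u^{(2)}$.

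Already modulo $\mathfrak m^2$, iterate the $(t-1)$-recursion and apply this $\mathfrak m$-free operator to $J_1z$ (the first coefficient of $\widehat I_0$) and to the source $\hbar T^{(2)}_{[p]}/(t^pz^{p-1})$. This produces terms of order $z^{1-pk}$ in $\widehat I_k$ carrying a single factor from $\mathfrak m$. These contributions organize into exponentials $e^{-\lambda(t^{-p}-1)/(pz^p)}$, with $\lambda$ an eigenvalue of $\ad_{u^{(2)}}$. They cancel only through a cascade that also involves the higher $z$-coefficients of $\widehat I_0$.

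Inverting $\ad_{u^{(2)}}$ cannot produce this cancellation inductively:
\begin{itemize}
\item In \eqref{Uteq}, it expresses the $u^{(2)}$-off-diagonal part of $\widehat I_k$ through $\partial_t$, i.e.\ through $\widehat I_{k+1}$.
\item In the $z$-equation, the extra $[u^{(2)},\cdot\,]$ term carries the factor $t^{-p}-1$. This vanishes at $t=1$ and again shifts $k$ upward.
\end{itemize}

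The missing idea is the paper's change of variables $x=(t^{-p}-1)^{1/p}$, $y=\omega$, so that $z=xy$ and $e^{-\lambda(t^{-p}-1)/(pz^p)}=e^{-\lambda/(py^p)}$. In these variables:
\begin{itemize}
\item $\widehat Q^{[N]}\in[[x,y]]$, by \eqref{Q1degree};
\item $\widehat C^{\pm1}\in1+y^p[[y^p]]$;
\item $\Phi^{[N]}\in1+x(\Uph/\mathfrak m^N)[[x]][y^{-1}]$.
\end{itemize}
So $\widehat I=\widehat Q\Phi\widehat C^{-1}$ is an honest element of $\mathcal L_N$. The $y$-equation forces $\widehat I^{[N]}(0,y)=1$. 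A weight count in the $x$-equation, whose non-leading coefficients are monomials \eqref{xyform} times elements of $\mathfrak m$, then gives $\widehat I^{[N]}=1+\sum I_{a,j,r}x^ay^{a-pj}$ with $j\ge0$, $a-pj\ge-(p-1)r$ and $I_{a,j,r}\in\mathfrak m^r/\mathfrak m^N$. Since $x^ay^{a-pj}=z^{a-pj}(t^{-p}-1)^j$, regularity at $t=1$ and \eqref{Ikbound} follow together. The negative powers of $z$ originate in $\Phi$, not in the $u^{(2)}$-commutator.
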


\begin{proof}
Let us fix $N$ and work modulo $\mathfrak m^N$. To prove that the $\widehat{Y}^{[N]}\cdot (p^{-\hbar \delta P})^{[N]} \cdot(\widehat{X}^{[N]})^{-1}$ has the required form, let us work in the algebra of Laurent series in two variables
\[
 \mathcal L_N=\left(\Uph/\mathfrak m^N
 \otimes\End(\mathbb C^n)^{\otimes2}\right)[[x,y]][y^{-1}], \quad \text{with } x=(t^{-p}-1)^{1/p} \text{ and } y=(t^{-p}-1)^{-1/p}z.
\]
Let us first show that the involved formal series are in $\mathcal L_N$.
Let us write $\widehat Y$ and $\widehat X$ as products of regular and singular terms
\begin{equation}\label{YXfactors}
 \widehat Y^{[N]}=\widehat Q^{[N]}\cdot E_Y^{[N]},
 \qquad
 \widehat X^{[N]}=\widehat C^{[N]}\cdot E_X^{[N]},
\end{equation}
where, after the substitution
$\omega=(t^{-p}-1)^{-1/p}z$,
\begin{equation}
 E_X(z,t)=
 \exp\!\left(-\frac{(t^{-p}-1)u^{(2)}}{pz^p}\right)\cdot 
 z^{p\hbar\delta P}(t^{-p}-1)^{-\hbar\delta P}.
\end{equation}

On the one hand, recall that $\widehat{Q}^{[N]}=1+\sum_m Q_m^{[N]}(t)z^m$, where each $Q_m^{[N]}(t)$ is a Laurent series in $t-1$.
By the pole order estimate \ref{Q1degree}, ${\rm ord}^{\rm pole}_{t=1}Q_m^{[N]}
 \leq\left\lfloor\frac mp\right\rfloor.$ We can write
\[\widehat{Q}^{[N]}=1+\sum_m \sum_{j\ge -\left\lfloor\frac mp\right\rfloor}Q_{m,j}^{[N]}\cdot (t-1)^jz^m.\]
Since
$t-1=-\frac{x^p}{p}+O(x^{2p})$, 
a term $z^m(t-1)^{j}$ in $Q^{[N]}$ becomes 
$x^{m+pj}y^m$
times a power series in $x^p$. Thus
\begin{equation}\label{Qregular}
 \widehat Q^{[N]}(z,t)=\widehat Q^{[N]}(xy,t(x))\in \Uph/\mathfrak m^N
 \otimes\End(\mathbb C^n)^{\otimes2} [[x,y]],\qquad \left.Q^{[N]}(xy,t(x))\right|_{y=0}=1.
\end{equation}

On the other hand, the recursion relation 
\[
 [u^{(2)},C_m]
 =
 (m-p)C_{m-p}
 +pC_{m-p}\hbar\delta P
 -p\hbar P C_{m-p},
\]
for the equation \eqref{Xeq} gives
\begin{equation}
C^{[N]}_m=0 \ \text{if }p\nmid m,
 \qquad
 \widehat C^{[N]},  (\widehat C^{[N]})^{-1}\in 1+y^p\cdot \Uph/\mathfrak m^N
 \otimes\End(\mathbb C^n)^{\otimes2} [[y^p]].
\end{equation}

Furthermore, a direct comparison of
the singular terms gives
\begin{equation}\label{Phi0}
E^{[N]}_Y (p^{-\hbar\delta P})^{[N]} (E^{[N]}_X)^{-1}=\Phi^{[N]} E_U^{[N]},
\end{equation}
where $\Phi$ is the diagonal matrix
\begin{equation}\label{Phidefine}
\Phi=
 \exp\!\left(-\sum_{r=2}^p
 \frac{(t^{1-r}-1)D_r^{(2)}}{(r-1)z^{r-1}}\right)\cdot
 t^{D_1^{(2)}+\hbar\delta P}
 \cdot\left(\frac{t^{-p}-1}{p(1-t)}\right)^{\hbar\delta P}.
\end{equation}
Since
$t=(1+x^p)^{-1/p}$ and $z=xy$, we have for $2\leq r\leq p$,
\[
 \frac{t^{1-r}-1}{z^{r-1}}
 \in
 x^{p-r+1}y^{1-r} \cdot \mathbb C[[x^p]].
\]
In particular, every nonconstant term has strictly positive
$x$ degree. We also have
\[
 \log t\in x^p\mathbb C[[x^p]], \quad \text{and } \frac{t^{-p}-1}{p(1-t)}
 =
 \frac{x^p}
 {p\left(1-(1+x^p)^{-1/p}\right)}
 \in1+x^p\mathbb C[[x^p]].
\]
Since $D_r,D_1,\hbar\delta P\in\mathfrak m$, the corresponding
exponential series in \eqref{Phidefine} truncate modulo $\frak m^N$. Hence
\begin{equation}\label{Phiregular}
 \Phi^{[N]}\in 1+x\cdot( \Uph/\frak m^N)[[x]][y^{-1}]\subset \mathcal L_N
\end{equation}
Here note that $\left.\Phi^{[N]}(x,y)\right|_{x=0}=\left.\Phi^{[N]}(z,t)\right|_{t=1}=1.$ 

Therefore, we can introduce
\begin{equation}\label{Idefinition}
\widehat U^{[N]}=\widehat I^{[N]}\cdot E_U^{[N]} \quad \text{with } \widehat I^{[N]}
 :=
 \widehat Q^{[N]}\Phi^{[N]}
 (\widehat C^{[N]})^{-1}\in \mathcal{L}_N
 .
\end{equation}
Equations \eqref{Idefinition} and \eqref{Phi0} immediately give
\begin{equation}
\widehat U^{[N]}\widehat X^{[N]}p^{\hbar\delta P}= \widehat Q^{[N]}\Phi^{[N]} 
 (\widehat C^{[N]})^{-1} E_U^{[N]} \cdot \widehat C^{[N]} E_X^{[N]}p^{\hbar\delta P}=\widehat Q^{[N]}\Phi^{[N]} 
 E_U^{[N]} E_X^{[N]}p^{\hbar\delta P}=\widehat Y^{[N]}.
\end{equation}
Here the second identity uses the fact that $\widehat C^{[N]}E_U^{[N]}=E_U^{[N]}\widehat C^{[N]}$. It is because the coefficients of $\widehat C$ are obtained recursively from
$u^{(2)},P$ and $\delta P$.  These operators commute with the operators
$u^{(1)}+u^{(2)}$, $D_r^{(1)}+D_r^{(2)}$, and $\delta P$ defining $E_U$.
It then follows that $\widehat C^{[N]}$ commutes with
$E_U^{[N]}$.

It remains to prove the required form \eqref{Ikbound} of the coefficients of $\widehat I^{[N]}$. In terms of the coordinates $x$ and $y$, by the chain rule the system \eqref{ztkKZ1}-\eqref{ztkKZ2} for $\widehat{Y}$ becomes
\begin{align}\label{Ixeq}
 \frac{\partial \widehat I}{\partial x}
 &=\left[\frac{u^{(1)}+u^{(2)}}{x^{p+1}y^p},\widehat I\right]
 +\sum_{ l=2}^p
 \frac{\hbar(T_{[ l]}^{(1)}+t^{p+1- l}T_{[ l]}^{(2)})\widehat I
       -\widehat I(D_l^{(1)}+D_l^{(2)})}
      {x^l y^{l-1}}\nonumber\\
 &-\frac{(\hbar T^{(1)}+t^p\hbar T^{(2)}
                  +(t+\cdots+t^{p-1})\hbar P)\widehat I}{x}
 -\frac{\widehat I(D_1^{(1)}+D_1^{(2)}-(p-1)\hbar\delta P)}{x},
\\
\label{Iyeq}
 \left(\frac{\partial}{\partial y}-\frac{x}{y}\frac{\partial}{\partial x}\right)\widehat I
 &=\left[\frac{u^{(2)}}{y^{p+1}}+\frac{p\hbar P}{y},\widehat I\right]
 \nonumber\\
 &+\left(
 \sum_{ l=2}^p
 \frac{t^{p+1- l}x^{p- l+1}\hbar T_{[ l]}^{(2)}}{y^l}
 -\frac{t^px^p\hbar T^{(2)}}{y}
 +\frac{(1+t+\cdots+t^{p-1}-p)\hbar P}{y}
 \right)\widehat I.
\end{align}
Here $t$ denotes the function
$t(x)=(1+x^p)^{-1/p}$.

Recall that $\widehat I^{[N]}(x,y)$ has nonnegative $x$ powers. Let us first determine the $x^0$ coefficient $\widehat I^{[N]}(0,y)$ of $\widehat I^{[N]}(x,y)$. By \eqref{Qregular}, \eqref{Phiregular} and the fact that $(\widehat C^{[N]}(y))^{-1}$ is a nonnegative power series in $y$, we have
\begin{equation}\label{x0coeff1}
\text{the $x^0$ coefficient of } I^{[N]}(x,y)=\widehat Q^{[N]}(z(xy),t(x))\Phi^{[N]}(x,y)(\widehat C^{[N]}(y))^{-1} \ \text{ is in $1+y\cdot \Uph/\frak m^N[[y]]$.}
  \end{equation}
Then comparing the $x^0$ coefficient in the equation \eqref{Iyeq} gives
\[
 \frac{\partial \widehat I^{[N]}(0,y)}{\partial y}=\left[\frac{u^{(2)}}{y^{p+1}}+\frac{p\hbar P}{y},\widehat I^{[N]}(0,y)\right],\qquad \left.\widehat I^{[N]}(0,y)\right|_{y=0}=1.
\]
One checks that the induction modulo powers of $\mathfrak m$ gives 
\begin{equation}\label{xyN-face}
 \widehat I^{[N]}(0,y)=1.
\end{equation}

Now let us check the constraint that the equation \eqref{Ixeq} put on the formal solution $\widehat{I}^{[N]}$.
Using the expansion $t(x)=(1+x^p)^{-1/p}=1-\frac{x^p}{p}+O(x^{2p})$, we see that in \eqref{Ixeq} the scalar coefficients other than the leading commutator are of
monomials
\begin{equation}\label{xyform}
 x^{pk- l}y^{1- l},\qquad
 1\le l\le p,\quad k\ge 0,
\end{equation}
and their operator coefficients $\hbar T_{[m]}, \hbar P, D_m$ and so on belong to
$\frak m$. 
We can then write down the underlying recursion relations and use the coefficient expansion \eqref{xyform} to show that, by induction on the $\frak m$ adic order, 
\begin{equation}
 \widehat I^{[N]}=1+
 \sum_{r=1}^{N-1}
 \sum_{\substack{a\ge 1,\ j\ge 0\\a-pj+(p-1)r\ge 0}}
 I_{a,j,r}^{[N]}x^a y^{a-pj},
 \qquad
 I_{a,j,r}^{[N]}\in
 (\mathfrak m^r/\mathfrak m^N)\otimes\End(\C^n)^{\otimes2}.
\end{equation}

In the end, let us rewrite $\widehat{I}^{[N]}$ in $z,t$ variables. Note that $ x^a y^{a-pj}=z^{a-pj}(t^{-p}-1)^j$. For a fixed power $(t-1)^k$, only finitely many $j\ge 0$ in \eqref{t-1expansion} can contribute.
Thus 
\eqref{t-1expansion} gives
\begin{equation}\label{t-1expansion1}
 \widehat I^{[N]}
 =
\widehat{I}_{0}^{[N]}(z)+
 \sum_{k\geq1}\widehat{I}_{k}^{[N]}(z)(t-1)^k,
\end{equation}
with $\widehat{I}_{0}^{[N]}(z)=1+O(z)$ and the bound \eqref{Ikbound}. 

We have that $\widehat{U}^{[N]}=\widehat{I}^{[N]}\cdot E_U^{[N]}$ satisfies the equation \eqref{Uteq}. Since the residue of \eqref{Uteq} at $t=1$ is
$-\operatorname{ad}_{\hbar P}$, the power series solution \eqref{t-1expansion1} implies that $[\hbar P,\widehat{I}_0^{[N]}E_U^{[N]}]=0$.
Taking $t=1$ (i.e., the constant coefficient in in $t-1$) in the $z$-equation of $\widehat{I}^{[N]}$, we see that the initial condition $\widehat{U}(z,1)=\widehat{I}_{0}^{[N]}\cdot E_U^{[N]}$ satisfies the equation \eqref{Uzeq}. The uniqueness of the formal
solution follows from a standard argument.

The above construction is compatible with
$\Uph/\mathfrak m^{N+1}\to\Uph/\mathfrak m^N$.  Passing to the inverse
limit proves \eqref{Uformal} and \eqref{Y=UX}.
\end{proof}
\begin{rmk}
Note that the $t$ equation \eqref{Uteq} is a Fuchsian equation, i.e., has a form with regular singularity
\begin{equation}
 \frac{\partial U}{\partial t}
 =-\frac{[\hbar P,U]}{t-1}+\mathcal R(t;z)(U),
\end{equation}
where $\mathcal R(t;z)$ is a holomorphic operator at $t=1$. Since $[\hbar P,\widehat I_0(z)E_U(z)]=0$ (which follows from the fact that the coefficient of the equation \eqref{Uzeq} and regularized initial data of the solution $U_0(z):=U(z,1)=\widehat I_0(z) E_U(z)$ commute with $\hbar P$), it has a unique power series solution $U_0(z)+\sum_{k\geq1}U_k(z)\cdot (t-1)^k$ with initial condition $U_0(z)$. This solution coincides with $\widehat{U}$ in the above proposition. That is $U_k(z)=\widehat{I}_k(z) E_U(z)$.
\end{rmk}

\section{Stokes phenomenon for the two variable system and a proof of Theorem \ref{thmRLL}}\label{SPDE2}
In this section, we study the resummation (i.e., the order-$p$ Borel-Laplace tranform) of the formal solution $\widehat{Y}(z_1,z_2):=\widehat{Y}(z,t)$ of the system \eqref{kKZ1}-\eqref{kKZ2}. The resulting holomorphic solutions depend on the summation directions and regions. In particular, for two summation directions $d_1, d_2\notin {\rm aSR}(u)$ of $z_1,z_2$, we get holomorphic solutions
\begin{itemize}
    \item $Y_{d_1,d_2}(z_1,z_2)$, if $d_1,d_2$ are not bounded by two adjacent anti-Stokes rays. In the following, it is denoted by $Y_{d,\mathbb{D}_k}$, where $\tau_i<d=d_1<\tau_{i+1}$ and $\tau_{i+k}<\frac{(k+1)\pi}{p}<\tau_{i+k+1}$ for some integers $i$ and $k\ne 0$.

    \item $Y^{(-)}_{d_1,d_2}(z_1,z_2)$ and $Y^{(+)}_{d_1,d_2}(z_1,z_2)$ if $d_1,d_2$ are bounded by two adjacent. In the following, they are denoted by $Y_{d,\mathbb{D}_{(0,1)}}$ and $Y_{d,\mathbb{D}_{(1,\infty)}}$ respectively, where $\tau_i<d=d_1,d_2<\tau_{i+1}$.
\end{itemize}
In Corollary \ref{mainpro}, we obtain the factorization of these holomorphic solutions, via the factorization of the underlying formal solution, at $t=z_2/z_1=0,1$ and $\infty$. It enables us to derive the explicit connection formula or Stokes matrices for the solutions. In Section \ref{proofThm}, by comparing the connection formula along homotopy paths in the space $z_1,z_2\in (\mathbb{C}\times \mathbb{C})\setminus\{z_1=0,z_2=0,z_1=z_2\}$, we finally prove the identities in Theorem \ref{thmRLL}.

\subsection{Resummation of the formal solution $\widehat{Y}(z;t)$}\label{sec34}
We now consider the resummation of the formal solution $\widehat{Y}(z,t)$ with respect to the variable $z$. The resummation procedure is as in Proposition \ref{CanSol}: following Proposition \ref{t1fact}, after the reduction modulo $\frak m^N$, the coefficients $Q^{[N]}_m(t)$ of the power series part \[\widehat{Q}^{[N]}(z,t)=1+\sum_{m\ge 1} Q^{[N]}_m(t) z^m \] are rational functions of $t$ valued in finite dimensional space. For any fixed $t$ and every $N$, we take the Borel-Laplace $p$-sum of $\widehat{Q}^{[N]}(z,t)$ along an admissible direction $d$, and then pass to the inverse limit to get $Q_d(z,t)$.

A priori, for any fixed $t$, the set of anti-Stokes rays determined by the leading coefficient ${u^{(1)}+t^{-p} u^{(2)}}$ of equation \eqref{ztkKZ1} is
\[{\rm aSR}(u^{(1)}+t^{-p}u^{(2)})=\left\{\frac{1}{p}\mathrm{Arg}(u_i-u_j+t^{-p}(u_k-u_l))+\frac{2m\pi}{p}~:~ i,j,k,l=1,...,n, \ m\in \mathbb{Z}\right\}.\]  However, if the coefficient matrix is special, some of these apparent anti-Stokes rays
are genuine, while others carry trivial Stokes factors and are therefore admissible. The following proposition identifies
those rays.

For fixed \(u\in\mathfrak h_{\rm reg}\) and
\(d\notin{\rm aSR}(u)\), we introduce two collections of
real curves or walls in the $t$-plane
\begin{align}\label{Ct0}
\mathcal C(u;d)_0
&:=
\bigcup_{i\ne j}
\left\{
 \ t
 \ \middle|\
 e^{-\mathi pd}\cdot t^{-p}(u_i-u_j)\in\mathbb R_{>0}
\right\},
\\ \label{Ct1}
\mathcal C(u;d)_1
&:=
\bigcup_{i\ne j}
{\rm Component}_{t=1}
\left\{
 \ t
 \ \middle|\
 e^{-\mathi pd}\cdot (1-t^{-p})(u_i-u_j)\in\mathbb R_{>0}
\right\}.
\end{align}
Here ${\rm Component}_{t=1}$ means the connected
component of the real curve whose closure contains $t=1$.

\begin{pro}\label{Yztresum}
For fixed $u\in\h_{\rm reg}$ and $d\notin {\rm aSR}(u)$, the lifts of the walls in $\mathcal{C}(u;d)_0$ and $\mathcal{C}(u;d)_1$
cut the universal cover space of $\mathbb{C}\setminus\{0,1\}$ into connected components. For each such connected component $\D$, there exists a unique holomorphic solution $Y_{d,\D}(z,t) \in \wUph\otimes {\rm End}(\IC^n)\otimes {\rm End}(\IC^n)$ of the system \eqref{ztkKZ1}-\eqref{ztkKZ2} on the universal cover of $(\mathbb{C}\setminus\{0\})\times (\mathbb{C}\setminus \{0,1\})$, such that for all $t\in \D$
\begin{equation}\label{YDasy}
    Y_{d,\D}(z,t)\cdot e^{-\int \left(D^{(1)}(z)+tD^{(2)}(zt)\right)dz}z^{-(D_1^{(1)}+D_1^{(2)}+\hbar\delta P)}t^{-D_1^{(2)}-\hbar\delta P}(1-t)^{\hbar\delta P} \sim \widehat{Q}(z;t), 
\end{equation}
as $z\rightarrow 0$ within the sector $-\frac{\pi}{2p}< {\rm arg}(z)-d<\frac{\pi}{2p}$. 
\end{pro}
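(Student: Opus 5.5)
We work modulo $\mathfrak m^N$ throughout, exactly as in Proposition \ref{CanSol}, and pass to the inverse limit at the end. For fixed $N$ and fixed $t\in\mathbb C\setminus\{0,1\}$, equation \eqref{ztkKZ1} reduces to a finite rank linear system in $z$. Its leading coefficient is $u^{(1)}+t^{-p}u^{(2)}$, and all remaining diagonal formal data lie in the nilpotent ideal $\mathfrak m/\mathfrak m^N$. By Proposition \ref{t1fact}, $\widehat Q^{[N]}(z,t)$ is a formal power series in $z$ whose coefficients are rational in $t$ with poles only at $\{1,\infty\}$. Absorbing the nilpotent exponential factors into the series part, as in the proof of Proposition \ref{CanSol}, gives a single level $p$ system with exponential parts $e^{-(u_i+t^{-p}u_k)/(pz^p)}$. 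The $p$-summability theorem then shows that $\widehat Q^{[N]}(\cdot,t)$ is $p$-summable along every direction $d$ avoiding the rays $\frac1p\operatorname{Arg}(u_i-u_j+t^{-p}(u_k-u_l))+\frac{2m\pi}{p}$, restricted to those pairs whose Stokes factor is genuinely nontrivial. The Borel transform of $\widehat Q^{[N]}$ is holomorphic jointly in $(\zeta,t)$ near the origin, with $t$ a parameter. So as long as no singular direction of the Borel plane crosses the ray $\arg\zeta=pd$, the $p$-sum $Q^{[N]}_d(z,t)$ depends holomorphically on $t$ and is compatible with $\pi^{N+1}_N$.

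The first substantive step is to identify which differences $u_i-u_j+t^{-p}(u_k-u_l)$ actually produce Borel singularities. For $d\notin{\rm aSR}(u)$, the differences $u_i-u_j$ with $k=l$ never meet direction $pd$ as $t$ varies, since they are independent of $t$. I will show that the only dangerous families are:
\begin{itemize}[leftmargin=*]
\item pure second-factor differences $t^{-p}(u_i-u_j)$ with $i=j$, which give the walls $\mathcal C(u;d)_0$;
\item the combined differences $(u_i-u_j)(1-t^{-p})$, coming from exchanging indices between the two tensor factors (the $(i,j)\to(j,i)$ entries produced by $P$), which give $\mathcal C(u;d)_1$.
\end{itemize}
Mixed differences with $\{i,j\}\ne\{l,k\}$ should carry trivial Stokes data. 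To prove this I would use the factorizations: near $t=\infty$ through Proposition \ref{YWFfactorization}, near $t=0$ through Proposition \ref{t2fact}, and near $t=1$ through Proposition \ref{t3fact}. Each expresses $\widehat Y$ as a product of two single-variable-type solutions whose anti-Stokes data are already known. Concretely, $\widehat F^{(2)}(z_2)$ contributes exactly the walls $\mathcal C_0$. The factor $\widehat X((t^{-p}-1)^{-1/p}z)$, with leading term $u^{(2)}(t^{-p}-1)$, contributes the component of the walls $\mathcal C_1$ through $t=1$; its $P$-coupling only connects $(ij)$ with $(ji)$. The remaining factors ${}_1\widehat W$, ${}_2\widehat W$ and $\widehat U$ have Borel singularities governed by $u^{(1)}+u^{(2)}$ or $u^{(1)}$ alone, which avoid direction $pd$. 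Away from these walls, the analytic continuation in $t$ of the Borel transform is singularity-free along the ray $\arg\zeta=pd$. The $t$-equation \eqref{ztkKZ2} transports the sum, because the Borel transform satisfies the transformed $t$-equation and Laplace transforms commute with $\partial_t$.

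With this in hand, on each connected component $\mathbb D$ of the complement of the lifted walls I define
\[
Y_{d,\mathbb D}:=Q_d(z,t)\cdot e^{\int(D^{(1)}(z)+tD^{(2)}(zt))dz}z^{D_1^{(1)}+D_1^{(2)}+\hbar\delta P}t^{D_1^{(2)}+\hbar\delta P}(1-t)^{-\hbar\delta P}.
\]
It satisfies \eqref{ztkKZ1} by the $z$-summation. It satisfies \eqref{ztkKZ2} because $\partial_tY_{d,\mathbb D}-B\,Y_{d,\mathbb D}$ is a sum with asymptotic expansion $0$ in a sector of opening larger than $\pi/p$; by compatibility of the system, this expression also solves \eqref{ztkKZ1}, hence vanishes. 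The asymptotics \eqref{YDasy} follow directly from the construction. Uniqueness holds because two such solutions differ by right multiplication by a constant $C$, and the Watson-type argument in each finite quotient forces $C=1$, using the sector opening $>\pi/p$. Finally, the solution extends to the universal cover in $(z,t)$ by analytic continuation of the linear system, since the coefficients are holomorphic there.

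I expect the main obstacle to be the second step: proving that mixed apparent anti-Stokes rays are admissible, i.e.\ that they have trivial Stokes factors, uniformly in $N$. In particular, one has to check that the wall set is exactly $\mathcal C_0\cup\mathcal C_1$, with only the component through $t=1$ relevant for $\mathcal C_1$. My first approach would be the local factorizations, combined with continuity of the Borel singular set in $t$. If that fails to cover all of $\mathbb D$ globally, I would fall back on an explicit analysis of the $(ijkl)$-entry structure of the Borel transformed system, together with Lemma \ref{12commute}.
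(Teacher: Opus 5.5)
Your overall frame is sound: working modulo $\mathfrak m^N$, Borel--Laplace summing in $z$ for fixed $t$, passing to the inverse limit, and checking \eqref{ztkKZ2} by a flatness argument. But the step you flag as the main obstacle is a genuine gap, and the tools you propose do not close it.

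The statement needs that every \emph{mixed} apparent singular curve $\xi=u_i-u_j+t^{-p}(u_k-u_l)$, with $u_i\neq u_j$, $u_k\neq u_l$ and $(k,l)\neq(j,i)$, carries no Borel singularity along $e^{\mathi pd}\mathbb R_{>0}$. Such curves cross that ray at intermediate values of $t$, that is, inside the chambers $\mathbb D$. So factorizations near $t=0,1,\infty$ say nothing there, and ``continuity of the Borel singular set in $t$'' does not rule out a singularity living on part of such a curve.

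Your description of the factors is also inaccurate. ${}_1\widehat W$ satisfies \eqref{Wzeq}, whose leading part contains $\operatorname{ad}\bigl(u^{(2)}/(t^pz^{p+1})\bigr)$, so its apparent singular set is again the mixed family. Likewise, the differences $(u_i-u_j)+(u_k-u_l)$ governing $\widehat U(z,1)$ can point in direction $pd$ even though $d\notin{\rm aSR}(u)$. In the paper, the summability of these factors (Proposition \ref{factresum}) is deduced \emph{from} the present proposition, so this route is circular as it stands. Finally, you have no mechanism for why only the component of $\{(1-t^{-p})(u_i-u_j)\in e^{\mathi pd}\mathbb R_{>0}\}$ through $t=1$ is a wall.

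The missing idea is a transport argument in the Borel plane. Write $\widehat Q=\sum_{r<p}z^r\widehat Q_{(r)}(z^p,t)$, take ordinary Borel transforms $\mathcal B_{(r)}(\xi,t)$, and translate both \eqref{Qzrecur} and \eqref{Qtrecur}. The $(ijkl)$-entry then satisfies:
\begin{itemize}
\item a $\xi$-equation with principal part $(u_i-u_j+t^{-p}(u_k-u_l)-\xi)\partial_\xi-\frac rp$;
\item a $t$-equation with principal part $\partial_t-p(u_k-u_l)t^{-p-1}\partial_\xi$.
\end{itemize}
All remaining terms carry a factor in $\mathfrak m$. By induction on the $\mathfrak m$-adic order they are known and holomorphic off the three distinguished families $\xi=u_a-u_b$, $\xi=t^{-p}(u_a-u_b)$, and $\xi=(1-t^{-p})(u_a-u_b)$ with $|{\rm Arg}\,t|<\pi/p$.

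Near a point of a non-distinguished $\Sigma_{ijkl}$, the entries $(abcd)\neq(ijkl)$ continue through their nondegenerate $\xi$-equation. For the $(ijkl)$-entry, use the coordinate $c=\xi-(u_i-u_j+t^{-p}(u_k-u_l))$; the $t$-equation becomes $\partial_t|_c(\mathcal B_{(r)})_{ijkl}=$ known terms. The key point is that $\Sigma_{ijkl}$ meets $\xi=0$ at some $t_0\neq0,1,\infty$, where $\mathcal B_{(r)}$ is a convergent germ. Integrating from $t_0$ along $c=\mathrm{const}$ continues the entry holomorphically along the whole curve.

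The same mechanism explains the restriction in $\mathcal C(u;d)_1$. The branches of $\xi=(1-t^{-p})(u_a-u_b)$ through the other $p$-th roots of unity are reached from $\xi=0$ where the $t$-equation is regular. At $t=1$, by contrast, the coefficient $\hbar P/(1-t)$ is singular. Without such a global argument, your construction only produces $Y_{d,\mathbb D}$ on the much smaller chambers cut out by all apparent walls.
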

\begin{proof}
The holomorphic solution $Y_{d,\mathbb{D}}$ can be constructed via the order-$p$ Borel-Laplace transform along the direction $d$ for $t\in \mathbb{D}$. For this purpose, we only need to prove, for all $t\in \mathbb{D}$, the summability of the formal power series part $\widehat{Q}$ along $d$. Recall that summability requires analytic continuation of the order-$p$ Borel transform from $0$ along the direction $d$, i.e., the ray $e^{\mathi pd}\cdot \mathbb{R}_{>0}$,
together with an exponential growth estimate. The exponential growth estimate is standard, in the following, we only prove the analytic continuation part.

Note that along a direction, $\widehat Q$ is not summable only if the analytic continuation of the order-$p$ Borel transform from $0$ to infinity along the direction encounters a singularity. Thus we only need to identify the singularities of the order-$p$ Borel transform. 
To avoid the ramification in the order-$p$ Borel transform, let us divide the formal power series in $z$ into $p$ power series in $\omega=z^p$ and conduct the ordinary Borel transform.
Let us write
\begin{equation}
 \widehat Q (z,t)
 =\sum_{r=0}^{p-1}z^r\widehat Q_{(r)} (z^p,t),
 \qquad \text{with} \quad
 \widehat Q_{(r)}(w,t)
 :=\sum_{k\ge 0}Q_{r+pk}(t)w^k,
\end{equation}
and denote the (ordinary order one) Borel transform of $\widehat Q_{(r)} (w,t)$ in the variable $w=z^p$ by
\begin{equation}\label{Btrans}
\mathcal{B}_{(r)} (\xi,t):=\sum_{k\ge 0}Q_{r+pk} (t)
  \frac{(\xi/p)^k}{k!},
 \qquad 0\leq r<p.
\end{equation}
These $p$ many ordinary Borel
transforms determine the usual order-$p$ Borel transform of 
$\widehat Q(z,t)$
 by
the standard beta fractional integrals and the finite recombination
over $r$. 

The series
$\mathcal B_{(r)}^{[N]}(\xi,t)$ initially defines a convergent
holomorphic germ along the zero section $\{\xi=0\}\times (\mathbb{C}\setminus\{0,1\})\subset \mathbb{C}_\xi\times \mathbb{C}_t$. 
In the following, we study their analytic continuation via the associated differential equations. 

First applying \eqref{Btrans} directly to the
$z$ and $t$ recursions \eqref{Qzrecur}-\eqref{Qtrecur} and taking the $(ijkl)$-entry, we get the equations for the Borel transforms
\begin{align}
&\left(
 (u_i-u_j+t^{-p}(u_k-u_l)-\xi)\frac{\partial}{\partial\xi}
 -\frac{r}{p}
 \right)(\mathcal  B_{(r)} )_{ijkl}
 \nonumber\\\nonumber
=&\frac{1}{p}\Biggl(
 \hbar(T^{(1)}+T^{(2)}-P)\mathcal  B_{(r)} 
 +\mathcal  B_{(r)} (D_1^{(1)}+D_1^{(2)}+\hbar\delta P) \\ \label{zBoreleq} & \qquad +\sum_{m=2}^p\mathcal J_{r,m} 
       (D_m^{(1)}+t^{1-m}D_m^{(2)})
 -\hbar\sum_{m=2}^p
       (T_{[m]}^{(1)}+t^{1-m}T_{[m]}^{(2)})
       \mathcal J_{r,m} 
 \Biggr)_{ijkl}, 
 \end{align}
 and
 \begin{align}\nonumber
&\left(
 \frac{\partial}{\partial t}-p(u_k-u_l)t^{-p-1}\frac{\partial}{\partial\xi}
 \right)(\mathcal  B_{(r)} )_{ijkl}
 \\ \nonumber
=&\Biggl(
 \sum_{m=2}^p{\hbar T_{[m]}^{(2)}\over t^m}
       \mathcal J_{r,m} 
 -{\hbar(T^{(2)}-P)\over t}\mathcal  B_{(r)} 
 +{\hbar P\over1-t}\mathcal  B_{(r)} 
 \\  \label{tBoreleq}
&\qquad
 -\sum_{m=2}^p\mathcal J_{r,m} 
       {D_m^{(2)}\over t^m}
 -\mathcal  B_{(r)} {D_1^{(2)}+\hbar\delta P\over t}
 -\mathcal  B_{(r)} {\hbar\delta P\over1-t}
 \Biggr)_{ijkl}.
\end{align}
Here \begin{equation}\label{shiftJ}
 \begin{split}
 \mathcal J_{r,m} (\xi,t)
 :=
 \begin{cases}
  \mathcal B_{(r+m-1)} (\xi,t),
     &r+m-1<p,\\[2pt]
  p\,\partial_\xi
     \mathcal B_{(r+m-1-p)} (\xi,t),
     &r+m-1\geq p.
 \end{cases}
 \end{split}
\end{equation}
Just as in the standard resummation theory, the obvious singularities of the equation \eqref{zBoreleq} tell us that the possible singular locus of the Borel transforms is contained in the union of the
curves
\begin{align}
\Sigma_{abcd}=\{(\xi,t)~|~ \xi=u_a-u_b+t^{-p}(u_c-u_d)\}, \qquad a\ne b, \text{ or } \ c\ne d.
\end{align}
Among them, there are the following distinguished curves (that will be shown to be the genuine singular curves)
\begin{align}\label{curve1}
\Sigma_{1,ab}&=\{(\xi,t)~|~ \xi=u_a-u_b\},\qquad a\ne b,\\ \label{curve2}
\Sigma_{2,ab}&=\{(\xi,t)~|~ \xi=t^{-p}(u_a-u_b)\},\qquad a\ne b,\\ \label{curve3}
\Sigma^{(1)}_{3,ab}&=\left\{(\xi,t)~|~ \xi=(1-t^{-p})(u_a-u_b), \ t\ne 0,1, \ -\frac{\pi}{p}<{\rm Arg}(t)<\frac{\pi}{p} \right\},\qquad a\ne b.
\end{align}

Now fix a curve $\Sigma_{ijkl}$, let us denote
\begin{equation*}\Sigma_{ijkl}^{0}:=\Sigma_{ijkl}\setminus
 \overline{
 \bigcup_{a\ne b}
 \left(
 \Sigma_{1,ab}\cup\Sigma_{2,ab}\cup\Sigma_{3,ab}^{(1)}
 \right)}.
\end{equation*}
Here the closure is taken in
$\mathbb C_\xi\times(\mathbb C_t\setminus\{0,1\})$.
Let us prove by induction on the $\frak m$-adic order that every entry of
$\mathcal{B}_{(r)}(\xi,t)$,  for all $r=0,\ldots,p-1$, extends
holomorphically through $\Sigma^0_{ijkl}$.

Modulo
$\frak m$, we have $\mathcal{B}_{(0)}^{[1]}=1$ and $\mathcal{B}_{(r)}^{[1]}=0$ for $r=1,\ldots,p-1$. Assume
that the continuation statement has
been proved modulo $\frak m^{N-1}$. Let us prove it modulo $\frak m^N$. First note that
every summand on the
right hand sides of \eqref{zBoreleq} and \eqref{tBoreleq} is the product of $B_{(r)}$, or its $\xi$ derivative, with one of the elements
\[D_l,\hbar T_{[l]}, \hbar T,\hbar P,\hbar \delta P\]
in $\frak m$. Therefore, modulo $\frak m^N$, they 
depend only on the $\mathcal B^{[N-1]}_{(r)}$ modulo $\frak m^{N-1}$. The required correction in $\mathfrak m^{N-1}/\mathfrak m^N$ satisfies an inhomogeneous linear
system whose right hand side is already known and holomorphic away from the three families \eqref{curve1}-\eqref{curve3} by the induction hypothesis. 

Case I. For the $(abcd)$-entry of $B_{(r)}^{[N]}(\xi,t)$ indexed by a tuple $(abcd)\ne (ijkl)$. Take any point \[P_*=(u_i-u_j+t_*^{-p}(u_k-u_l),t_*)\in\Sigma_{ijkl}^{0}\setminus\Sigma_{abcd}.\]
The coefficient
of $\partial/\partial \xi$ in the first equation for $\mathcal{B}_{(r)}^{[N]}(\xi,t)_{abcd}$ (i.e., replacing indices $(ijkl)$ by $(abcd)$ in \eqref{zBoreleq}) is nonzero near $P_*$. Since the right hand side of the equation only depends on $\mathcal{B}_{(a)}^{[N-1]}$ and their $\xi$ derivatives, $a=1,\ldots, p$, and is therefore holomorphic at $P_*$ by the induction hypothesis, the solution $B_{(r)}^{[N]}(\xi,t)_{abcd}$ extends holomorphically across
$\Sigma^0_{ijkl}$ near $P_*$. 

Case II. For the $(ijkl)$-entry of $B_{(r)}^{[N]}(\xi,t)$ indexed by the same tuple $(ijkl)$. Introduce the new coordinate
\[
 c=\xi-(u_i-u_j+t^{-p}(u_k-u_l)), \qquad t=t.
\]
Since 
\[\frac{\partial}{\partial t}-p(u_k-u_l)t^{-p-1} \frac{\partial}{\partial \xi}
 =\left. \frac{\partial}{\partial t} \right|_c,\]
the equation \eqref{tBoreleq} becomes
\begin{equation}\label{ctequation}
  \frac{\partial}{\partial t} (\mathcal  B^{[N]}_{(r)}(c,t) )_{ijkl}=\mathcal{F}^{[N]}_{(r)}(c,t)_{ijkl},
\end{equation}
where $\mathcal{F}^{[N]}_{(r)}(c,t)_{ijkl}$ 
depends only on the $\mathcal B^{[N-1]}_{(a)}$ modulo $\frak m^{N-1}$. 
Thus the equation \eqref{ctequation} is a
differential equation transporting the unknown $\mathfrak m^{N-1}/\mathfrak m^N$ component along the curves $c=$constant.

It is a key observation that there exists $t_0\neq 0,1,\infty$ such that $(c=0,t_0)\in \Sigma_{ijkl}^0$, i.e., 
\[u_i-u_j+t_0^{-p}(u_k-u_l)=0.\]
The convergent germ of the Borel transform $\mathcal B_{(r)}^{[N]}$ along the zero section gives an initial value at $(c,t)=(0,t_0)$.
Then for a fixed path $\gamma$ on the $t$-plane such that $(c=0,\gamma)\in \Sigma_{ijkl}^0$,
integrating the equation \eqref{ctequation} for sufficiently small $|c|$ gives 
\begin{equation}\label{TransportB}
 (\mathcal B_{(r)}^{[N]})_{ijkl}(c,t)
 =
 (\mathcal B_{(r)}^{[N]})_{ijkl}(c,t_0)
 +
 \int_{t_0}^{t}F_{{(r)}}^{[N]}(c,\tau)_{ijkl}\,d\tau ,
\end{equation}
where, by the induction hypothesis, the 
$\mathcal F^{[N]}(c,\tau)_{ijkl}$ is holomorphic for $|c|$ small enough and $\tau$ in a sufficiently small neighbourhood of the path $\gamma$ on $t$-plane. Thus \eqref{TransportB} gives a holomorphic extension of $(\mathcal B_r^{[N]})_{ijkl}$ across $\Sigma^0_{ijkl}$.

It proves that the Borel transforms $\mathcal B_{(r)}^{[N]}$ extend holomorphically across all apparent singular curves other than the distinguished families \eqref{curve1}-\eqref{curve3}. 

Now let $d\notin{\rm aSR}(u)$ be the given resummation direction. The above results say that for all $t \notin \mathcal{C}(u;d)_0\cup\mathcal{C}(u;d)_1$, the Borel transforms extend holomorphically from $\xi=0$ to $\xi=\infty$ along $d$. The exponential growth estimate (locally uniform with respect to $t$), needed for the Laplace transform, follows from the standard argument. That is if $K$ is a compact subset of the
chamber $\mathbb{D}$, then there exist positive real $C_{K,N}$ and $\beta_{K,N}$ such that
\[|\mathcal B_{(r)}^{[N]}(\xi,t)|\le C_{K,N} e^{\beta_{K,N}|\xi|}, \qquad \text{for } t\in K, \ \ \xi\in e^{\mathi pd}\cdot \mathbb{R}_{>0}.\]
The standard
beta-integral recombination over $r=0,1,\ldots,p-1$ gives the corresponding statement for the order-$p$ Borel transform
of $\widehat{Q}$. In particular, the formal series $\widehat Q(z,t)$ is
$p$-summable in the direction $d$, and its $p$-sum is
\[
Q_d^{[N]}(z,t)
=
\sum_{r=0}^{p-1}
z^r
\frac1{p z^p}
\int_0^{e^{ipd}\cdot \infty}
e^{-\xi/(p z^p)}
\mathcal B^{[N]}_{(r)}(\xi,t)\,d\xi.
\]
  
Therefore, the resummation $Y_{d,t}^{[N]}(t)$ via the standard order-$p$ Borel-Laplace transform exists and depends holomorphically on $t$ in every component $\mathbb{D}$. One defines $Y_{d,t}(t)$ by passing to the inverse limit, and uniqueness follows from the prescribed asymptotics. \end{proof}

\begin{ex}
To illustrate, the following is the graph of the curves and regions in the $t$-plane for the case $p=4$, $u_k\in i\mathbb{R}$ (purely imaginary) for all $k=1,\dots,n$, and $d=0$.
\begin{figure}[h] 
\centering
\begin{tikzpicture}[scale=2.0]

\clip (-2.2,-2.2) rectangle (2.2,2.2);


\draw[gray, dashed, thin] (0,0) circle (1);

\foreach \angle in {22.5, 67.5, 112.5, 157.5, 202.5, 247.5, 292.5, 337.5} {
    \draw[blue!65!black, line width=0.9pt, opacity=0.85] (0,0) -- (\angle:2.2);
}

\draw[red!70!black, line width=1.0pt, opacity=0.85]
        (0.0100,0.0000) --
        (0.1954,0.0808) --
        (0.2093,0.0865) --
        (0.2232,0.0922) --
        (0.2371,0.0979) --
        (0.2510,0.1036) --
        (0.2650,0.1092) --
        (0.2789,0.1149) --
        (0.2929,0.1205) --
        (0.3069,0.1260) --
        (0.3208,0.1315) --
        (0.3348,0.1370) --
        (0.3489,0.1425) --
        (0.3629,0.1478) --
        (0.3770,0.1531) --
        (0.3911,0.1584) --
        (0.4053,0.1635) --
        (0.4195,0.1686) --
        (0.4337,0.1736) --
        (0.4480,0.1785) --
        (0.4623,0.1832) --
        (0.4766,0.1879) --
        (0.4910,0.1923) --
        (0.5055,0.1967) --
        (0.5200,0.2008) --
        (0.5345,0.2047) --
        (0.5491,0.2085) --
        (0.5637,0.2120) --
        (0.5785,0.2153) --
        (0.5933,0.2183) --
        (0.6081,0.2210) --
        (0.6230,0.2235) --
        (0.6378,0.2256) --
        (0.6528,0.2273) --
        (0.6677,0.2287) --
        (0.6828,0.2297) --
        (0.6978,0.2303) --
        (0.7129,0.2304) --
        (0.7280,0.2300) --
        (0.7431,0.2291) --
        (0.7581,0.2277) --
        (0.7730,0.2258) --
        (0.7878,0.2232) --
        (0.8025,0.2201) --
        (0.8172,0.2163) --
        (0.8315,0.2118) --
        (0.8457,0.2067) --
        (0.8596,0.2009) --
        (0.8732,0.1943) --
        (0.8863,0.1871) --
        (0.9001,0.1784) --
        (0.9124,0.1696) --
        (0.9241,0.1601) --
        (0.9352,0.1499) --
        (0.9456,0.1390) --
        (0.9552,0.1275) --
        (0.9641,0.1153) --
        (0.9721,0.1026) --
        (0.9792,0.0893) --
        (0.9853,0.0755) --
        (0.9904,0.0614) --
        (0.9944,0.0469) --
        (0.9974,0.0322) --
        (0.9993,0.0173) --
        (1.0000,0.0023) --
        (1.0000,-0.0000) --
        (0.9994,-0.0150) --
        (0.9977,-0.0299) --
        (0.9949,-0.0447) --
        (0.9911,-0.0593) --
        (0.9861,-0.0735) --
        (0.9801,-0.0873) --
        (0.9732,-0.1007) --
        (0.9653,-0.1135) --
        (0.9565,-0.1258) --
        (0.9470,-0.1374) --
        (0.9367,-0.1484) --
        (0.9257,-0.1587) --
        (0.9141,-0.1683) --
        (0.9019,-0.1772) --
        (0.8892,-0.1854) --
        (0.8761,-0.1928) --
        (0.8627,-0.1995) --
        (0.8489,-0.2054) --
        (0.8347,-0.2108) --
        (0.8204,-0.2153) --
        (0.8058,-0.2193) --
        (0.7912,-0.2226) --
        (0.7764,-0.2253) --
        (0.7615,-0.2273) --
        (0.7465,-0.2289) --
        (0.7314,-0.2299) --
        (0.7164,-0.2303) --
        (0.7012,-0.2303) --
        (0.6861,-0.2299) --
        (0.6712,-0.2290) --
        (0.6561,-0.2277) --
        (0.6412,-0.2260) --
        (0.6262,-0.2240) --
        (0.6114,-0.2216) --
        (0.5965,-0.2189) --
        (0.5817,-0.2160) --
        (0.5670,-0.2128) --
        (0.5523,-0.2093) --
        (0.5377,-0.2056) --
        (0.5232,-0.2017) --
        (0.5087,-0.1976) --
        (0.4942,-0.1933) --
        (0.4798,-0.1889) --
        (0.4655,-0.1843) --
        (0.4512,-0.1796) --
        (0.4369,-0.1747) --
        (0.4227,-0.1698) --
        (0.4085,-0.1647) --
        (0.3944,-0.1596) --
        (0.3803,-0.1544) --
        (0.3662,-0.1491) --
        (0.3522,-0.1437) --
        (0.3381,-0.1383) --
        (0.3241,-0.1328) --
        (0.3101,-0.1273) --
        (0.2962,-0.1218) --
        (0.2822,-0.1162) --
        (0.2683,-0.1106) --
        (0.2543,-0.1049) --
        (0.2404,-0.0993) --
        (0.2265,-0.0936) --
        (0.2126,-0.0879) --
        (0.1987,-0.0822) --
        (0.0100,0.0000);

\node[font=\boldmath\large] at (45:1.55)  {$\D_1$};
\node[font=\boldmath\large] at (90:1.55)  {$\D_2$};
\node[font=\boldmath\large] at (135:1.55) {$\D_3$};
\node[font=\boldmath\large] at (180:1.55) {$\D_4$};
\node[font=\boldmath\large] at (225:1.55) {$\D_5$};
\node[font=\boldmath\large] at (270:1.55) {$\D_6$};
\node[font=\boldmath\large] at (315:1.55) {$\D_7$};

\node[font=\boldmath, text=red!70!black] at (0.50, 0) {$\D_{(0,1)}$};
\node[font=\boldmath, text=red!70!black] at (1.65, 0) {$\D_{(1,+\infty)}$};

\fill[black] (0,0) circle (0.02);
\node[below left, gray, font=\small] at (-0.04,-0.04) {$0$};

\fill[black] (1,0) circle (0.025);
\node[below right, font=\small] at (1.02,-0.04) {$t{=}1$};

\draw[black, thin] (-2.3,-2.3) rectangle (2.3,2.3);

\foreach \x in {-2,-1,0,1,2} {
    \draw[black, thin] (\x,-2.3) -- (\x,-2.24);
    \draw[black, thin] (\x,2.3) -- (\x,2.24);
    \node[below, font=\footnotesize] at (\x,-2.3) {$\x$};
}
\foreach \y in {-2,-1,0,1,2} {
    \draw[black, thin] (-2.3,\y) -- (-2.24,\y);
    \draw[black, thin] (2.3,\y) -- (2.24,\y);
    \node[left, font=\footnotesize] at (-2.3,\y) {$\y$};
}

\node[below, font=\normalsize] at (0, -2.58) {$\mathrm{Re}(t)$};
\node[left, font=\normalsize, rotate=90] at (-2.58, 0) {$\mathrm{Im}(t)$};

\node[above, font=\large] at (0, 2.45) {$\mathrm{aSR}(u;\,d)$ \text{with} $p=4,\; d=0$};

\node[below, font=\normalsize, gray!70!black] at (0, -2.82) {$h=1,\; u_1=\tfrac{i}{2},\; u_2=-\tfrac{i}{2}$};

\begin{scope}[shift={(1.05, 1.85)}]
    \fill[white, opacity=0.92, rounded corners=2pt] (-0.05,-0.6) rectangle (1.6,0.15);
    \draw[gray!50, thin, rounded corners=2pt] (-0.05,-0.6) rectangle (1.6,0.15);
    \draw[blue!65!black, line width=0.9pt] (0.05,0) -- (0.3,0);
    \node[right, font=\small] at (0.32,0) {$\mathcal{C}(u;d)_0$};
    \draw[red!70!black, line width=1.0pt] (0.05,-0.38) -- (0.3,-0.38);
    \node[right, font=\small] at (0.32,-0.38) {$\mathcal{C}(u;d)_1$};
\end{scope}

\end{tikzpicture}
\end{figure}
\end{ex}
\subsection{Factorization of $Y_{d,\mathbb{D}}(z;t)$ at $t=0,1$ and $\infty$}

Proposition \ref{Yztresum} states that the resulting holomorphic solution, obtained by the resummation of $\widehat{Y}(z;t)$ along $d$, depends on the connected component (cut out by the two sets $\mathcal{C}(u;d)_0$ and $\mathcal{C}(u;d)_1$) to which the parameter $t$ belongs. To derive the connection formula of $Y_{d,\D}$ on different components $\D$, we first derive the factorization properties of the holomorphic solutions. 

First, recall from the proof of Proposition \ref{Yztresum} that for fixed $t$, the singular directions of the formal solution $\widehat{Y}(z;t)$, along which the formal power series part is not summable with respect to $z$, are contained in the
following three families
\begin{align}
    {\rm aSR}(u)&=\left\{\frac{1}{p}\mathrm{Arg}(u_i-u_j)+\frac{2m\pi}{p}\right\},\\
    {\rm aSR}(t^{-p}u)&=\left\{\frac{1}{p}\mathrm{Arg}(t^{-p}(u_k-u_l))+\frac{2m\pi}{p}\right\},\\
    {\rm aSR}((1-t^{-p})u)&=\left\{\frac{1}{p}\mathrm{Arg}((1-t^{-p})(u_k-u_l))+\frac{2m\pi}{p}\right\}.
\end{align}

Secondly, factorizations at $t=\infty,0,1$ in Propositions \ref{YWFfactorization}, \ref{t2fact} and \ref{t3fact} separate out the corresponding poles and place them into the second factor: \begin{itemize}
    \item in terms of $z,t$ coordinates, we have
\begin{align}\label{Yde1}
\widehat{Y}(z,t)= {_1\widehat{W}}(z_1(z,t);z_2(z,t))\widehat{F}^{(2)}(zt)\cdot \left(\frac{t-1}{1-t}\right)^{\hbar\delta P},
\end{align}
where the formal power series part of ${_1\widehat{W}}(z;t)={_1\widehat{W}}(z_1(z,t);z_2(z,t))$ is a series in $z$, whose coefficients are holomorphic at $t=\infty$; 
\item in terms of coordinates $z_2=tz$ and $t$, we have
\begin{equation}\label{Yde3}
\widehat{Y}(z,t)={_2\widehat{W}}(z_2;z_1(z_2,t))\widehat{F}^{(1)}(z),
\end{equation}
where the formal power series part of ${_2\widehat{W}}(z_2;t)={_2\widehat{W}}(z_2;z_1(z_2,t))$ is a series in $z_2$, whose coefficients depend on $t$ and are holomorphic at $t=0$. 

\item in terms of $z,t$ coordinates, we have
\begin{align}\label{Yde2}
\widehat{Y}(z,t)=\widehat{U}(z,t)\cdot\widehat{X}\left((t^{-p}-1)^{-1/p}z\right)\cdot p^{\hbar\delta P}.
\end{align}
\end{itemize}

At $t=\infty,0,1$, respectively, the corresponding first factors ${_1\widehat{W}}(z;\infty)$, ${_2\widehat{W}}(z_2;0)$ and ${\widehat{U}}(1;z)$ satisfy limiting linear equations in $z$. Every direction $d\notin {\rm aSR}(u)$ used below is admissible for the limiting equations in $z$ and remains
admissible in a sufficiently small parameter neighbourhood of $t=\infty,0,1$. 
Then

\begin{pro}\label{factresum}
Given $d\notin{\rm aSR}(u)$, there exist small neighbourhoods $B_\infty$, $B_0$ and $B_1$ of $t=\infty, 0, 1$ respectively, such that
\begin{enumerate}
\item[(a).]
For any $t\in B_\infty$, the resummation $_1{W}_d(z;t)$ of $_1\widehat{W}(z;t)$ with respect to $z$ along direction $d$ is well defined;

\item[(b).]
(set $z_2=zt$) For any $t\in B_0$, the resummation of $_2\widehat{W}(z_2;t)$ with respect to $z_2$ along the direction $d$ is well defined;

\item[(c).]
For any $t\in B_1$,  the resummation ${U}_d(z;t)$ of $\widehat{U}(z;t)$ with respect to $z$ along the direction $d$ is well defined.
\end{enumerate}

\end{pro}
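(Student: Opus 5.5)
The plan is to treat each of the three first factors as a one-parameter family of single-level meromorphic systems in the summation variable, and to repeat the finite-quotient Borel--Laplace argument of Proposition \ref{CanSol} and Proposition \ref{Yztresum}. Modulo $\mathfrak m^N$, each power series part becomes a formal series with coefficients in the finite-dimensional space $(\Uph/\mathfrak m^N)\otimes\End(\mathbb C^n)^{\otimes 2}$. These coefficients depend holomorphically on the parameter near $t=\infty$, $0$ or $1$:
\begin{itemize}
\item for ${}_1\widehat W$, the coefficients are polynomial in $t^{-1}$, by the degree bound \eqref{Kmpoly} with $z_2=zt$;
\item for ${}_2\widehat W$, they are polynomial in $t$ after rewriting $z_1^{-i}=t^iz_2^{-i}$, by \eqref{K'm};
\item for $\widehat U$, they are power series in $t-1$, with the $\mathfrak m$-adic bound \eqref{Ikbound}.
\end{itemize}
In each case we will write the ordinary Borel transforms of the $p$ subseries in $w=z^p$ (or $z_2^p$), exactly as in \eqref{Btrans}, and derive the Borel-plane differential equations from the defining equations.

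The defining equations are \eqref{kKZ1} for ${}_1\widehat W$, \eqref{kKZ2} for ${}_2\widehat W$, and \eqref{Uzeq}--\eqref{Uteq} for $\widehat U$. The key point is to identify the leading coefficient in each case, since it governs the possible Borel singularities.
\begin{itemize}
\item For ${}_1\widehat W$, the $z_1$-equation has leading term $u^{(1)}/z_1^{p+1}$. The terms coming from $\hbar P/(z_2-z_1)$ are regular in $z_1$ near $t=\infty$. Hence the Borel singularities lie on $\xi=u_a-u_b$, which is independent of $t$.
\item For ${}_2\widehat W$, the same reasoning gives singularities on $\xi=u_a-u_b$ in the $z_2$ Borel plane.
\item For $\widehat U$, the twisted equation \eqref{Uzeq} has leading term $\mathrm{ad}$-type $(u^{(1)}+u^{(2)})$ on the left and none on the right beyond the scalar exponential factored into $E_U$. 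So only differences $u_a+u_c-u_b-u_d$ appear a priori.
\end{itemize}

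Next, we run the same induction on the $\mathfrak m$-adic order as in the proof of Proposition \ref{Yztresum}. The key fact is that every right-hand term carries a factor in $\mathfrak m$. This shows that the Borel transforms continue holomorphically off the genuine singular set, for parameters in a small disc. For ${}_1\widehat W$ and ${}_2\widehat W$ the genuine singular set is $\{\xi=u_a-u_b\}$, whose directions form ${\rm aSR}(u)$. For $\widehat U$ at $t=1$, the apparent curves with $c\neq d$ should be removable by the transport argument (Case II) used for Proposition \ref{Yztresum}, since their $\xi$-locations match $\Sigma_{ijkl}$ and the genuine ones reduce to $\Sigma_{1,ab}$ together with the limits of $\Sigma_{2,ab}$ and $\Sigma^{(1)}_{3,ab}$ as $t\to1$. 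Alternatively, one can use the factorization identity \eqref{Idefinition}, $\widehat I=\widehat Q\,\Phi\,\widehat C^{-1}$, to compare singularities directly. Since $d\notin{\rm aSR}(u)$ and the singular set is closed, there is an open sector around $d$, and a neighbourhood $B_\infty$, $B_0$ or $B_1$ of the parameter value, on which no singular direction meets $d$. We would then prove the locally uniform exponential growth estimate by the standard Gronwall argument on the Borel-plane ODE, recombine via the beta integrals, Laplace-transform along $d$, and pass to the inverse limit using compatibility with the projections $\pi_N^{N+1}$.

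The main obstacle is case (c). Near $t=1$ the leading coefficient of $\widehat U$ in the $z$ variable mixes both tensor factors, and the $(t-1)$-expansion of $\widehat I$ has the negative powers $z^{-(p-1)r}$ recorded in \eqref{Ikbound}. I would first try to handle these by noting that, modulo $\mathfrak m^N$, they are finitely many and only shift the Gevrey index by a polynomial factor, so they do not affect $p$-summability. I would then show that the singular directions of the limiting equation \eqref{Uzeq} at $t=1$ are contained in ${\rm aSR}(u)$. This should follow because $(1-t^{-p})(u_a-u_b)\to0$, so $\Sigma^{(1)}_{3,ab}$ degenerates to $\xi=0$ and is absorbed into the factor $\widehat X$. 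Finally, I would use the openness of the admissibility condition to obtain $B_1$.
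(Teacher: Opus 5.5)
\textbf{There is a genuine gap in the identification of the Borel singularities, and it fails exactly where the proposition has content.}

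The resummation in (a) is in $z$ with $t$ fixed, so $z_1=z$ and $z_2=zt$ tend to $0$ together. The relevant ODE is therefore \eqref{Wzeq}, not \eqref{kKZ1} at fixed $z_2$. Regularity of $\hbar P/(z_2-z_1)$ in $z_1$ is a statement about a different Borel plane.

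In \eqref{Wzeq} the leading operator is $X\mapsto u^{(1)}X+t^{-p}[u^{(2)},X]$. So the $(ijkl)$-recursion is divided by $(u_i-u_j)+t^{-p}(u_k-u_l)$, and the a priori singular points depend on $t$. In particular, the points $t^{-p}(u_k-u_l)$ (the $i=j$ entries) tend to $\xi=0$ as $t\to\infty$. The walls $\mathcal C(u;d)_0$ of \eqref{Ct0} are rays running out to $t=\infty$. Hence every punctured neighbourhood of $\infty$ contains values of $t$ for which such a point lies exactly on the summation ray $e^{\mathi pd}\mathbb R_{>0}$.

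Openness of admissible directions cannot dispose of points colliding with the origin. Nor can the Case II transport argument, which needs a finite $t_0$ with $t_0^{-p}(u_k-u_l)=0$.

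The same problem occurs in the other two parts:
\begin{itemize}
\item in (b), with the points $t^{p}(u_i-u_j)$ as $t\to 0$;
\item in (c), with the points $(1-t^{-p})(u_a-u_b)$ as $t\to1$, where the walls $\mathcal C(u;d)_1$ emanate from $t=1$.
\end{itemize}
In (c), saying these are ``absorbed into $\widehat X$'' and then invoking openness from the limiting equation at $t=1$ does not control $t\neq 1$. Note also that if your $t$-independent singular set were correct, no neighbourhood would be needed at all.

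\textbf{The missing idea is a radius of convergence of the Borel transform that is uniform in $t$ near the special point.} This is what the paper supplies.
\begin{itemize}
\item On a circle $|1/t|=\varepsilon$, every nonvanishing leading scalar $(u_i-u_j)+t^{-p}(u_k-u_l)$ is bounded away from $0$. The recursion then gives $\sup_{|1/t|=\varepsilon}\|X^{[N]}_m(t)\|\le C_N\rho^m\Gamma(1+m/p)$.
\item The maximum principle extends this bound to the whole disc $|1/t|\le\varepsilon$. Hence the Borel transforms converge on $|\xi|<r$ with $r$ independent of $t$.
\item Shrinking the disc places all the points $t^{-p}(u_c-u_d)$ inside $|\xi|<r$, where they are regular points.
\item Only then does openness dispose of the remaining rays in \eqref{1Wrays}, which approach ${\rm aSR}(u)$.
\end{itemize}
Part (c) is handled the same way, with the roles of the second and third families interchanged.

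This estimate requires the $z$-coefficients to be holomorphic in $t$ at the special point, and your first bullet is wrong there. With $z_2=zt$, \eqref{Kmpoly} gives the coefficient of $z^j$ as $X_j(t)=\sum_{i\ge0}K_{j+i,i}t^{-i}$, and in (b) as $\sum_{i\ge 0}K'_{j+i,i}t^{i}$. These are infinite series, not polynomials. Their holomorphy at $t=\infty$ (resp.\ $t=0$) has to be derived, e.g.\ from \eqref{Wztform}, which expresses each coefficient as a finite combination of functions meromorphic there.
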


\begin{proof}
Proof of (a). It follows from the equations \eqref{kKZ1}-\eqref{kKZ2} satisfied by ${_1\widehat{W}}(z_1;z_2)\cdot \hat{F}^{(2)}(z_2)$ that ${_1\widehat{W}}(z_1;z_2)$ satisfies a linear system with respect to $z_1$ and $z_2$.
Under the change of coordinates $z=z_1$, $t=z_2/z_1$, the system becomes
\begin{align}\label{Wzeq}
\frac{{\partial {_1\widehat{W}}(z;t)}}{\partial z}= \left(\frac{u^{(1)}}{z^{p+1}}+\sum_{k=2}^p\frac{\hbar T_{[k]}^{(1)}}{z^k}-\frac{\hbar T^{(1)}}{z}+\frac{\hbar P}{z}\right)\cdot {_1\widehat{W}}+\left[\frac{u^{(2)}}{t^pz^{p+1}}+\sum_{k=2}^p\frac{\hbar T_{[k]}^{(2)}}{t^{k-1}z^k}-\frac{\hbar T^{(2)}}{z}, {_1\widehat{W}}\right].
\end{align}
In a finite quotient $\Uph/\frak m^N$, we write the regular part of $_1\widehat{W}^{[N]}(z,t)$ as
\[
1+\sum_{m\ge 1}K_m^{[N]}(zt)z^m=
1+\sum_{m\ge1}
\left(\sum_{j\ge 0}K_{m+j,j}^{[N]}t^{-j}\right)z^m=1+\sum_{m\ge 1} X_m(t)z^m.
\]
Comparing the coefficient of $z^m$ in the regular part of the both sides of \eqref{Wztform}, we see that $X^{[N]}_m(t)=\sum_{j\ge 0}K_{m+j,j}^{[N]}t^{-j}$ is holomorphic at $t=\infty$. By Proposition \ref{Yztresum}, the possible singular directions (genuine anti-Stokes rays) for $_1\widehat{W}^{[N]}(z,t)$ are 
\begin{equation}\label{1Wrays}
    \frac{1}{p}{\rm arg}(u_a-u_b), \ \frac{1}{p}{\rm arg}(t^{-p}(u_c-u_d)) \ \text{ and } \ \frac{1}{p}{\rm arg}((1-t^{-p})(u_a-u_b)).
\end{equation}

In the underlying recursion relation of equation \eqref{Wzeq}, the $(ijkl)$ entry of the leading term is multiplied by the scalar $(u_i-u_j)+t^{-p}(u_k-u_l)$. Then on a small circle $|1/t|=\varepsilon$, all non-identically zero scalars have a fixed positive distance from $0$. Using this, one checks that the coefficient recursion relation gives uniform estimates on the circle
\[
 \sup_{|1/t|=\var}\|X_m^{[N]}(t)\|
 \le C_N \rho^m\Gamma\left(1+\frac mp\right),
\]
for constants $\rho$ and $C_N$. Since the coefficients $X_m^{[N]}(t)$ are holomorphic, the maximum principle extends these estimates to the disc $D_\var(\infty):=\{t~;~|1/t|\le \varepsilon\}$. Thus the order-$p$ Borel transform of the regular part of $_1\widehat{W}^{[N]}(z,t)$ for all $t\in D_\var(\infty)$ has a common initial radius of convergence $r$ at the origin, with $r>0$ independent of $t\in D_\var(\infty)$.
After shrinking the disc if necessary, we can assume that all the points $t^{-p}(u_c-u_d)$ have norm less than $r$, thus are analytic points (on the Borel plane) of the order-$p$ Borel transform. Thus the rays of the second kind in \eqref{1Wrays} (if they do not coincide with the rays from first and third kinds) are admissible. On the other hand, the rays of first kind belong to ${\rm aSR}(u)$, and as $t$ sufficiently close to $\infty$ (i.e., $\var$ small enough), the rays $\frac{1}{p}{\rm arg}(u_a-u_b)$ and $\frac{1}{p}{\rm arg}((t^{-p}-1)(u_a-u_b))$, from the first and third kinds respectively, get sufficiently close to each other. Therefore, for any given fixed $d\notin{\rm aSR}(u)$, there exists $\var>0$ small enough (that depend on the choice of $d$) such that (the regular part of) $_1\widehat{W}^{[N]}(z,t)$ is $p$-summable along the direction $d$ for all $t\in B_\infty:=D_\var(\infty)$.

Here we add the following remark. Note that the evaluation of the formal $_1\widehat{W}^{[N]}(z,t)$ at $t=\infty$ 
\[{_1\widehat{W}}(z;\infty)=\left(1+\sum_{m\ge1}K_{m,0}^{[N]}z^m\right)
e^{\int D^{(1)}(z)\,dz}
z^{D_1^{(1)}+\hbar\delta P}\] satisfies the equation
\begin{align}\label{Winf}
\frac{{\partial {_1\widehat{W}}(z;\infty)}}{\partial z}= \left(\frac{u^{(1)}}{z^{p+1}}+\sum_{k=2}^p\frac{\hbar T_{[k]}^{(1)}}{z^k}-\frac{\hbar T^{(1)}}{z}+\frac{\hbar P}{z}\right)\cdot {_1\widehat{W}(z;\infty)}+\left[-\frac{\hbar T^{(2)}}{z}, {_1\widehat{W}}(z;\infty)\right].
\end{align}
The order-$p$ Borel–Laplace transform of the formal power series part of ${_1\widehat{W}}(z;t)$, with respect to $z$ along the direction $d\notin {\rm aSR}(u)$, can be taken locally uniformly in the neighbourhood $D_\var(\infty)$ of $t=\infty$. It can be seen from the fact that the order-$p$ Borel transform extends along $d$ with growth bounds locally uniform in $t^{-1}$, and the Laplace integrals converge locally uniformly in $t^{-1}$ as well. 
Therefore, the order-$p$ Borel-Laplace transform commutes with the evaluation at $t=\infty$. That is the resummation ${_1{W}}_{d}(z;t)$ is holomorphic around $t=\infty$ and its value ${_1{W}}_{d}(z;\infty)$ at $t=\infty$ is the resummation of ${_1\widehat{W}}(z;\infty)$ along the same direction $d$. 

Thus one can also understand ${_1{W}}_{d}(z;t)$ from the $t$-equation. First ${}_1\widehat{W}$ satisfies the $t$ equation 
\begin{align}\label{Wteq}
\frac{\partial{}_1\widehat{W}}{\partial t}
=
\left[
\frac{u^{(2)}}{t^{p+1}z^p}
+\sum_{k=2}^{p}
\frac{\hbar T_{[k]}^{(2)}}{t^kz^{k-1}}
-\frac{\hbar T^{(2)}}t,
{}_1\widehat{W}
\right]+\frac{\hbar P}{t(1-t)}{}_1\widehat{W}.
\end{align}
It has a regular singularity at $t=\infty$. Note that ${_1\widehat{W}}(z;\infty)$, therefore ${_1{W}}_d(z;\infty)$, commutes with $\hbar T^{(2)}$, i.e., $[\hbar T^{(2)},{}_1W_d(z;\infty)]=0$. Starting with the initial value $W_0(z):={_1{W}}_d(z;\infty)$ that commutes with the residue $\hbar T^{(2)}$ at the regular singularity, the equation \eqref{Wteq} has a unique power series solution in $t^{-1}$ with $z$ regarded as a parameter
\[ W_0(z)+\sum_{k\ge 1} W_k(z) t^{-k}.\]
It is convergent at $t=\infty$ and coincides with ${}_1W_d(z;t)$.

Part (b) is proved in the same way. 

Proof of (c). By Proposition \ref{t3fact} we have
\begin{equation*}
    \widehat U(z,t)
 =\left(I_0(z)
       +\sum_{k\ge 1}{I}_k(z)(t-1)^k\right)E_U(z)=\widehat{U}_0(z)+\sum_{k\ge 1} \widehat{U}_k(z)(t-1)^k.
\end{equation*}
We can rewrite the part $\sum_{k\ge 0}{I}_k(z)(t-1)^k$ as a series in $z$. It follows from
the bound \eqref{Ikbound} that it is a formal Laurent series with
the degree of $z$ at least $-(p-1)(N -1)$. After separating the finitely many negative powers, one can then take its order-$p$ Borel-Laplace transform.
Then by Proposition \ref{Yztresum}, the possible singular directions (genuine anti-Stokes rays) for $\widehat{U}^{[N]}$ are those in \eqref{1Wrays}. Similar to the proof of part $(a)$, as $t$ in a sufficiently small neighbourhood $B_1$ of $t=1$, the rays of the third kind are admissible, the rays of first kind belong to ${\rm aSR}(u)$,  while the rays of second kind get close to the first kind in ${\rm aSR}(u)$. Therefore, for the given fixed $d\notin {\rm aSR}(u)$, the resummation $U_d(z,t)$ is defined for all $t$ in $B_1$.

Here we also remark that ${U}_{d}(z,t)$ satisfies the $t$-equation \eqref{Uteq} with a regular singularity at $t=1$, and its evaluation at $t=1$ coincides with the resummation $U_{d}(z,1)$ of the formal $\widehat{U}(z,1)=\widehat{I}_0 E_U$ along the same ${d}$. 
Actually, we can regard $z$ as parameter and take $U_d(z)_0:=U_d(z,1)$ as an initial condition of the $t$ equation \eqref{Uteq} with the regular singularity term $\frac{{\rm ad}_{\hbar P}}{1-t}$. Just like in the formal setting, since $[\hbar P, U_d(z,1)]=0$, the singular term ${\rm ad}_{\hbar P}/(t-1)$ annihilates the initial value. The recursion for the equation \eqref{Uteq} therefore gives a unique holomorphic solution with the prescribed form $U_d(z)_0+\sum_{k\ge 1} U_d(z)_k\cdot (t-1)^k$ around the simple pole $t=1$. This holomorphic function around $t=1$ simply coincides with $U_d(z,t)$.
\end{proof}

Therefore, in each case the second factor captures the local singular behavior, and the connection matrices of $Y_{d,\D}$ are determined by (the Stokes phenomenon of) the second factors. Taking the resummation of the both sides of identities \eqref{Yde1}, \eqref{Yde3} and \eqref{Yde2} with respect to the variable $z$ along an admissible direction $d$, the involved estimates are locally uniform in $t$, the factorizations of formal solutions yield (here again the resummation is taken in every finite quotient $\Uph/\frak m^N$ in standard way and then pass to the inverse limit)
\begin{cor}\label{mainpro}
For each $k=1,...,2p-1$ let $\D_k$ be the connected component of 
\[({\mathbb{C}\setminus\{0,1\})}\setminus (\mathcal{C}(u;d)_0\cup\mathcal{C}(u;d)_1)\] that contains the ray ${\rm arg}(t) = \frac{k\pi}{p}$, and let $\D_{(0,1)}$, $\D_{(1,+\infty)}$ be the connected components that contain the intervals $(0,1)$ and $(1,+\infty)$, respectively. On the interval $(1,+\infty)$ we choose ${\rm arg}(t-1)=0$ and ${\rm arg}(1-t)=-\pi$, then we have the factorization identities (recall that $z_1=z, z_2=zt$ and $\omega=(t^{-p}-1)^{-1/p}z$)
\begin{align}\label{YWF1}
_2W_{d+\frac{k\pi}{p}}(z_2;z_1)F^{(1)}_{d}(z_1)&=Y_{d,\D_k}(z,t)={_1W}_{d}(z_1;z_2)F^{(2)}_{d+\frac{k\pi}{p}}(z_2)\cdot e^{\pi\mathi \hbar\delta P}, \ \ k=1,...,2p-1,\\ \label{YUX}
{_2W}_{d}(z_2;z_1)F^{(1)}_{d}(z_1)&=Y_{d,\D_{(0,1)}}(z,t)={U}_d(z;t){X}_{d}(\omega)\cdot p^{\hbar\delta P},\\ \label{YWF2}
{U}_d(z;t){X}_{d+\frac{\pi}{p}}(\omega)\cdot p^{\hbar\delta P}&=Y_{d,\D_{(1,+\infty)}}(z,t)={_1W}_{d}(z_1;z_2)F^{(2)}_{d}(z_2)\cdot e^{\pi\mathi \hbar\delta P}.
\end{align}
Here 
\begin{itemize}
    \item $F^{(2)}_{d_2}(z_2=zt)$ is the function obtained by the resummation of $\widehat{F}_{d_2}(z_2=zt)$ along the admissible direction ${\rm arg}(z)=d_2-{\rm arg}(t)$, or equivalently  ${\rm arg}(z_2)=d_2\notin {\rm aSR}(u)$; for $t\in B_\infty$ a small neighbourhood of $t=\infty$, the function $_1W_{d_1}(z_1;z_2)$ is the resummation of $_1\widehat{W}(z;t)$ along an admissible direction ${\rm arg}(z)=d_1\notin {\rm aSR}(u)$. 
    
    \item $F^{(1)}_{d_1}(z_1)$ is the resummation of $\widehat{F}^{(1)}(z_1)$ along the admissible direction $d_1$; and $t\in B_0$ a small neighbourhood of $t=0$, the function $_2W_{d_2}(z_2;z_1)$ is the resummation of $_2\widehat{W}(z_2=zt;z_1=z)$ along the direction ${\rm arg}(z)=d_2-{\rm arg}(t)$, or equivalently  ${\rm arg}(z_2)=d_2\notin {\rm aSR}(u)$. 
    \item $U_{d}(z;t)$ denotes the function obtained by the  resummation of $\widehat{U}(z;t)$ along the admissible direction $d$, $t\in B_1$ a small neighbourhood of $t=1$,; and $X_{d}(\omega=(t^{-p}-1)^{-1/p}z)$ and $X_{d+\frac{\pi}{p}}(\omega)$ denote resummations of $\widehat{X}(\omega)$ along $d$ and $d+\frac{\pi}{p}$ respectively (that is for $z$ along $d$, while $t\in \D_{(0,1)}$ and $t\in \D_{(1,+\infty)}$ respectively).
\end{itemize}
\end{cor}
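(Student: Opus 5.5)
The plan is to deduce each identity in Corollary~\ref{mainpro} from the corresponding formal factorization (\eqref{Yde1}, \eqref{Yde3}, \eqref{Yde2}), together with the uniqueness of sectorial solutions having prescribed asymptotics. All arguments are carried out modulo $\frak m^N$ and then passed to the inverse limit. Take the first identity in \eqref{YWF1} as the model case.

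Fix $t\in \D_k\cap B_0$. By Proposition~\ref{factresum}(b), the resummation ${}_2W_{d+\frac{k\pi}{p}}(z_2;z_1)$ along ${\rm arg}(z_2)=d+\frac{k\pi}{p}$ exists. Since $\arg t$ is close to $\frac{k\pi}{p}$, this direction corresponds to ${\rm arg}(z)$ close to $d$. Also $F^{(1)}_d(z_1)$ exists by Proposition~\ref{CanSol}. The product ${}_2W\cdot F^{(1)}$ solves the system \eqref{kKZ1}--\eqref{kKZ2}, because the formal product does and resummation is a differential algebra morphism on sectors (it is multiplicative and commutes with $\partial_z$, $\partial_t$). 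On a common sector around ${\rm arg}(z)=d$, the product is asymptotic to ${}_2\widehat W\widehat F^{(1)}$. By \eqref{Yde3}, this equals $\widehat Y$ after re-expanding the coefficients in $t$.

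The key point here is that the asymptotic expansion \eqref{YDasy} in $z$ at fixed $t$ is inherited from the two-variable asymptotics. I would check this using the local uniformity in $t$ of the Borel--Laplace estimates from Propositions~\ref{Yztresum} and \ref{factresum}. By the uniqueness in Proposition~\ref{Yztresum}, which holds because the sector has opening larger than $\pi/p$, the product coincides with $Y_{d,\D_k}$ on $B_0\cap\D_k$. Analytic continuation in $t$ then gives the identity on all of $\D_k$.

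The remaining identities follow the same pattern:
\begin{itemize}
\item For the second identity in \eqref{YWF1}, use $t\in B_\infty\cap \D_k$, Proposition~\ref{factresum}(a), and \eqref{Yde1}.
\item For \eqref{YUX} and \eqref{YWF2}, use $t\in B_1\cap\D_{(0,1)}$ or $t\in B_1\cap\D_{(1,+\infty)}$, Proposition~\ref{factresum}(c), and \eqref{Yde2}.
\end{itemize}
In each case we must track which direction is used for the second factor. For $X$ the variable is $\omega=(t^{-p}-1)^{-1/p}z$. As $t$ crosses the wall $\mathcal C(u;d)_1$ near $t=1$ from $(0,1)$ to $(1,\infty)$, the quantity $t^{-p}-1$ changes sign, so $\arg\omega$ shifts by $\pi/p$. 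This produces $X_d$ versus $X_{d+\frac{\pi}{p}}$. Similarly, $\arg z_2=d+\arg t$ gives the shift $d+\frac{k\pi}{p}$ for $F^{(2)}$ and ${}_2W$.

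The constant factors come from branch choices. Using Remark~\ref{tbranchlem} with ${\rm arg}(t-1)=0$ and ${\rm arg}(1-t)=-\pi$ on $(1,+\infty)$, we get $\bigl(\tfrac{t-1}{1-t}\bigr)^{\hbar\delta P}=e^{\pi\mathi\hbar\delta P}$. Continuing this through the regions $\D_k$ along paths avoiding $t=1$ yields the same factor, and the factor $p^{\hbar\delta P}$ is carried over unchanged from \eqref{Yde2}.

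The main obstacle is the middle step: showing that the resummed product really has the asymptotics \eqref{YDasy} of $Y_{d,\D}$ rather than only a formal identity. The first factor is resummed in $z_1$ (or $z_2$) with coefficients that are Laurent series in $t^{\pm1}$ or $t-1$, while $\widehat Q$ has coefficients rational in $t$. My approach is to compare both sides as functions of $z$ at fixed $t$ inside $B_0$, $B_\infty$ or $B_1$. The product is a genuine solution with some Gevrey-$1/p$ asymptotic series at fixed $t$. That series solves the same formal recursion \eqref{Qzrecur} and has the normalization $1+O(z)$, so by the uniqueness in Proposition~\ref{t1fact} it must equal $\widehat Q(z,t)$. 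Watson's lemma on the wide sector then forces equality with $Y_{d,\D}$. For the $t\to1$ case, the negative $z$-powers allowed by \eqref{Ikbound} must be handled first by separating finitely many terms in each quotient, as done in Proposition~\ref{factresum}(c).
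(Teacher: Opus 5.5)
Your proposal is correct and is essentially the paper's own argument: resum both sides of the formal factorizations \eqref{Yde1}, \eqref{Yde3}, \eqref{Yde2} in each quotient $\Uph/\frak m^N$ with estimates locally uniform in $t$, identify the products with $Y_{d,\D}$ by uniqueness of the $p$-sum, and read off the directions and the factor $e^{\pi\mathi\hbar\delta P}$ from $\arg t$, $\arg\omega$ and the branch convention. You additionally spell out the Watson-lemma and analytic-continuation steps that the paper leaves implicit.

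One small point: at a fixed $t$, the $z$-recursion alone determines the entries of $Q_{m+1}$ with $i\ne j$, $k\ne l$ only when $u_i-u_j+t^{-p}(u_k-u_l)\ne 0$. Your alternative appeal to the uniqueness in Proposition~\ref{t1fact} should therefore be made for generic $t$, or replaced by the direct use of \eqref{Yde3}, which you also give.
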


\subsection{The proof of Theorem \ref{thmRLL}}\label{proofThm}

Without loss of generality, we can make 
\begin{assumption*}
The initial anti-Stokes ray $\tau_0$ and the $u\in\h_{\rm reg}$ are such that the Stokes matrices $S_{2k+1}(u)$ and $S_{2k}(u)$ are lower and upper triangular, respectively.
\end{assumption*}
This assumption gives the identities in Theorem \ref{thmRLL} in the stated
form. Otherwise, the same identities would involve the
corresponding permutation of the Stokes matrices. Then as a corollary of the factorization identities \eqref{YWF1}-\eqref{YWF2}, 

\begin{cor}[Theorem \ref{thmRLL}]
We have the commutation relations
\begin{align}\label{SS=SS}
e^{\pi\mathi \hbar\delta P}S^{(1)}_i(u)e^{-\pi\mathi \hbar\delta P}S_{i+k}^{(2)}(u)&=S^{(2)}_{i+k}(u)e^{\pi\mathi \hbar\delta P}S^{(1)}_i(u)e^{-\pi\mathi \hbar\delta P}, \ \ \text{for any $i$ and } 1<k<2p-1,\\
\label{pRLL1} R^{12}S^{(1)}_i(u)e^{-\pi\mathi \hbar\delta P}S^{(2)}_i(u) &=S^{(2)}_i(u)e^{-\pi\mathi \hbar\delta P}S^{(1)}_i(u) R^{12}, \ \ \text{for $i$ odd} \\ \label{pRLL12}
R^{12}S^{(2)}_i(u)e^{-\pi\mathi \hbar\delta P}S^{(1)}_i(u) &=S^{(1)}_i(u)e^{-\pi\mathi \hbar\delta P}S^{(2)}_i(u)R^{12}, \ \ \text{for $i$ even} \\ \label{pRLL3}
S^{(2)}_{i-1}(u)e^{-\pi\mathi \hbar\delta P} S^{(1)}_{i}(u)e^{\pi\mathi \hbar\delta P} &=e^{-\pi\mathi \hbar\delta P}S^{(1)}_{i}(u) R^{12} S^{(2)}_{i-1}(u), \ \ \text{for $i$ odd},\\ \label{pRLL4}
S^{(1)}_{i-1}(u)e^{-\pi\mathi \hbar\delta P}S^{(2)}_{i}(u)e^{\pi\mathi \hbar\delta P} &=e^{-\pi\mathi \hbar\delta P}S^{(2)}_{i}(u) R^{12} S^{(1)}_{i-1}(u), \ \ \text{for $i$ even}.
\end{align}
\end{cor}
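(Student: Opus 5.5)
We compare the factorizations of Corollary \ref{mainpro} along closed loops in the $t$-plane, with $z$ fixed in a sector around the admissible direction $d$ chosen with $\tau_{il}<d<\tau_{il+1}$. The plan has four steps.

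\textbf{Step 1: connection matrices between adjacent chambers.} Two holomorphic solutions of the system \eqref{kKZ1}-\eqref{kKZ2} differ by a constant right factor. In each chamber $\D$ we therefore compute $Y_{d,\D}^{-1}Y_{d,\D'}$ for adjacent chambers $\D,\D'$ in two ways. This is done by continuing the identities \eqref{YWF1}-\eqref{YWF2} from the neighbourhoods $B_0,B_1,B_\infty$ into the chambers, which is legitimate by uniqueness of solutions with prescribed asymptotics (Proposition \ref{Yztresum}).

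\textbf{Step 2: crossing walls of type $\mathcal C(u;d)_0$.} We cross the walls between $\D_k$ and $\D_{k+1}$ near $t=\infty$ using the right-hand factorization in \eqref{YWF1}. The factor ${}_1W_d$ is unchanged there, since it is holomorphic at $t=\infty$. The only change is in $F^{(2)}_{d+k\pi/p}$ versus $F^{(2)}_{d+(k+1)\pi/p}$, which by the definition of Stokes matrices differ by $S^{(2)}_{i+k}(u)$, conjugated by $e^{\pi\mathi\hbar\delta P}$.

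The same crossing can be made near $t=0$ using the left-hand factorization. There ${}_2W_{d+k\pi/p}$ changes to ${}_2W_{d+(k+1)\pi/p}$, while $F^{(1)}_d$ stays fixed. Going once around $t=0$, a full turn of $\arg t$ by $2\pi$, moves $z_2$ through all sectors. It also produces monodromy factors $e^{2\pi\mathi D^{(2)}_1}$ and $e^{2\pi\mathi\hbar\delta P}$ from $t^{D_1^{(2)}+\hbar\delta P}$, and it shifts $d$ for $z_1$. Applying the relation $S_{i+2p}=e^{-2\pi\I D_1}S_ie^{2\pi\I D_1}$ to the two resulting expressions for one monodromy matrix expresses the $F^{(1)}$ Stokes data $S_i^{(1)}$ against the $S^{(2)}_{i+k}$ data. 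For $1<k<2p-1$ the chambers $\D_k$ are not adjacent to the $t=1$ walls, so only pure $\delta P$ phases enter, and this gives \eqref{SS=SS}.

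\textbf{Step 3: the chambers near $t=1$.} For the chambers $\D_{(0,1)}$ and $\D_{(1,+\infty)}$, the factorizations \eqref{YUX} and \eqref{YWF2} share the holomorphic factor $U_d$. The jump across the $\mathcal C(u;d)_1$ wall is therefore $p^{-\hbar\delta P}X_d(\omega)^{-1}X_{d+\pi/p}(\omega)p^{\hbar\delta P}$, a Stokes matrix of the explicitly solvable equation \eqref{Xeq}. This Stokes matrix is computed in closed form in Lemma \ref{PR0}, and it equals (after the $\delta P$ phases) the $R$-matrix $R^{12}$ or its flip, depending on the triangular ordering.

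Expressing the same jump through the $t=0$ and $t=\infty$ factorizations gives a second expression. Around the small loop encircling $t=1$ near the real axis, the jumps are $S^{(1)}_i$ and $S^{(2)}_i$ (for $k=0$, and $k=1$ or $k=2p-1$), together with $R$. Equating the two expressions yields \eqref{pRLL1}/\eqref{pRLL12} from the loop through $\D_{(0,1)},\D_{(1,+\infty)}$, and \eqref{pRLL3}/\eqref{pRLL4} from the loops involving $\D_1$ or $\D_{2p-1}$ and $S_{i-1}$. The parity of $i$ only decides which triangular type occurs, and hence whether $R$ or $R^{21}$ appears.

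\textbf{Step 4: the final identity.} The last identity of Theorem \ref{thmRLL} should follow from the monodromy of $t^{D_1^{(2)}+\hbar\delta P}$ around $t=0$ combined with Lemma \ref{Dcommute}.

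\textbf{Main obstacle.} I expect the hardest part to be the careful bookkeeping in Step 3. This means tracking the branches of $(1-t)^{-\hbar\delta P}$ and $(t-1)^{\hbar\delta P}$ (Remark \ref{tbranchlem}) and the factor $p^{\hbar\delta P}$, so that the $e^{\pm\pi\mathi\hbar\delta P}$ phases land exactly as stated. It also means identifying the Stokes matrix of \eqref{Xeq} with $R$ in the correct normalization. My first attempt would be to check this on the rank-one blocks spanned by $E_{ii}\otimes E_{jj}$ and $E_{jj}\otimes E_{ii}$, where \eqref{Xeq} reduces to a $2\times2$ confluent hypergeometric-type equation.
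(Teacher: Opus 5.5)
Your proposal has a genuine gap: none of the moves you allow produces an individual factor $S^{(1)}_i$.

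\textbf{What your moves can produce.} You fix $z=z_1$ near the direction $d$ and compare only $Y_{d,\D}$ with $Y_{d,\D'}$ for chambers at the same $d$. At fixed $d$, crossing a wall of $\mathcal C(u;d)_0$ gives a $\delta P$-conjugate of some $S^{(2)}_{i+k}$, and crossing $\mathcal C(u;d)_1$ gives Stokes data of \eqref{Xeq}. But $F^{(1)}_d(z_1)$ never crosses an anti-Stokes ray, so $S^{(1)}_i$ cannot appear.

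\textbf{Why your substitutes do not help.} Redoing the same wall crossing near $t=0$ only expresses the unknown connection matrix of ${}_2W$ through $S^{(2)}_{i+k}$; it gives no relation. The full turn around $t=0$ does not help either. With $z_1$ fixed it does not rotate $z_1$ at all, and with $z_2$ fixed it rotates $z_1$ through all $2p$ sectors. In both cases you obtain a single identity between ordered products of all Stokes matrices and formal monodromies. The separate exchange relations for each pair $(i,i+k)$ cannot be extracted from such an identity.

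Likewise, a small loop around $t=1$ at fixed $d$ stays in $\D_{(0,1)}\cup\D_{(1,+\infty)}$. It crosses only the two half-lines of $\mathcal C(u;d)_1$ through $t=1$. Its jumps are the upper and lower triangular Stokes matrices of \eqref{Xeq}, and neither $S^{(1)}_i$ nor $S^{(2)}_i$ occurs.

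\textbf{The missing move.} You need to rotate $z_1$ from $d$ to $d\pm\pi/p$ while keeping the direction of $z_2$ fixed. In the $t=0$ factorization of Corollary \ref{mainpro}, ${}_2W_{d_2}$ depends only on the direction $d_2$ of $z_2$, so it cancels. The connection matrix is then $F^{(1)}_{d+\pi/p}(z_1)^{-1}F^{(1)}_d(z_1)=S^{(1)}_i$; for example, $Y_{d,\D_{(0,1)}}=Y_{d+\frac{\pi}{p},\D_{2p-1}}S^{(1)}_i$.

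\textbf{How the relations then follow.} Each relation comes from composing elementary moves in two orders, inside the two-parameter family of solutions labelled by the directions of $z_1$ and $z_2$.
\begin{itemize}
\item For \eqref{SS=SS}: rotate $z_2$ from $d+k\pi/p$ to $d+(k+1)\pi/p$, where ${}_1W$ cancels and the jump is $e^{-\pi\mathi\hbar\delta P}S^{(2)}_{i+k}e^{\pi\mathi\hbar\delta P}$. Also rotate $z_1$ from $d$ to $d+\pi/p$, with jump $S^{(1)}_i$. Do these in either order. The restriction $1<k<2p-1$ keeps all relative angles $(k-1)\pi/p,\,k\pi/p,\,(k+1)\pi/p$ away from the chambers at $t=1$.
\item For \eqref{pRLL1}--\eqref{pRLL12}: compare two paths from $Y_{d,\D_{(1,+\infty)}}$ to $Y_{d+\frac{\pi}{p},\D_{(1,+\infty)}}$. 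The first goes through $Y_{d,\D_{(0,1)}}$, with jump $R_0^{12}$ from Lemma \ref{PR0}, and then $Y_{d+\frac{\pi}{p},\D_{2p-1}}$. The second goes through $Y_{d,\D_1}$ and $Y_{d+\frac{\pi}{p},\D_{(0,1)}}$.
\item For \eqref{pRLL3}--\eqref{pRLL4}: make the analogous comparison using $d-\pi/p$ and $S^{(1)}_{i-1}$.
\end{itemize}
Only after these paths are set up does your planned bookkeeping apply, namely the $e^{\pm\pi\mathi\hbar\delta P}$ phases and $R=e^{\pi\mathi\hbar\delta P}R_0$.
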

\begin{proof}
To prove \eqref{SS=SS},
by Corollary \ref{mainpro}, for any chosen admissible direction $d$ such that $\tau_{il}<d<\tau_{il+1}$, let us consider the following solutions (the terms $e^{\pm \pi\mathi\hbar\delta P}$ account for the argument of $(t-1)$ in the multivalued function $\left(\frac{t-1}{1-t}\right)^{\hbar\delta P}$)
\begin{align}\label{WYW1}
_2W_{d+\frac{k\pi}{p}}(z_2;z_1)F^{(1)}_{d}(z_1)&=Y_{d,\D_k}(z,t)={_1W}_{d}(z_1;z_2)F^{(2)}_{d+\frac{k\pi}{p}}(z_2)\cdot e^{\pi\mathi \hbar\delta P},\\ \label{WYW2}
_2W_{d+\frac{(k+1)\pi}{p}}(z_2;z_1)F^{(1)}_{d}(z_1)&=Y_{d,\D_{k+1}}(z,t)={_1W}_{d}(z_1;z_2)F^{(2)}_{d+\frac{(k+1)\pi}{p}}(z_2)\cdot e^{\pi\mathi \hbar\delta P},\\
_2W_{d+\frac{(k+1)\pi}{p}}(z_2;z_1)F^{(1)}_{d+\frac{\pi}{p}}(z_1)&=Y_{d+\frac{\pi}{p},\D_{k+1}}(z,t)={_1W}_{d+\frac{\pi}{p}}(z_1;z_2)F^{(2)}_{d+\frac{(k+1)\pi}{p}}(z_2)\cdot e^{\pi\mathi \hbar\delta P},\\
_2W_{d+\frac{k\pi}{p}}(z_2;z_1)F^{(1)}_{d+\frac{\pi}{p}}(z_1)&=Y_{d+\frac{\pi}{p},\D_k}(z,t)={_1W}_{d+\frac{\pi}{p}}(z_1;z_2)F^{(2)}_{d+\frac{k\pi}{p}}(z_2)\cdot e^{\pi\mathi \hbar\delta P}.
\end{align}

It then follows from the second identity in \eqref{WYW1} and the second identity in \eqref{WYW2} that
\begin{equation}
Y_{d,\D_{k+1}}(z,t)^{-1}\cdot Y_{d,\D_k}(z,t)=e^{-\pi\mathi \hbar\delta P} F^{(2)}_{d+\frac{(k+1)\pi}{p}}(z_2)^{-1}\cdot F^{(2)}_{d+\frac{k\pi}{p}}(z_2)e^{\pi\mathi \hbar\delta P}=e^{-\pi\mathi \hbar\delta P} S_{i+k}^{(2)}(u) e^{\pi\mathi \hbar\delta P}.
\end{equation}
Here the second identity follows from the definition of Stokes matrices: since $\tau_{il}<d<\tau_{il+1}$, we have $\tau_{(i+k)l}<d+\frac{k\pi}{p}<\tau_{(i+k)l+1}$ and $\tau_{(i+k+1)l}<d+\frac{(k+1)\pi}{p}<\tau_{(i+k+1)l+1}$. Here recall $l=\frac{\#({\rm aSR}(u)\cap[0,2\pi))}{2p }$.

Similarly, we have 
\begin{align*}
  Y_{d+\frac{\pi}{p},\D_{k+1}}(z,t)^{-1}\cdot Y_{d,\D_{k+1}}(z,t)&=F^{(1)}_{d+\frac{\pi}{p}}(z_1)^{-1}\cdot F^{(1)}_{d}(z_1)=S^{(1)}_i(u)\\
    Y_{d+\frac{\pi}{p},\D_k}(z,t)^{-1}\cdot Y_{d,\D_{k}}(z,t)&=F^{(1)}_{d+\frac{\pi}{p}}(z_1)^{-1}\cdot F^{(1)}_{d}(z_1)=S^{(1)}_i(u)\\
    Y_{d+\frac{\pi}{p},\D_{k+1}}(z,t)^{-1}\cdot Y_{d+\frac{\pi}{p},\D_{k}}(z,t)&=e^{-\pi\mathi \hbar\delta P}F^{(2)}_{d+\frac{(k+1)\pi}{p}}(z_2)^{-1}\cdot F^{(2)}_{d+\frac{k\pi}{p}}(z_2)e^{\pi\mathi \hbar\delta P}=e^{-\pi\mathi \hbar\delta P}S^{(2)}_{i+k}(u)e^{\pi\mathi \hbar\delta P}.
\end{align*}
Then the identity \eqref{SS=SS}
can be obtained by computing the monodromy from $Y_{d,\D_k}(z,t)$ to $Y_{d+\frac{\pi}{p},\D_{k+1}}(z,t)$ in two equivalent ways: from 
$Y_{d,\D_k}(z,t)$ to 
$Y_{d,\D_{k+1}}(z,t)$ to $Y_{d+\frac{\pi}{p},\D_{k+1}}(z,t)$
, and from 
$Y_{d,\D_k}(z,t)$ to 
$Y_{d+\frac{\pi}{p},\D_k}(z,t)$ to 
$Y_{d+\frac{\pi}{p},\D_{k+1}}(z,t)$.

To prove the identity \eqref{pRLL12}, by Corollary \ref{mainpro}, we consider the following solutions of the system \eqref{kKZ1}-\eqref{kKZ2} (the terms $e^{\pm \pi\mathi\hbar\delta P}$ account for the argument of $(t-1)$):
\begin{align*}{U}_d(z;t){X}_{d+\frac{\pi}{p}}(\omega)\cdot p^{\hbar\delta P}=&Y_{d,\D_{(1,+\infty)}}(z,t)={_1W}_{d}(z_1;z_2)F^{(2)}_{d}(z_2)\cdot e^{\pi\mathi \hbar\delta P},\\
{_2W}_{d}(z_2;z_1)F^{(1)}_{d}(z_1)=&Y_{d,\D_{(0,1)}}={U}_d(z;t){X}_{d}(\omega)\cdot p^{\hbar\delta P}, \\
{_2W}_{d}(z_2;z_1)F^{(1)}_{d+\frac{\pi}{p}}(z_1)=&Y_{d+\frac{\pi}{p},\D_{2p-1}}={_1W}_{d+\frac{\pi}{p}}(z_1;z_2)F^{(2)}_{d}(z_2)\cdot e^{-\pi\mathi \hbar\delta P},  \\
{_2W}_{d+\frac{\pi}{p}}(z_2;z_1)F^{(1)}_{d}(z_1)=&Y_{d,\D_{1}}={_1W}_{d}(z_1;z_2)F^{(2)}_{d+\frac{\pi}{p}}(z_2)\cdot e^{\pi\mathi \hbar\delta P},\\
{_2W}_{d+\frac{\pi}{p}}(z_2;z_1)F^{(1)}_{d+\frac{\pi}{p}}(z_1)=&Y_{d+\frac{\pi}{p},\D_{(0,1)}}={U}_{d+\frac{\pi}{p}}(z;t){X}_{d+\frac{\pi}{p}}(\omega)\cdot p^{\hbar\delta P}, \\
{U}_{d+\frac{\pi}{p}}(z;t){X}_{d}(\omega)\cdot p^{\hbar\delta P}=&Y_{d+\frac{\pi}{p},\D_{(1,+\infty)}}={_1W}_{d+\frac{\pi}{p}}(z_1;z_2)F^{(2)}_{d+\frac{\pi}{p}}(z_2)\cdot e^{-\pi\mathi \hbar\delta P}.
\end{align*} 
Figure 1 illustrates the asymptotic regions defining these solutions and the analytic continuation paths for calculating the connection matrices among them.

\begin{figure}[h]
\centering
\begin{tikzpicture}[scale=0.8, every node/.style={font=\small}, >=Stealth]

\def\lenA{1.0}
\def\lenB{1.8}
\def\cosang{0.866}
\def\sinang{0.5}

\begin{scope}[shift={(0,0)}]
    \draw[thick] (0,0) -- (2.5,0);
    \draw[thick] (0,0) -- (2.165,1.25);
    \fill (0,0) circle (1.5pt) node[below left] {$0$};
    \fill (\lenA,0) circle (1pt) node[below] {$z_1$};
    \fill (\lenB,0) circle (1pt) node[below] {$z_2$};
    \node at (1.25,-0.8) {$Y_{d,\D_{(1,+\infty)}}$};
\end{scope}

\begin{scope}[shift={(4,2.5)}]
    \draw[thick] (0,0) -- (2.5,0);
    \draw[thick] (0,0) -- (2.165,1.25);
    \fill (0,0) circle (1.5pt) node[below left] {$0$};
    \fill (\lenA,0) circle (1pt) node[below] {$z_2$};
    \fill (\lenB,0) circle (1pt) node[below] {$z_1$};
    \node at (1.25,-0.8) {$Y_{d,\D_{(0,1)}}$};
\end{scope}

\begin{scope}[shift={(8,2.5)}]
    \draw[thick] (0,0) -- (2.5,0);
    \draw[thick] (0,0) -- (2.165,1.25);
    \fill (0,0) circle (1.5pt) node[below left] {$0$};
    \fill (\lenA,0) circle (1pt) node[below] {$z_2$};
    \fill (\lenB*\cosang,\lenB*\sinang) circle (1pt) node[above right] {$z_1$};
    \node at (1.25,-0.8) {$Y_{d+\frac{\pi}{p},\D_{2p-1}}$};
\end{scope}

\begin{scope}[shift={(11.5,2.5)}]
    \draw[thick] (0,0) -- (2.5,0);
    \draw[thick] (0,0) -- (2.165,1.25);
    \fill (0,0) circle (1.5pt) node[below left] {$0$};
    \fill (\lenB,0) circle (1pt) node[below] {$z_2$};
    \fill (\lenA*\cosang,\lenA*\sinang) circle (1pt) node[above right] {$z_1$};
    \node at (1.25,-0.8) {$Y_{d+\frac{\pi}{p},\D_{2p-1}}$};
\end{scope}

\begin{scope}[shift={(16,0)}]
    \draw[thick] (0,0) -- (2.5,0);
    \draw[thick] (0,0) -- (2.165,1.25);
    \fill (0,0) circle (1.5pt) node[below left] {$0$};
    \fill (\lenA*\cosang,\lenA*\sinang) circle (1pt) node[above left] {$z_1$};
    \fill (\lenB*\cosang,\lenB*\sinang) circle (1pt) node[above right] {$z_2$};
    \node at (1.25,-0.8) {$Y_{d+\frac{\pi}{p},\D_{(1,+\infty)}}$};
\end{scope}

\begin{scope}[shift={(4,-2.5)}]
    \draw[thick] (0,0) -- (2.5,0);
    \draw[thick] (0,0) -- (2.165,1.25);
    \fill (0,0) circle (1.5pt) node[below left] {$0$};
    \fill (\lenA,0) circle (1pt) node[below] {$z_1$};
    \fill (\lenB*\cosang,\lenB*\sinang) circle (1pt) node[above right] {$z_2$};
    \node at (1.25,-0.8) {$Y_{d,\D_1}$};
\end{scope}

\begin{scope}[shift={(7.5,-2.5)}]
    \draw[thick] (0,0) -- (2.5,0);
    \draw[thick] (0,0) -- (2.165,1.25);
    \fill (0,0) circle (1.5pt) node[below left] {$0$};
    \fill (\lenB,0) circle (1pt) node[below] {$z_1$};
    \fill (\lenA*\cosang,\lenA*\sinang) circle (1pt) node[above right] {$z_2$};
    \node at (1.25,-0.8) {$Y_{d,\D_1}$};
\end{scope}

\begin{scope}[shift={(12,-2.5)}]
    \draw[thick] (0,0) -- (2.5,0);
    \draw[thick] (0,0) -- (2.165,1.25);
    \fill (0,0) circle (1.5pt) node[below left] {$0$};
    \fill (\lenA*\cosang,\lenA*\sinang) circle (1pt) node[above left] {$z_2$};
    \fill (\lenB*\cosang,\lenB*\sinang) circle (1pt) node[above right] {$z_1$};
    \node at (1.25,-0.8) {$Y_{d+\frac{\pi}{p},\D_{(0,1)}}$};
\end{scope}

\draw[->,thick] (2.5,0.8) -- (3.5,1.8);
\draw[->,thick] (2.5,-0.8) -- (3.5,-1.8);
\draw[->,thick] (6.8,2.5) -- (7.5,2.5);
\draw[thick] (10.8,2.45) -- (11.2,2.45);
\draw[thick] (10.8,2.55) -- (11.2,2.55);
\draw[->,thick] (14.3,1.8) -- (15.5,0.8);
\draw[thick] (6.8,-2.55) -- (7.2,-2.55);
\draw[thick] (6.8,-2.45) -- (7.2,-2.45);
\draw[->,thick] (10.3,-2.5) -- (11.5,-2.5);
\draw[->,thick] (14.7,-1.8) -- (15.5,-0.8);

\end{tikzpicture}
\caption{The two connection paths for Relation~\ref{pRLL12}.}
\end{figure}

Note that we have $\tau_{(i+1)l}<d+\frac{\pi}{p}<\tau_{(i+1)l+1}$. 
It then follows from the definition of Stokes matrices that
\begin{align}\label{connect21}
Y_{d,\D_{(0,1)}}&=Y_{d+\frac{\pi}{p},\D_{2p-1}}\cdot S^{(1)}_{i}(u),\\
\label{connect3}
Y_{d+\frac{\pi}{p},\D_{2p-1}}&=Y_{d+\frac{\pi}{p},\D_{(1,+\infty)}}\cdot e^{\pi\mathi\hbar\delta P} S^{(2)}_{i}(u)e^{-\pi\mathi\hbar\delta P},\\
Y_{d,\D_{(1,+\infty)}}&=Y_{d,\D_{1}}\cdot e^{-\pi\mathi\hbar\delta P} S^{(2)}_{i}(u)e^{\pi\mathi\hbar\delta P},\\
Y_{d,\D_{1}}&=Y_{d+\frac{\pi}{p},\D_{(0,1)}}\cdot S^{(1)}_{i}(u).
\end{align}
The opposite conjugations in the second and third identities follow from the the branch convention in Remark \ref{tbranchlem}.
Meanwhile, by Lemma \ref{PR0}, if $i$ is even, then we have 
\begin{align}\label{connect2}
Y_{d,\D_{(1,+\infty)}}^{-1}\cdot Y_{d,\D_{(0,1)}}& =X_{d+\frac{\pi}{p}}(\omega)^{-1}\cdot X_{d}(\omega)=R^{(12)}_0.
\end{align}
Here the matrix $R_0\in {\rm End}(\IC^n)\otimes {\rm End}(\IC^n)\llbracket\hbar\rrbracket$ is 
\begin{equation}
    R_0=\sum_{i\ne j, i,j=1}^n E_{ii}\otimes E_{jj}+\sum_{i=1}^n E_{ii}\otimes E_{ii}+(e^{{\pi\mathi \hbar}}-e^{-{\pi\mathi \hbar}})\sum_{1\le j<i\le n}E_{ij}\otimes E_{ji},
\end{equation} 
and satisfies $R=e^{\pi\mathi \hbar\delta P}R_0=R_0e^{\pi\mathi \hbar\delta P}$.

Thus, we have two ways to compute the connection matrix between $Y_{d,\D_{(1,+\infty)}}$ and $Y_{d+\frac{\pi}{p},\D_{(1,+\infty)}}$:
on the one hand, we have (along the upper path in Figure 1)
\begin{align}\nonumber
    Y_{d+\frac{\pi}{p},\D_{(1,+\infty)}}&=Y_{d+\frac{\pi}{p},\D_{2p-1}}\cdot e^{\pi\mathi \hbar\delta P} S^{(2)}_{i}(u)^{-1} e^{-\pi\mathi \hbar\delta P} \\ \nonumber
    &=Y_{d,\D_{(0,1)}}\cdot S^{(1)}_{i}(u)^{-1}e^{\pi\mathi \hbar\delta P}S^{(2)}_{i}(u)^{-1} e^{-\pi\mathi \hbar\delta P}\\ \label{Rll1}
&=Y_{d,\D_{(1,+\infty)}}\cdot R_0^{(12)} S^{(1)}_{i}(u)^{-1} e^{\pi\mathi \hbar\delta P}S^{(2)}_{i}(u)^{-1} e^{-\pi\mathi \hbar\delta P}.
\end{align}
On the other hand, we have (along the lower path in Figure 1)
\begin{align}\label{Rll2}
Y_{d+\frac{\pi}{p},\D_{(1,+\infty)}}=Y_{d,\D_{(1,+\infty)}}\cdot  e^{-\pi\mathi \hbar\delta P}S^{(2)}_{i}(u)^{-1} e^{\pi\mathi \hbar\delta P} S^{(1)}_{i}(u)^{-1} R_0^{(12)}.
\end{align}
Comparing the identities \eqref{Rll1} and \eqref{Rll2} gives
\begin{equation}\label{R0LL}
R_0^{12}e^{\pi\mathi \hbar\delta P}S^{(2)}_i(u)e^{-\pi\mathi \hbar\delta P}S^{(1)}_i(u)=S^{(1)}_i(u)e^{-\pi\mathi \hbar\delta P}S^{(2)}_i(u)e^{\pi\mathi \hbar\delta P} R_0^{12}.
\end{equation} 
Together with $R=e^{\pi \mathi \hbar\delta P}R_0=R_0e^{\pi \mathi \hbar\delta P}$, it proves the identity \eqref{pRLL12}.

The same argument applied to the Stokes matrix $S_{i+1}(u)$ (exchanging the roles of upper and lower Stokes matrices) yields the identity \eqref{pRLL1}.

To prove the identity \eqref{pRLL4}, let us consider the solutions
\begin{align*}
{U}_d(z;t){X}_{d+\frac{\pi}{p}}(\omega)\cdot p^{\hbar\delta P}&=Y_{d,\D_{(1,+\infty)}}={_1W}_{d}(z_1;z_2)F^{(2)}_{d}(z_2)\cdot e^{\pi\mathi \hbar\delta P},\\
{_2W}_{d}(z_2;z_1)F^{(1)}_{d}(z_1)&=Y_{d,\D_{(0,1)}}={U}_d(z;t){X}_{d}(\omega)\cdot p^{\hbar\delta P}, \\
{_2W}_{d}(z_2;z_1)F^{(1)}_{d-\frac{\pi}{p}}(z_1)&=Y_{d-\frac{\pi}{p},\D_{1}}={_1W}_{d-\frac{\pi}{p}}(z_1;z_2)F^{(2)}_{d}(z_2)\cdot e^{\pi\mathi \hbar\delta P},  \\ 
{_2W}_{d+\frac{\pi}{p}}(z_2;z_1)F^{(1)}_{d}(z_1)&=Y_{d,\D_{1}}={_1W}_{d}(z_1;z_2)F^{(2)}_{d+\frac{\pi}{p}}(z_2)\cdot e^{\pi\mathi \hbar\delta P},\\
{_1W}_{d-\frac{\pi}{p}}(z_1;z_2)F^{(2)}_{d+\frac{\pi}{p}}(z_2)\cdot e^{\pi\mathi \hbar\delta P}&=Y_{d-\frac{\pi}{p},\D_{2}}={_2W}_{d+\frac{\pi}{p}}(z_2;z_1)F^{(1)}_{d-\frac{\pi}{p}}(z_1).
\end{align*} 
Figure 2 illustrates the asymptotic regions defining these solutions and the paths for calculating the connection matrices among them.

\begin{figure}[h]
\centering
\begin{tikzpicture}[scale=0.75, every node/.style={font=\small}, >=Stealth]

\def\lenA{1.0}
\def\lenB{1.8}
\def\cosang{0.9397}
\def\sinang{0.3420}

\begin{scope}[shift={(0,0)}]
    \draw[thick] (0,0) -- (2.5,0);
    \draw[thick] (0,0) -- (2.5*\cosang, 2.5*\sinang);
    \draw[thick] (0,0) -- (2.5*\cosang, -2.5*\sinang);
    \fill (0,0) circle (1.5pt) node[below left] {$0$};
    \fill (\lenA,0) circle (1pt) node[below] {$z_1$};
    \fill (\lenB,0) circle (1pt) node[below] {$z_2$};
    \node at (1.25,-1.0) {$Y_{d,\D_{(1,+\infty)}}$};
\end{scope}

\begin{scope}[shift={(4,2.7)}]
    \draw[thick] (0,0) -- (2.5,0);
    \draw[thick] (0,0) -- (2.5*\cosang, 2.5*\sinang);
    \draw[thick] (0,0) -- (2.5*\cosang, -2.5*\sinang);
    \fill (0,0) circle (1.5pt) node[below left] {$0$};
    \fill (\lenA,0) circle (1pt) node[below] {$z_2$};
    \fill (\lenB,0) circle (1pt) node[below] {$z_1$};
    \node at (1.25,-1.0) {$Y_{d,\D_{(0,1)}}$};
\end{scope}

\begin{scope}[shift={(8,2.7)}]
    \draw[thick] (0,0) -- (2.5,0);
    \draw[thick] (0,0) -- (2.5*\cosang, 2.5*\sinang);
    \draw[thick] (0,0) -- (2.5*\cosang, -2.5*\sinang);
    \fill (0,0) circle (1.5pt) node[below left] {$0$};
    \fill (\lenA,0) circle (1pt) node[above] {$z_2$};
    \fill (\lenB*\cosang,-\lenB*\sinang) circle (1pt) node[below right] {$z_1$};
    \node at (1.25,-1.25) {$Y_{d-\frac{\pi}{p},\D_1}$};
\end{scope}

\begin{scope}[shift={(11.5,2.7)}]
    \draw[thick] (0,0) -- (2.5,0);
    \draw[thick] (0,0) -- (2.5*\cosang, 2.5*\sinang);
    \draw[thick] (0,0) -- (2.5*\cosang, -2.5*\sinang);
    \fill (0,0) circle (1.5pt) node[below left] {$0$};
    \fill (\lenB,0) circle (1pt) node[above] {$z_2$};
    \fill (\lenA*\cosang,-\lenA*\sinang) circle (1pt) node[below right] {$z_1$};
    \node at (1.25,-1.25) {$Y_{d-\frac{\pi}{p},\D_1}$};
\end{scope}

\begin{scope}[shift={(16,0)}]
    \draw[thick] (0,0) -- (2.5,0);
    \draw[thick] (0,0) -- (2.5*\cosang, 2.5*\sinang);
    \draw[thick] (0,0) -- (2.5*\cosang, -2.5*\sinang);
    \fill (0,0) circle (1.5pt) node[below left] {$0$};
    \fill (\lenA*\cosang,-\lenA*\sinang) circle (1pt) node[below right] {$z_1$};
    \fill (\lenB*\cosang,\lenB*\sinang) circle (1pt) node[above right] {$z_2$};
    \node at (1.25,-1.25) {$Y_{d-\frac{\pi}{p},\D_2}$};
\end{scope}

\begin{scope}[shift={(4,-2.7)}]
    \draw[thick] (0,0) -- (2.5,0);
    \draw[thick] (0,0) -- (2.5*\cosang, 2.5*\sinang);
    \draw[thick] (0,0) -- (2.5*\cosang, -2.5*\sinang);
    \fill (0,0) circle (1.5pt) node[below left] {$0$};
    \fill (\lenA,0) circle (1pt) node[below] {$z_1$};
    \fill (\lenB*\cosang, \lenB*\sinang) circle (1pt) node[above right] {$z_2$};
    \node at (1.25,-1.0) {$Y_{d,\D_1}$};
\end{scope}

\begin{scope}[shift={(7.5,-2.7)}]
    \draw[thick] (0,0) -- (2.5,0);
    \draw[thick] (0,0) -- (2.5*\cosang, 2.5*\sinang);
    \draw[thick] (0,0) -- (2.5*\cosang, -2.5*\sinang);
    \fill (0,0) circle (1.5pt) node[below left] {$0$};
    \fill (\lenB,0) circle (1pt) node[below] {$z_1$};
    \fill (\lenA*\cosang,\lenA*\sinang) circle (1pt) node[above right] {$z_2$};
    \node at (1.25,-1.0) {$Y_{d,\D_1}$};
\end{scope}

\begin{scope}[shift={(12,-2.7)}]
    \draw[thick] (0,0) -- (2.5,0);
    \draw[thick] (0,0) -- (2.5*\cosang, 2.5*\sinang);
    \draw[thick] (0,0) -- (2.5*\cosang, -2.5*\sinang);
    \fill (0,0) circle (1.5pt) node[below left] {$0$};
    \fill (\lenB*\cosang,-\lenB*\sinang) circle (1pt) node[below right] {$z_1$};
    \fill (\lenA*\cosang,\lenA*\sinang) circle (1pt) node[above right] {$z_2$};
    \node at (1.25,-1.25) {$Y_{d-\frac{\pi}{p},\D_2}$};
\end{scope}

\draw[->,thick] (2.5,0.8) -- (3.5,1.9);
\draw[->,thick] (2.5,-0.8) -- (3.5,-1.9);
\draw[->,thick] (6.8,2.7) -- (7.5,2.7);
\draw[thick] (10.8,2.65) -- (11.2,2.65);
\draw[thick] (10.8,2.75) -- (11.2,2.75);
\draw[->,thick] (14.3,1.9) -- (15.5,0.8);
\draw[thick] (6.8,-2.75) -- (7.2,-2.75);
\draw[thick] (6.8,-2.65) -- (7.2,-2.65);
\draw[->,thick] (10.3,-2.7) -- (11.5,-2.7);
\draw[thick] (14.8,-2.0) -- (15.15,-1.45);
\draw[thick] (15.0,-2.05) -- (15.35,-1.5);

\end{tikzpicture}
\caption{The two connection paths for Relation~\ref{pRLL4}.}

\end{figure}

We have two ways to compute the connection matrix between $Y_{d,\D_{(1,+\infty)}}$ and $Y_{d-\frac{\pi}{p},\D_{2}}$:
on the one hand, we have (along the upper path in Figure 2)
\begin{align*}
    Y_{d-\frac{\pi}{p},\D_{2}}&=Y_{d-\frac{\pi}{p},\D_{1}}\cdot e^{-\pi\mathi \hbar\delta P}S^{(2)}_{i}(u)^{-1}e^{\pi\mathi \hbar\delta P} \\
    &=Y_{d,\D_{(0,1)}}\cdot S^{(1)}_{i-1}(u) e^{-\pi\mathi \hbar\delta P}S^{(2)}_{i}(u)^{-1}e^{\pi\mathi \hbar\delta P} \\
&=Y_{d,\D_{(1,+\infty)}}\cdot R_0^{(12)} S^{(1)}_{i-1}(u) e^{-\pi\mathi \hbar\delta P}S^{(2)}_{i}(u)^{-1}e^{\pi\mathi \hbar\delta P},
\end{align*}
on the other hand, we have (along the lower path in Figure 2)
\begin{align*}
Y_{d-\frac{\pi}{p},\D_{2}}=Y_{d,\D_{1}}\cdot S^{(1)}_{i-1}(u)=Y_{d,\D_{(1,+\infty)}}\cdot e^{-\pi\mathi \hbar\delta P}S^{(2)}_{i}(u)^{-1} e^{\pi\mathi \hbar\delta P} S^{(1)}_{i-1}(u).
\end{align*}
It proves the identity \eqref{pRLL4}. 
\end{proof}

\begin{lem}\label{PR0}
The Stokes matrix $X_{d+\frac{\pi}{p}}(\omega)^{-1}\cdot X_{d}(\omega)$, associated to the direction $d$ with $\tau_{il}<d<\tau_{il+1}$ for $i$ even, of the equation \eqref{Xeq} is equal to $R^{12}_0$. For $i$ odd the Stokes matrix is the lower triangular counterpart.
\end{lem}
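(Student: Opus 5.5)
The plan is to reduce equation \eqref{Xeq} to a family of decoupled scalar or rank-two problems and compute their Stokes data in closed form. First, $u^{(2)}=\sum_k 1\otimes E_{kk}\otimes u_k$ and $P=\sum_{a,b}E_{ab}\otimes E_{ba}$ both preserve the decomposition of $\C^n\otimes\C^n$ into the lines $\C e_a\otimes e_a$ and the planes $V_{ab}=\mathrm{span}(e_a\otimes e_b,e_b\otimes e_a)$ with $a<b$. Note that $\hbar$ is a scalar in $\C[\hbar]$, so here the coefficients are genuinely scalar: \eqref{Xeq} is an ordinary equation over $\C\llbracket\hbar\rrbracket$, and it can be handled modulo $\hbar^N$ as in Proposition \ref{CanSol}.

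On the line $\C e_a\otimes e_a$ the equation is scalar:
\[
X'=\Big(\frac{u_a}{\omega^{p+1}}+\frac{p\hbar}{\omega}\Big)X .
\]
It has no Stokes phenomenon. Consequently the Stokes matrix acts as the identity there, which matches the diagonal coefficient $1$ of $E_{aa}\otimes E_{aa}$ in $R_0$.

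On $V_{ab}$, in the basis $(e_a\otimes e_b,\,e_b\otimes e_a)$, the equation reads
\[
X'=\Big(\frac{\mathrm{diag}(u_b,u_a)}{\omega^{p+1}}+\frac{p\hbar}{\omega}\begin{pmatrix}0&1\\1&0\end{pmatrix}\Big)X .
\]
The formal solution is normalized by $\omega^{p\hbar\delta P}$, and $\delta P$ vanishes on $V_{ab}$, so the formal monodromy is trivial there. Substituting $s=\omega^{-p}$ (or $s=\omega^{p}$) turns this rank-two system into a confluent hypergeometric (Kummer/Whittaker) system with a rank-one irregular singularity and a regular singularity at $0$. Its Stokes multipliers across a single anti-Stokes direction are classical, namely $\pm 2\pi \mathi\,\hbar/(\Gamma(1+\hbar)\Gamma(1-\hbar))$ up to the appropriate normalization, that is $e^{\pi\mathi\hbar}-e^{-\pi\mathi\hbar}$. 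I would compute them by writing the Borel transform explicitly: the $(1,2)$ entry of the Borel transform satisfies a first-order ODE in $\xi$ with regular singular point at $\xi=u_a-u_b$ and exponent $\pm\hbar$. The jump across the ray then comes from the difference of branches of $(u_a-u_b-\xi)^{\hbar}$, giving $2\mathi\sin(\pi\hbar)$.

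Because the map $\omega\mapsto\omega^p$ is $p$-to-one, between $d$ and $d+\pi/p$ only one anti-Stokes ray per pair $\{a,b\}$ is crossed, namely the one for $u_a-u_b$ or the one for $u_b-u_a$, depending on the sector. Under the assumption that $\tau_{il}<d<\tau_{il+1}$ with $i$ even, exactly the rays making the product triangular in the ordering of Theorem \ref{thmRLL} are crossed. The contribution of $V_{ab}$ is therefore unipotent: $1+(e^{\pi\mathi\hbar}-e^{-\pi\mathi\hbar})E_{ab}\otimes E_{ba}$ for the appropriate ordering $j<i$. Different $V_{ab}$ are invariant subspaces, so the factors commute and sum to $R_0$. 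The case of $i$ odd is symmetric and gives the opposite triangular counterpart.

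The main obstacle is fixing the signs and conventions. I must determine which of $u_a-u_b$ or $u_b-u_a$ is crossed, and which off-diagonal entry is nonzero, consistently with the parity convention for $S_i$. I must also confirm that the normalization $\omega^{p\hbar\delta P}$ together with the factor $p^{\hbar\delta P}$ produces exactly $e^{\pi\mathi\hbar}-e^{-\pi\mathi\hbar}$ rather than a multiple of it. I would settle these by a direct check in the case $n=2$, $p=1$, where the answer must reproduce the known FRT $R$-matrix, and then transport the result to general $p$ via $\omega\mapsto\omega^p$.
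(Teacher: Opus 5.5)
Your plan is the paper's own proof. The paper substitutes $\eta=\omega^p$ to get $\frac{dX}{d\eta}=\bigl(\frac{u^{(2)}}{p\eta^{2}}+\frac{\hbar P}{\eta}\bigr)X$. It then splits $\mathbb{C}^n\otimes\mathbb{C}^n$ into the invariant lines $\mathbb{C}\,v_a\otimes v_a$, which carry trivial Stokes data, and the planes spanned by $v_a\otimes v_b,\ v_b\otimes v_a$. On each plane it uses the exactly solvable $2\times 2$ system, whose Stokes matrix is unipotent with off-diagonal entry $e^{\pi\mathrm{i}\hbar}-e^{-\pi\mathrm{i}\hbar}$, upper or lower triangular according to the parity of $i$. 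The following remarks of yours are all correct:
\begin{itemize}
\item $\delta P$ vanishes on each plane, so $\omega^{p\hbar\delta P}$ and $p^{\hbar\delta P}$ are irrelevant there.
\item Exactly one anti-Stokes ray per pair $\{a,b\}$ lies between $d$ and $d+\pi/p$.
\item The multiplier is $2\pi\mathrm{i}\hbar/(\Gamma(1+\hbar)\Gamma(1-\hbar))$ up to sign.
\end{itemize}

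The Borel-plane computation you propose for checking this value is, however, wrong as stated. On the plane, write the formal solution as $\widehat H(\eta)e^{-\Lambda/\eta}$ with $\Lambda=\mathrm{diag}(u_b,u_a)/p$, $\lambda_1=u_b/p$, $\lambda=(u_b-u_a)/p$, and $\widehat H_{12}=\sum_{k\ge1}h_k\eta^k$. The recursion gives $h_1=-\hbar/\lambda$ and $\lambda h_{k+1}=\frac{k^2-\hbar^2}{k}h_k$. Hence the Borel transform of the off-diagonal entry is $-\frac{\hbar}{\lambda}\,{}_2F_1(1-\hbar,1+\hbar;1;\xi/\lambda)$.

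So a single entry satisfies a second-order (hypergeometric) equation, not a first-order one. Its local exponents at $\xi=\lambda$ are $0$ and $-1$, not $\pm\hbar$: the singularity is a simple pole plus a logarithm. The $\hbar$-dependent behaviour $\xi^{-1\pm\hbar}$ occurs only at $\xi=\infty$.

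This is forced by the trivial formal monodromy on the plane. A singularity of type $(\lambda-\xi)^{\hbar}$ would, after the Laplace transform, put a factor $\eta^{1+\hbar}e^{-\lambda_1/\eta}$ into the jump. But the jump must be a constant multiple of the first column, which behaves like $e^{-\lambda_1/\eta}(1+O(\eta))$.

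The Stokes constant is $2\pi\mathrm{i}$ times the residue of the pole, namely $\pm 2\pi\mathrm{i}\hbar/(\Gamma(1-\hbar)\Gamma(1+\hbar))$. By the reflection formula this equals $\pm(e^{\pi\mathrm{i}\hbar}-e^{-\pi\mathrm{i}\hbar})$. So your value survives, but it comes from the connection coefficient of ${}_2F_1$ at $\xi=\lambda$, not from a branch jump of $(\lambda-\xi)^{\hbar}$. The sign is then fixed by the orientation $X_{d+\pi/p}^{-1}X_d$ and the ordering convention, as you say.
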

\begin{proof}
Under the coordinate change $\eta=\omega^p=(t^{-p}-1)^{-1}z^p$, the equation \eqref{Xeq} becomes
\begin{align}\label{Xetaeq}
\frac{{dX}}{d \eta}=\left(\frac{ u^{(2)}}{p\eta^{2}}+\frac{\hbar P}{\eta}\right)X.
\end{align}
The transition matrix $X_{d+\frac{\pi}{p}}(\omega)^{-1}\cdot X_{d}(\omega)$ is then the Stokes matrix $X_{pd+\pi}(\eta)^{-1}X_{pd}(\eta)$ of \eqref{Xetaeq}.
The Stokes matrices of the equation \eqref{Xetaeq} were computed in \cite{Xu}, see also \cite{LWX}. We include a brief proof here for the convenience of the reader. 

The coefficient of the equation \eqref{Xetaeq} is in $\Uph\otimes {\rm End}(\IC^n)\otimes {\rm End}(\IC^n)$. Hence the equation may be viewed as an $n^2\times n^2$ block linear system. The equation is then decomposed to multiple $2\times 2$ and $1\times 1$ block systems: let $\{v_i\}_{1\le i\le n}$ be the standard basis of $\mathbb{C}^n$, and $\{v_i\otimes v_j\}$ the basis of $\mathbb{C}^n\otimes \mathbb{C}^n$, we have for any $v\in \wUph$
\begin{align}
    P(v\otimes v_i\otimes v_j)&=v\otimes v_j\otimes v_i,\\
    u^{(2)}(v\otimes v_i\otimes v_j)&=v\otimes v_i\otimes u_jv_j, \ \text{for } i,j=1,...,n
\end{align}
Thus, for fixed index $i\ne j$, the equation \eqref{Xetaeq} can be restricted to the subspace spanned by the two vectors $v_i\otimes v_j$ and $v_j\otimes v_i$, reducing to a $2\times 2$ block system of the form 
\begin{equation}\label{pd2by2}
\frac{d\mathcal{M}}{d\eta}=\left({\left(
  \begin{array}{cc}
    u_j/p & 0  \\
    0 & u_i/p
  \end{array}
\right)}\frac{1}{\eta^{2}}
+ {\left(
  \begin{array}{cc}
    0 &\hbar \\
    \hbar & 0
  \end{array}
\right)}\frac{1}{\eta}\right)\mathcal{M}.
\end{equation}
This equation can be solved exactly and (under the assumption that we made on $u$ and the choice of $\tau_{il}<d<\tau_{il+1}$ with $i$ even) shown to have the Stokes matrix ${\left(\begin{array}{cc}
1&e^{{\pi\mathi \hbar}}-e^{-{\pi\mathi \hbar}}\\
0&1
\end{array}\right)}$. (Note that if $\tau_{il}<d<\tau_{il+1}$, and $i$ is odd, the corresponding Stokes matrix is lower triangular.)

Thus, the Stokes matrix of the $2\times 2$ subsystem, associated to each pair $i\ne j$, contributes to the $E_{ii}\otimes E_{jj}+(e^{{\pi\mathi \hbar}}-e^{-{\pi\mathi \hbar}})E_{ij}\otimes E_{ji}$ term of $R_0$. The Stokes matrices of the $1\times 1$ block systems are trivial and contribute the terms $ E_{ii}\otimes E_{ii}$ in $R_0$. It concludes the computation.
\end{proof}

\begin{rmk}
The Stokes matrices have ones along the diagonal. Sometimes it is more convenient to use the normalized Stokes matrices $\mathcal{S}_i(u)=S_i(u)e^{\pi \mathi D_1} $ by adding the formal monodromy part.
Note that we have the commutativity conditions 
\begin{equation}\label{commutativity}
\big[S_i^{(2)}(u),\; D_1^{(1)}+\hbar\delta P\big]=0,
\qquad
\big[S_i^{(1)}(u),\;D_1^{(2)}+\hbar \delta P\big]=0,
\qquad
\big[R_0,\; D_1^{(1)}+ D_1^{(2)}\big]=0.
\end{equation}
For example, the first identity follows from the fact that
${D^{(1)}_1+\hbar\delta P}$ commutes with the coefficient matrix of the equation \eqref{KZin2}.
Multiply $e^{\pi \mathi ( D_1^{(1)}+ D_1^{(2)}+\hbar\delta P)}$ to both sides of \eqref{R0LL}, and applying the commutativity conditions \eqref{commutativity} gives the identity 
\[R^{12}\mathcal{S}^{(2)}_i(u)\mathcal{S}^{(1)}_i(u) =\mathcal{S}^{(1)}_i(u)\mathcal{S}^{(2)}_i(u) R^{12}.\]
In the special case $p = 1$, the above identity gives the FRT realization of the quantum group $U_\hbar(\mathfrak{gl}_n)$. 
\end{rmk}

\section{Quantization of Riemann-Hilbert-Birkhoff maps}\label{QRHBmap}
This section interprets Theorem \ref{thmRLL} as a quantization of the classical RHB maps. In Section \ref{PoissonStokes} we define an associative algebra $U_{\hbar}^{(p)}(R)$ via the exchange relations in Theorem \ref{thmRLL}, and compute the induced Poisson brackets in the semiclassical limit on the space of classical Stokes matrices. In Section \ref{PoissonDeRham} we recall the Poisson brackets on the space of meromorphic differential equations, as the semiclassical limit of the coefficient algebra $\Uph$. In Section \ref{qRHBmap} we use the quantum Stokes matrices to define an algebra homomorphism from $U_{\hbar}^{(p)}(R)$ to $\Uph$, and show that its classical limit gives the pull back of the classical RHB maps. The last section then interpret the Poisson geometric nature of the RHB maps from this new viewpoint.

\subsection{Poisson brackets on the space of Stokes matrices via the semiclassical limit}\label{PoissonStokes}
\begin{defi}\label{defUpR}
Let $R$ be the standard $R$-matrix. The algebra $U_{\hbar}^{(p)}(R)$ is the topological $\C\llbracket\hbar\rrbracket$-algebra generated by elements $\{(l_{2k-1})_{ji},(l_{2k})_{ij}\}_{k=1,...,p,1\le i< j\le n}$ and $\{K_a, K_a^{-1}\}_{a=1,...,n}$, subject to the following relations. Set
\begin{equation*}\label{eq:Lpm-def}
L_{1}=\sum_{i,j=1}^{n}(l_1)_{ij}\otimes E_{ij}, \quad \cdots \quad , L_{2p}=\sum_{i,j=1}^{n}(l_{2p})_{ij}\otimes E_{ij}, \quad K=\sum_{a=1}^n K_a\otimes E_{aa}
\end{equation*}
as elements in $U_{\hbar}^{(p)}(R)\otimes \mathrm{End}(\mathbb{C}^{n})$,
where we set 
\[(l_{2k})_{ij}=0 \ \text{ for } i>j ; \quad (l_{2k-1})_{ij}=0 \ \text{ for } i<j;  \quad \text{ and } \ (l_{2k-1})_{ii}=1=(l_{2k})_{ii}.\]
Then the defining relations are the following identities in $U_{\hbar}^{(p)}(R)\otimes \mathrm{End}(\mathbb{C}^{n})\otimes \mathrm{End}(\mathbb{C}^{n})$:
\begin{align*}
e^{\pi\mathi \hbar\delta P}L^{(1)}_ie^{-\pi\mathi \hbar\delta P}L_{i+k}^{(2)}&=L^{(2)}_{i+k}e^{\pi\mathi \hbar\delta P}L^{(1)}_ie^{-\pi\mathi \hbar\delta P}, \ \ \text{for any $i$ and } 1<k<2p-1,\\
R^{12}L^{(1)}_ie^{-\pi\mathi \hbar\delta P}L^{(2)}_i &=L^{(2)}_ie^{-\pi\mathi \hbar\delta P}L^{(1)}_i R^{12}, \ \ \text{for $i$ odd} \\ 
R^{12}L^{(2)}_ie^{-\pi\mathi \hbar\delta P}L^{(1)}_i &=L^{(1)}_ie^{-\pi\mathi \hbar\delta P}L^{(2)}_iR^{12}, \ \ \text{for $i$ even} \\
L^{(2)}_{i-1}e^{-\pi\mathi \hbar\delta P} L^{(1)}_{i}e^{\pi\mathi \hbar\delta P} &=e^{-\pi\mathi \hbar\delta P}L^{(1)}_{i} R^{12} L^{(2)}_{i-1}, \ \ \text{for $i$ odd},\\
L^{(1)}_{i-1}e^{-\pi\mathi \hbar\delta P}L^{(2)}_{i}e^{\pi\mathi \hbar\delta P} &=e^{-\pi\mathi \hbar\delta P}L^{(2)}_{i} R^{12} L^{(1)}_{i-1}, \ \ \text{for $i$ even},
\end{align*}
and the generators $K_a, K_a^{-1}$ satisfy
\begin{equation}
K_a K_a^{-1}=1=K_a^{-1}K_a, \qquad
[K_a,K_b ]=0, \qquad K^{(1)}L_i^{(2)}(K^{(1)})^{-1}=
e^{-2\pi i\hbar\delta P}L_i^{(2)}e^{2\pi i\hbar\delta P}.
\end{equation}
Here the indices are extended cyclically by $L_{i+2p}=K^{-1} L_i K$.
\end{defi}
\begin{rmk}
The exchange relations are reminiscent of the
$R$-matrix exchange relations in the lattice current algebras introduced in \cite{AFFS, AGS}. The lattice
current variables $J_n$ are full matrix (not just triangular) variables and satisfy
\[J^{(1)}_{i}J^{(2)}_{j}=J^{(2)}_{j}J^{(1)}_{i} \text{ for } |i-j|>1, \text{ and } \quad J^{(1)}_{i}R^{12}J^{(2)}_{i+1}=J^{(2)}_{i+1}J^{(1)}_{i},
  \quad
R^{21}J^{(1)}_{i}J^{(2)}_{i}=J^{(2)}_{i}J^{(1)}_{i}R^{12}.
\]
The $L$ matrices, or the triangular Stokes matrices $S_1,\ldots,S_{2p}$, play an analogous role: they form a
cyclic family of $R$-matrix exchange variables attached to the Stokes
rays of an irregular singularity.
\end{rmk}

The semiclassical limit of the associative algebra $U_{\hbar}^{(p)}(R)$ induces a Poisson bracket on $(U_-\times U_+)^p\times \h$: for $i=1,\ldots,2p$, set $(s_i)_{ab}=\overline{(l_i)_{ab}}$ the image of the generator $(l_i)_{ab}$ in $U_{\hbar}^{(p)}(R)/\hbar U_{\hbar}^{(p)}(R)$, regarded as the $(ab)$-entry coordinates on the $i$-th component of \[(s_1,s_2,\ldots, s_{2p-1},s_{2p}, \lambda)\in U_-\times U_+\times\cdots \times U_-\times U_+\times \h.\]
Then
the Poisson bracket is defined in the usual way
\begin{equation}
 \{(s_i)_{ab},(s_j)_{cd}\}
 :=\overline{\hbar^{-1}[(l_i)_{ab},(l_j)_{cd}]}.
\end{equation}
The Poisson bracket can be explicitly written as follows.
Note that the first order expansion of the (quantum) $R$-matrix is
\begin{equation}\label{Rexp}
R=1+2\pi\mathi\hbar\,r+O(\hbar^2),
\end{equation}
where $r$ is the standard classical $r$-matrix
\[
r=\frac{1}{2}\sum_{a=1}^nE_{aa}\otimes E_{aa}+\sum_{1\leq b<a\leq n}E_{ab}\otimes E_{ba}.
\]
The first order expansion of $e^{\pi\mathi \hbar\delta P}$ is
\begin{equation}\label{hPexp}
e^{\pi\mathi \hbar\delta P}=1+2\pi\mathi \hbar \Omega_\h+O(\hbar^2),
\end{equation}
where 
\[
\Omega_\h=\frac{1}{2}\sum_{a=1}^nE_{aa}\otimes E_{aa}
\]
is just the Cartan part of the $r$-matrix. 
Using \eqref{Rexp} and \eqref{hPexp}, a direct computation leads to
\begin{pro}\label{StokesPoisson}
For $i=1,...,2p$, let $s_i=((s_i)_{ab})_{a,b=1,...,n}$ and $\lambda={\rm diag}(\lambda_{1},\ldots,\lambda_{n})$ be the matrix valued functions on $(U_-\times U_+)^p\times \h$.
Then the Poisson bracket takes the form
\begin{align}\label{bracket1}
 \frac{1}{2\pi\mathi}\{s_i^{(1)},s_{i+k}^{(2)}\}
 &=
 s_{i+k}^{(2)}\Omega_\h s_i^{(1)}-s_{i+k}^{(2)} s_i^{(1)} \Omega_\h
 -\Omega_\h s_i^{(1)}s_{i+k}^{(2)}+s_i^{(1)}\Omega_\h s_{i+k}^{(2)},
 &&1<k<2p-1, \\
\frac{1}{2\pi\mathi} \{s_i^{(1)},s_i^{(2)}\}
 &=
 -s_i^{(2)}\Omega_\h s_i^{(1)}
 +s_i^{(2)}s_i^{(1)}r
 -r s_i^{(1)}s_i^{(2)}
 +s_i^{(1)}\Omega_\h s_i^{(2)},
 &&i\ \text{odd},\\
\frac{1}{2\pi\mathi} \{s_i^{(1)},s_i^{(2)}\}
 &=
 s_i^{(1)}\Omega_\h s_i^{(2)}
 -s_i^{(1)}s_i^{(2)}r
+r s_i^{(2)}s_i^{(1)}
 -s_i^{(2)}\Omega_\h s_i^{(1)},
 &&i\ \text{even}, \\ \label{4Pidentity}
\frac{1}{2\pi\mathi} \{s_i^{(1)},s_{i-1}^{(2)}\}
 &=
 \Omega_\h s_i^{(1)}s_{i-1}^{(2)}
 -s_{i-1}^{(2)}\Omega_\h  s_i^{(1)}+s_{i-1}^{(2)}s_i^{(1)}\Omega_\h
 -s_i^{(1)}r s_{i-1}^{(2)},
 &&i\ \text{odd}, \\
\frac{1}{2\pi\mathi} \{s_i^{(2)},s_{i-1}^{(1)}\}
 &=
\Omega_\h s_i^{(2)}s_{i-1}^{(1)}
 -s_{i-1}^{(1)}\Omega_\h s_i^{(2)}+s_{i-1}^{(1)}s_i^{(2)}\Omega_\h
 -s_i^{(2)}r s_{i-1}^{(1)},
 &&i\ \text{even},\\ \label{6Pidentity}
 \{\lambda_a,\lambda_b\}&=0,
 \qquad
 \{\lambda_a,(s_i)_{bc}\}
 =
 -(\delta_{ab}-\delta_{ac})(s_i)_{bc}, &&\
a,b,c=1,\ldots,n.
\end{align}
Here we use the tensor notation:
let $X=(x_{rs})$ and $Y=(y_{uv})$ be two
matrix-valued functions on $(U_-\times U_+)^p\times \frak{h}$, set
\[
 X^{(1)}=X\otimes 1
 =\sum_{r,s}x_{rs}\otimes E_{rs}\otimes 1,\quad
 Y^{(2)}=1\otimes Y
 =\sum_{u,v}y_{uv}\otimes 1\otimes E_{uv},\]
then
\[ \{X^{(1)},Y^{(2)}\}:=\sum_{r,s,u,v}\{x_{rs},y_{uv}\}
 \otimes E_{rs}\otimes E_{uv}.\]
\end{pro}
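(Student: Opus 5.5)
The proof is a first-order expansion in $\hbar$ of each defining relation of $U_{\hbar}^{(p)}(R)$ in Definition \ref{defUpR}. We rewrite each relation as a commutator identity and read off the coefficient of $\hbar$. The only inputs are the expansions \eqref{Rexp} and \eqref{hPexp}, i.e. $R=1+2\pi\mathi\hbar r+O(\hbar^2)$ and $e^{\pm\pi\mathi\hbar\delta P}=1\pm 2\pi\mathi\hbar\Omega_\h+O(\hbar^2)$. We also use the convention that $[L_i^{(1)},L_j^{(2)}]=\hbar\{s_i^{(1)},s_j^{(2)}\}+O(\hbar^2)$ entrywise, where $L^{(1)}_i$ and $L^{(2)}_j$ have entries in the same algebra and act in different tensor slots.

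For \eqref{bracket1}, write the first relation as $L^{(2)}_{i+k}\tilde L^{(1)}_i-\tilde L^{(1)}_iL^{(2)}_{i+k}=0$ with $\tilde L^{(1)}_i=e^{\pi\mathi\hbar\delta P}L^{(1)}_ie^{-\pi\mathi\hbar\delta P}$. Expanding $\tilde L_i^{(1)}=L_i^{(1)}+2\pi\mathi\hbar[\Omega_\h,L_i^{(1)}]+O(\hbar^2)$ and dividing by $\hbar$ gives
\[
\{s_i^{(1)},s_{i+k}^{(2)}\}=-2\pi\mathi\bigl(s_{i+k}^{(2)}[\Omega_\h,s_i^{(1)}]-[\Omega_\h,s_i^{(1)}]s_{i+k}^{(2)}\bigr),
\]
which is the stated four-term formula. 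The $R$-relations for equal index are handled the same way. For $i$ odd, moving everything to one side gives $L^{(1)}_iL^{(2)}_i-L^{(2)}_iL^{(1)}_i$ plus $\hbar$-corrections from $R^{12}$ on the left and right and from the middle factor $e^{-\pi\mathi\hbar\delta P}$. Expanding to first order yields $\{s_i^{(1)},s_i^{(2)}\}$ in terms of $r s_i^{(1)}s_i^{(2)}$, $s_i^{(2)}s_i^{(1)}r$, $s_i^{(1)}\Omega_\h s_i^{(2)}$ and $s_i^{(2)}\Omega_\h s_i^{(1)}$ with the signs claimed. The even case is symmetric. The mixed relations \eqref{4Pidentity} and its even analogue are obtained identically; one keeps track that $R^{12}$ sits in the middle, which produces the term $s_i^{(1)}rs_{i-1}^{(2)}$.

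For \eqref{6Pidentity}, set $K_a=e^{2\pi\mathi\hbar\kappa_a}$ formally, so that $\lambda_a$ is the class of $\kappa_a$ (this matches $K\mapsto e^{2\pi\mathi D_1}$ with $D_1\in\hbar\cdot$). Then $[K_a,K_b]=0$ gives $\{\lambda_a,\lambda_b\}=0$. The relation $K^{(1)}L_i^{(2)}(K^{(1)})^{-1}=e^{-2\pi\mathi\hbar\delta P}L_i^{(2)}e^{2\pi\mathi\hbar\delta P}$, expanded to first order, gives $[\kappa^{(1)},s_i^{(2)}]$ equal to $-[\sum_a E_{aa}\otimes E_{aa},s_i^{(2)}]$. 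Taking the $(aa,bc)$ entry gives $-(\delta_{ab}-\delta_{ac})(s_i)_{bc}$.

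The main obstacle is bookkeeping. One must verify that the leading-order terms cancel, which holds because classically the entries commute. One must also check that the $\hbar^{-1}$ scaling is consistent, so that the bracket is well defined on the quotient by $\hbar$. Finally, the signs and the placement of $\Omega_\h$ versus $r$ have to be tracked carefully in the two mixed families. I would carry out the odd cases explicitly and obtain the even ones by the symmetry that swaps the tensor factors and exchanges upper and lower triangularity.
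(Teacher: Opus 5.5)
Your route is the paper's: the paper gives no argument beyond ``a direct computation'' from $R=1+2\pi\mathi\hbar r+O(\hbar^2)$ and $e^{\pi\mathi\hbar\delta P}=1+2\pi\mathi\hbar\Omega_\h+O(\hbar^2)$. Your reduction of the even cases to the odd ones also works, but only if the flip of tensor factors is combined with conjugation by the order-reversing permutation. The flip alone turns $R^{12}$ into $R^{21}$; the reversal takes $R^{21}$ back to $R^{12}$, fixes $\delta P$, and swaps $U_\pm$. Two steps, however, are wrong as written.

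\textbf{The sign in the first family.} Put $A=L_i^{(1)}$ and $B=L_{i+k}^{(2)}$. The relation $(e^{\pi\mathi\hbar\delta P}Ae^{-\pi\mathi\hbar\delta P})B=B(e^{\pi\mathi\hbar\delta P}Ae^{-\pi\mathi\hbar\delta P})$ gives
\[
AB-BA=2\pi\mathi\hbar\bigl(B[\Omega_\h,A]-[\Omega_\h,A]B\bigr)+O(\hbar^2).
\]
Your convention $AB-BA=\hbar\{s_i^{(1)},s_{i+k}^{(2)}\}+O(\hbar^2)$ is correct, and with it this yields $\{s_i^{(1)},s_{i+k}^{(2)}\}=+2\pi\mathi\bigl(s_{i+k}^{(2)}[\Omega_\h,s_i^{(1)}]-[\Omega_\h,s_i^{(1)}]s_{i+k}^{(2)}\bigr)$, which expands to the stated four-term formula. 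Your displayed identity carries $-2\pi\mathi$, so it gives minus the stated bracket, and the claim that it ``is the stated formula'' is false. The other families you only assert ``with the signs claimed'', so write them out the same way. For example, for $i$ odd one gets
\[
L^{(1)}_iL^{(2)}_i-L^{(2)}_iL^{(1)}_i=2\pi\mathi\hbar\bigl(-L^{(2)}_i\Omega_\h L^{(1)}_i+L^{(2)}_iL^{(1)}_ir-rL^{(1)}_iL^{(2)}_i+L^{(1)}_i\Omega_\h L^{(2)}_i\bigr)+O(\hbar^2),
\]
which does match.

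\textbf{The normalization of $\lambda$.} With $K_a=e^{2\pi\mathi\hbar\kappa_a}$, your own expansion gives $[\kappa_a,(l_i)_{bc}]=-(\delta_{ab}-\delta_{ac})(l_i)_{bc}+O(\hbar)$. This is an $O(1)$ commutator, so $\kappa_a$ does not commute with the $l$'s modulo $\hbar$ and has no class in the commutative quotient. Then $\overline{\hbar^{-1}[\kappa_a,\cdot\,]}$ is undefined. You reached the right formula only by reading off $\overline{[\kappa_a,\cdot\,]}$, i.e. by silently taking $\lambda_a$ to be the class of $\hbar\kappa_a$.

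Your justification ``$D_1\in\hbar\cdot$'' is also off. We have $(D_1)_{kk}=-\hbar e_{kk}+\cdots$ and $\hbar e_{kk}\notin\hbar\Uph$, so $D_1$ reduces to $d_1\neq 0$. Hence $K$ must reduce to $e^{2\pi\mathi\lambda}$, not to $1$.

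The clean version avoids logarithms. Entrywise, the $K$-relation reads $K_a(l_i)_{bc}K_a^{-1}=e^{-2\pi\mathi\hbar(\delta_{ab}-\delta_{ac})}(l_i)_{bc}$, so
\[
[K_a,(l_i)_{bc}]=-2\pi\mathi\hbar(\delta_{ab}-\delta_{ac})(l_i)_{bc}K_a+O(\hbar^2).
\]
With $\overline{K_a}=e^{2\pi\mathi\lambda_a}$ and the chain rule, this is exactly $\{\lambda_a,(s_i)_{bc}\}=-(\delta_{ab}-\delta_{ac})(s_i)_{bc}$. Likewise $[K_a,K_b]=0$ gives $\{\lambda_a,\lambda_b\}=0$.
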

Note that by our convention $s_0=e^{2\pi\mathi \lambda} s_{2p}e^{-2\pi\mathi \lambda}$, then the bracket for the pair $s_1$ and $s_{2p}$ can be derived from \eqref{4Pidentity} and \eqref{6Pidentity} by setting $i=1$.
One can check directly the brackets above satisfy the Jacobi identity. 
It gives, in a new way, the Poisson structure on the space of classical Stokes matrices. See Corollary \ref{twoPoisson}. The mixed terms between different Stokes matrices as in \eqref{bracket1}
have a similar structure to the quasi-Poisson fusion rule in the sense of \cite{AKSM}, or to the mixed product in the sense of \cite{LuM}. 

\subsection{Poisson brackets on the space of meromorphic connections via the semiclassical limit}\label{PoissonDeRham}

We first interpret $\Uph$ via a truncation of current algebra.
Take the Lie algebra
\[
 J_p=t\mathfrak{gl}_n[t]/t^p\mathfrak{gl}_n[t],
\]
and write \(X^{(a)}=Xt^a\).  The bilinear form
\[
 \omega_u(X^{(a)},Y^{(b)})
 =\delta_{a+b,p}\operatorname{tr}(u[X,Y])
\]
is a Lie-algebra $2$-cocycle.  Let
\(\widetilde J_{p,u}=J_p\oplus\mathbb Cc\) be the corresponding central
extension and take the direct sum of Lie algebras
\[
 \widetilde{\frak l}_{p,u}
 =\frak{gl}_n\oplus\widetilde J_{p,u}.
\]
Let
$U_\hbar(\widetilde{\frak l}_{p,u})$ denote the
$\mathbb C[\hbar]$-algebra with relations
$xy-yx=\hbar[x,y]$, then the defining relations of $\Uph$ give a canonical isomorphism
\begin{equation}
 \Uph\cong U_\hbar(\widetilde{\frak l}_{p,u})/(c-1)~:~ \hbar T_{[a+1]}\mapsto X^{(a)}, \hbar T\mapsto X.
\end{equation}
Using the PBW theorem,
the following lemma is standard.
\begin{lem}
Ordered monomials in the generators $\hbar e_{ij}$ and
$\hbar e_{ij}^{(a)}$ form a $\C[\hbar]$-basis of $\Uph$.  In
particular, 
\begin{equation}
\Uph/\hbar\Uph\cong\operatorname{Sym}(\frak l_{p}), \text{  with the vector space } {\frak l}_{p}:=\frak{gl}_n\oplus J_{p}.
\end{equation}
Furthermore, reducing 
modulo $\hbar$ gives a canonical isomorphism of complete commutative
algebras
\begin{equation}\label{eq:completed-special-fibre}
  \wUph/\hbar\wUph
  \cong
  \widehat{{\rm Sym}}(\mathfrak l_{p})
  :=
  \varprojlim_N
  {\rm Sym}(\mathfrak l_{p})/\mathfrak m_0^N.
\end{equation}
where $\mathfrak m_0$ is the maximal ideal generated by
$\mathfrak l_{p}$ in $\operatorname{Sym}(\mathfrak l_{p})$. 
\end{lem}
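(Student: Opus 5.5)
The plan is to identify $\Uph$ with a quotient of a Rees-type enveloping algebra and then apply the PBW theorem. Using the canonical isomorphism $\Uph\cong U_\hbar(\widetilde{\frak l}_{p,u})/(c-1)$ just established, I would first verify that $\omega_u$ is a Lie algebra $2$-cocycle on $J_p$. Antisymmetry follows from $\operatorname{tr}(u[X,Y])=-\operatorname{tr}(u[Y,X])$. For the cocycle identity $\omega_u([X^{(a)},Y^{(b)}],Z^{(c)})+\text{cyclic}=0$, only the terms with $a+b+c=p$ survive, and the identity reduces to $\operatorname{tr}(u[[X,Y],Z])+\text{cyclic}=0$, which is the Jacobi identity in $\frak{gl}_n$ paired with $u$. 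Consequently $\widetilde{\frak l}_{p,u}$ is a genuine Lie algebra over $\C$, with basis $\{E_{ij}\}\cup\{E_{ij}t^a\}\cup\{c\}$.

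Next I would invoke the PBW theorem for the homogenized enveloping algebra $U_\hbar(\frak g)=T(\frak g)[\hbar]/(xy-yx-\hbar[x,y])$. This algebra is the Rees algebra of $U(\frak g)$ with respect to its PBW filtration, so it is a free $\C[\hbar]$-module on ordered monomials in a basis of $\frak g$. Because $c$ is central, $U_\hbar(\widetilde{\frak l}_{p,u})\cong U_\hbar(\widetilde{\frak l}_{p,u})'\otimes\C[c]$ as a free module, where the first factor is spanned by the ordered monomials avoiding $c$. Quotienting by $(c-1)$ then leaves a free $\C[\hbar]$-module with basis the ordered monomials in the images of $E_{ij},E_{ij}t^a$, that is, in $\hbar e_{ij},\hbar e^{(a)}_{ij}$. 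The one point needing care is that $(c-1)$ intersects the span of $c$-free monomials trivially, which follows from the tensor decomposition.

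Setting $\hbar=0$ makes all commutators vanish. In particular the $a+b=p$ relation $[\hbar e^{(a)}_{ij},\hbar e^{(b)}_{kl}]=\hbar\delta\,(u_i-u_j)$ becomes $0$, so ordered monomials give a basis of the commutative algebra $\Uph/\hbar\Uph$. Hence $\Uph/\hbar\Uph\cong\operatorname{Sym}(\frak l_p)$.

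For the completed statement I would compare filtrations: $\frak m$ maps onto $\frak m_0$ (the $\hbar$ generator maps to $0$). It follows that $\Uph/(\frak m^N+\hbar\Uph)\cong\operatorname{Sym}(\frak l_p)/\frak m_0^N$. Since $\hbar\in\frak m$, passing to inverse limits gives a surjection $\wUph\to\widehat{\rm Sym}(\frak l_p)$. The main obstacle is showing that its kernel is exactly $\hbar\wUph$, which requires the completion to commute with $\hbar$-torsion-freeness, i.e. $\hbar\Uph\cap\frak m^N=\hbar\,\frak m^{N-1}$ up to bounded shift. I would prove this using the weight grading $W_r$ from Section \ref{beginsec}: each $\Uph/\frak m^N$ is finite dimensional and the system is Mittag-Leffler. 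Applying $\varprojlim$ to the exact sequences
\[
0\to \hbar\Uph/(\hbar\Uph\cap\frak m^N)\to \Uph/\frak m^N\to \Uph/(\frak m^N+\hbar\Uph)\to 0
\]
preserves exactness, and the weight comparison $\bigoplus_{r\ge pN}W_r\subseteq\frak m^N\subseteq\bigoplus_{r\ge N}W_r$ shows that the $\frak m$-adic topology on $\hbar\Uph$ agrees with the one induced by multiplication by $\hbar$. This identifies the first limit with $\hbar\wUph$ and gives \eqref{eq:completed-special-fibre}.
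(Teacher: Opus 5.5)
Your proposal is correct and follows the paper's route. The paper only says the lemma is standard by the PBW theorem for $U_\hbar(\widetilde{\frak l}_{p,u})/(c-1)$, and your argument supplies exactly the omitted steps: the free Rees module, splitting off the central $c$, and Mittag-Leffler together with the weight sandwich for the completion. One small wording fix: the weight comparison does not give a bounded shift. Because $\hbar$ is a homogeneous non-zero-divisor of weight $p$, it gives that $\hbar x\in\frak m^N$ forces $x\in\frak m^{\lfloor N/p\rfloor-1}$, and this is all you need for the two topologies on $\hbar\Uph$ to agree.
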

The semiclassical limit of $\Uph$ induces a Poisson bracket on ${\rm Sym}(\frak l_{p})$ in the usual way
\begin{equation}
 \{\bar x,\bar y\}
 =\overline{\hbar^{-1}[x,y]}, \qquad x,y\in \Uph.
\end{equation}
Note that the dual $\frak l_{p}^*$, via the trace pairing, is isomorphic to
the affine
coefficient space of differential equations \eqref{heq}
\begin{equation}
\frak l_{p}^*\cong {\mathcal M}_{dR}^{(p)}(u)=\{(A_p,...,A_2,B)\in \frak{gl}_n\times \cdots \times \frak{gl}_n\}.
\end{equation}
Then in terms of the coordinate functions $\{b_{ij}\}_{i,j=1,...,n}$ of $B$ and $\{a^{(m-1)}_{ij}\}_{i,j=1,...,n}$ of $A_m$ for $m=2,...,p$, the induced Poisson bracket is given by
\begin{align}\label{Poisson1}
\{a^{(m)}_{ij},a^{(s)}_{kl}\}&=\left\{\begin{array}{lr} \delta_{jk}a_{il}^{(m+s)}-\delta_{li} a_{kj}^{(m+s)},  & \text{ if } m+s\le p-1 \\
\delta_{jk}\delta_{il}(u_i-u_j), & \text{ if }  m+s=p,\\
0, & \text{ if }  m+s\ge p+1,
             \end{array}
\right.\\ \label{Poisson2}
    \{b_{ij},b_{kl}\}&= \delta_{jk}b_{il}-\delta_{li} b_{kj}, \hspace{3mm} \ \{b_{ij},a^{(m)}_{kl}\}=0, \text{ for } i,j,k,l=1,\ldots,n, \text{ and } m=1,\ldots,p-1. 
\end{align}

Thus $\wUph$ is a quantization of the completed local Poisson algebra $\widehat{\mathcal{O}}_{{\mathcal M}_{dR}^{(p)}(u)}$. 

\begin{rmk}\label{modiloop}
The space ${\mathcal M}_{dR}^{(p)}(u)$, equipped with the above Poisson structure, is a Poisson submanifold of the dual of the modified loop algebra $L_r\frak{gl}_n$, see \cite{RSTS}. 
Here the Lie algebra $L_r\frak{gl}_n$ is induced by the split rational classical $r$-matrix $r=P_+-P_-$ on the ordinary loop algebra $L\frak{gl}_n=L_+\frak{gl}_n\oplus L_-\frak{gl}_n$. More explicitly, $L_r\frak{gl}_n$ is isomorphic to the direct sum of the original Lie algebra structure on $L_+\frak{gl}_n$ and the opposite bracket on $L_-\frak{gl}_n$
\[
 L_r\frak{gl}_n
 \cong
 L_+\frak{gl}_n\oplus
(L_-\frak{gl}_n)^{\operatorname{op}}.
\]
See \cite{BHH} for its explicit realization on the space of meromorphic connections. In particular, the minus sign in the residue matrix $-B$, as in our convention, accounts for the opposite bracket on $L_-\frak{gl}_n$, while the vanishing of the mixed bracket between $L_-\frak{gl}_n$ and $L_+\frak{gl}_n$ explains the decoupling of $B$ from the irregular coefficients $A_m$.
\end{rmk}

\subsection{Quantum Riemann-Hilbert-Birkhoff maps}\label{qRHBmap}

As a corollary of Theorem \ref{thmRLL}, we have
\begin{thm}
For any fixed $u\in\h_{\rm reg}$, the map 
\[
  \nu_\hbar(u):\mathcal U_\hbar^{(p)}(R)\longrightarrow \wUph~;~ L_i\mapsto S_i(u) \text{ for } i=1,\ldots,2p, \text{ and } K\mapsto e^{2\pi\mathi D_1},
\]
associating the quantum Stokes matrices $S_i$ and the formal monodromy matrix $e^{2\pi\mathi D_1}$ to the generators $L_i$ and $K$ respectively, is an associative algebra homomorphism. 
Taking the semiclassical limit (s.c.l) leads to the commutative diagram:
\[
\begin{CD}
U_\hbar^{(p)}(R)
@>{\nu_\hbar(u)}>>
\wUph
\\
@VV{s.c.l}V
@VV{s.c.l}V
\\
\widehat{\mathcal O}_{\mathcal M_B^{(p)}}
@>{\nu(u)^*}>>
 \widehat{\mathcal O}_{{\mathcal M}_{dR}^{(p)}}.
\end{CD}
\]
\end{thm}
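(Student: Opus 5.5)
The plan has two parts: first show that $\nu_\hbar(u)$ is a well-defined algebra homomorphism, and then show that its reduction modulo $\hbar$ is the pullback under the classical RHB map.

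\emph{Homomorphism.} Since $U_\hbar^{(p)}(R)$ is presented by generators and relations, it suffices to check that the proposed images satisfy every defining relation in $\wUph$.
\begin{itemize}
\item The triangularity and unipotence constraints on the $L_i$ hold for the $S_i(u)$ by the standing Assumption of Section \ref{proofThm}.
\item The quadratic exchange relations among the $L_i$ are exactly the identities of Theorem \ref{thmRLL}, whose proof is completed in Section \ref{proofThm}.
\item For $K\mapsto e^{2\pi\mathi D_1}$, we first need invertibility. Since $D_1\in\mathfrak m$, the series $e^{\pm 2\pi\mathi D_1}$ converge in the $\mathfrak m$-adic completion.
\item Commutativity $[K_a,K_b]=0$ follows from $[(D_1)_{aa},(D_1)_{bb}]=0$ in Theorem \ref{oneformal}.
\item The relation $K^{(1)}L_i^{(2)}(K^{(1)})^{-1}=e^{-2\pi\mathi\hbar\delta P}L_i^{(2)}e^{2\pi\mathi\hbar\delta P}$ is the last identity of Theorem \ref{thmRLL}. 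I would prove it directly, since the Corollary in Section \ref{proofThm} omits it. The first commutativity condition in \eqref{commutativity}, $[S^{(2)}_i,D^{(1)}_1+\hbar\delta P]=0$, holds because $D_1^{(1)}+\hbar\delta P$ commutes with the coefficients of \eqref{KZin2} (Lemma \ref{Dcommute}, relations \eqref{D1T}, \eqref{D1T1}) and with the normalisation of the canonical solutions. Hence $D^{(1)}_1$ and $-\hbar\delta P$ act identically by conjugation on $S^{(2)}_i$. Moreover $\hbar\delta P$ commutes with $D_1^{(1)}$, so exponentiating gives the relation.
\item The cyclic convention $L_{i+2p}=K^{-1}L_iK$ matches $S_{i+2p}=e^{-2\pi\mathi D_1}S_ie^{2\pi\mathi D_1}$ from Section \ref{q-Stokes}.
\end{itemize}
To make sense of "$\nu_\hbar(u)$ is a homomorphism" for a topological $\C\llbracket\hbar\rrbracket$-algebra, I would also note continuity: $\hbar\in\mathfrak m$, so $\hbar$-adic limits map to $\mathfrak m$-adic limits.

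\emph{Semiclassical limit.} Take $\hbar\to 0$ in each finite quotient. By the lemma on PBW bases, $\wUph/\hbar\wUph\cong\widehat{\rm Sym}(\mathfrak l_p)=\widehat{\mathcal O}_{\mathcal M^{(p)}_{dR}(u)}$, with $\hbar T_{[m]}\mapsto A_m$ and $\hbar T\mapsto B$. Under this reduction, equation \eqref{hqeq} becomes the classical equation \eqref{heq}, now with coefficients in a commutative complete local ring.

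The key step is that each construction in Section \ref{beginsec1} commutes with reduction modulo $\hbar$:
\begin{itemize}
\item the recursion determining $H_m$ and $D_k$ (Theorem \ref{oneformal});
\item the order-$p$ Borel--Laplace sums in each quotient $\Uph/\mathfrak m^N$;
\item the resulting Stokes factors.
\end{itemize}
All three are obtained by $\mathbb C$-linear or polynomial operations and by integrals along rays that are independent of $\hbar$. The quotient map $\Uph/\mathfrak m^N\to(\Uph/\mathfrak m^N)/\hbar$ is a finite-dimensional algebra map, so it commutes with them. Consequently $\overline{S_i(u)}$ is the Stokes matrix of the classical formal solution $\widehat f$, viewed as a family over the formal neighbourhood of $0\in\mathcal M^{(p)}_{dR}(u)$. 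In other words, it is the Taylor expansion of $s_i=\nu(u)^*(s_i)$ at $0$. Likewise $\overline{e^{2\pi\mathi D_1}}=e^{2\pi\mathi d_1}$, and on the other side $\overline{L_i}=s_i$ and $\overline K=e^{2\pi\mathi\lambda}$. Together these give commutativity of the diagram on generators.

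\emph{Main obstacle.} I expect the main obstacle to be the identification of the $\mathfrak m$-adic limit of the finite-quotient resummations mod $\hbar$ with the analytic classical RHB map. One has to check two things:
\begin{itemize}
\item the classical Stokes matrices are analytic in $(A_m,B)$ near $0$ with the same anti-Stokes rays ${\rm aSR}(u)$, which holds since the leading term $u$ is fixed;
\item resummation in the formal neighbourhood agrees with the Taylor expansion of the analytic sum.
\end{itemize}
I would argue the second by uniqueness: both are holomorphic solutions on ${\rm Sect}_d$ with the same asymptotics \eqref{preasy}, so Proposition \ref{CanSol} applied over $\mathcal O/\mathfrak m_0^N$ forces them to agree.
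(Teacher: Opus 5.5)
Your proposal is correct and follows essentially the same route as the paper. The homomorphism property is read off from Theorem \ref{thmRLL}, and the diagram commutes because reduction modulo $\hbar$ commutes with the recursion \eqref{Hrecursion}, with the finite-quotient Borel--Laplace sums, and hence with the Stokes matrices. You add useful detail the paper leaves implicit: the direct derivation of the $K$-relations from the commutativity condition \eqref{commutativity}, continuity, and the uniqueness argument identifying the formal-neighbourhood resummation with the Taylor expansion of the analytic classical Stokes data.
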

\begin{proof}
We denote by ${\rm scl}(X)$ the image of an element $X\in \wUph$ under the semiclassical limit
\[ \wUph/\hbar{\wUph}
  \cong
  \widehat{\mathcal O}_{{\mathcal M}_{dR}^{(p)}}.
  \]
In particular, we have ${\rm scl}(\hbar e^{(m)}_{ij})=a^{(m)}_{ij}$ and ${\rm scl}(\hbar e_{ij})=b_{ij}$, for $m=1,\ldots,p-1$ and $i,j=1,\ldots,n$. Abusing notation slightly, we may write ${\rm scl}(\hbar T_{[m]})= A_m$ and ${\rm scl}(\hbar T)= B$, where $A_m$ and $B$ are coefficient matrices of the classical differential equation \eqref{heq}.

Let us first check the semiclassical limit of the formal solution of the (quantum) equation \eqref{hqeq} is the formal solution of the classical equation \eqref{heq}. Let us take the unique formal solution of the differential equation \eqref{hqeq}
\[
\widehat F(z)
=
\left(1+H_1z+H_2z^2+\cdots\right)
\cdot \exp\int
\left(
\frac{u}{z^{p+1}}
+\frac{D_p}{z^p}
+\cdots
+\frac{D_1}{z}
\right)dz,
\]
Following \eqref{Hrecursion}, the coefficients $H_m\in \Uph\subset\wUph$ are uniquely determined by the recursion \eqref{Hrecursion}. Taking the semiclassical limit on both sides of \eqref{Hrecursion} and setting 
\[h_m={\rm scl}(H_m), \quad d_i={\rm scl}(D_i), \quad A_m={\rm scl}(\hbar T_m), \quad B={\rm scl}(\hbar T)\in {\mathcal O}_{{\mathcal M}_{dR}^{(p)}}\otimes \End(\C^n),\] 
we get (under the convention $A_1=-B$, $h_0=1$ and $h_k=0$ for $k<0$)
\begin{equation}\label{cHrecursion}
  [u,h_{m+1}]
  =
  (m-p+1)h_{m-p+1}
  +
  \sum_{k=1}^{p}
  \left(
    h_{m-p+k}d_k
    -
    A_kh_{m-p+k}
  \right).
\end{equation}
The equations \eqref{cHrecursion}, for the $n\times n$ matrices $h_m$, $m\ge 1$, and the diagonal matrices $d_i$, $i=1,\ldots,p$, ensure that \[
\widehat f(z)
:=
\left(1+h_1z+h_2z^2+\cdots\right)\cdot
\exp\int
\left(
\frac{u}{z^{p+1}}
+\frac{d_p}{z^p}
+\cdots
+\frac{d_1}{z}
\right)dz,
\]
is the unique formal solution (under the initial condition $h_0=1$) of 
\begin{equation}
\frac{df}{dz} = \left(\frac{u}{z^{p+1}}+\frac{A_p}{z^p}+\cdots+\frac{A_2}{z^2}-\frac{B}{z}\right)\cdot f.\end{equation}

Note that taking the semiclassical limit is compatible with the order-$p$ Borel-Laplace resummation. Therefore, from ${\rm scl}(\widehat{H}(z))=\widehat{h}(z)=1+\sum_m h_mz^m$, we identity the semiclassical limit ${\rm scl}(H_d(z))$, of the Borel-Laplace $p$-sum $H_d(z)$ along an admissible direction $d$, with the formal Taylor series of the $p$-sum $h_d(z)$ in the coefficient parameters. Thus, the canonical solutions satisfy
\[f_d(z)={\rm scl}(F_d(z))\in {\widehat{\mathcal O}}_{{\mathcal M}_{dR}^{(p)}}\widehat{\otimes} \End(\C^n).\]
Here $f_d(z)$, as analytic functions on $(A_p,\ldots,A_2,B)\in\mathcal{M}_{dR}^{(p)}$, is regarded as its Taylor expansion at the origin $0\in \mathcal{M}_{dR}^{(p)}$. Therefore, the quantum and classical Stokes matrices, as the transition matrices between the solutions $F_d(z)$ and $f_d(z)$ respectively, relate to each other by
\[s_i={\rm scl}(S_i)\in {\widehat{\mathcal O}}_{{\mathcal M}_{dR}^{(p)}}\widehat{\otimes} \End(\C^n).\]

Finally, in terms of the generators, the semiclassical limit of the homomorphism
\[\nu_\hbar(u)~:~ U_\hbar^{(p)}(R)
\rightarrow
\wUph~;~ L_i\mapsto S_i \]
gives 
\[{\rm scl}(\nu_\hbar(u))~:~ \widehat{\mathcal O}_{\mathcal M_B^{(p)}}\rightarrow
 \widehat{\mathcal O}_{{\mathcal M}_{dR}^{(p)}};~ {\rm scl}((L_i)_{ab})\mapsto (s_i)_{ab}. \]
Here as in Section \ref{PoissonStokes}, ${\rm scl}((L_i)_{ab})$ are regarded as the coordinate functions on the $i$-th component of $(U_-\times U_+)^p\times \h$. Therefore, ${\rm scl}(\nu_\hbar(u))$ coincides with the pull-back of the classical RHB map 
\[\nu(u): (A_p,...,A_2,B)\in {\mathcal M}_{dR}^{(p)}(u)\rightarrow (s_1,...,s_{2p},d_1)\in\mathcal{M}_B^{(p)}\cong (U_-\times U_+)^{p}\times \h\]
associating the classical Stokes matrices to the differential equation \eqref{heq}. 
\end{proof}

\subsection{Poisson geometric nature of the classical Riemann-Hilbert-Birkhoff maps}\label{PoissonRHB}

Since the semiclassical limit of the associative algebra homomorphism $\nu_\hbar(u)$ automatically yields a Poisson map, it interprets the Poisson geometric nature of the classical RHB map in the following way. 

Let us first recall the results in \cite{Boalch2, Boalch4}.
Originally, the Poisson structure on the space of classical Stokes matrices
was induced from the irregular Atiyah-Bott construction \cite{Boalch4}, and admits an explicit form described by the quasi-Hamiltonian approach \cite{AMM}. That is $\mathcal{M}_B^{(p)}\cong \widetilde{\mathcal{M}}/GL_n$ is the quotient space of a quasi-Hamiltonian manifold $\widetilde{\mathcal{M}}$ described as follows. 

\begin{thm}[{\cite{Boalch4}}]\label{qHamiltonian}
Put
\[
 \widetilde{\mathcal M}
 =G\times(U_+\times U_-)^p\times\mathfrak h.
\]
Write a point as $(C,\Sigma_1,\ldots,\Sigma_{2p},\lambda)$, with
$\Sigma_{\rm odd}\in U_+$ and $\Sigma_{\rm even}\in U_-$. Then
$\widetilde{\mathcal M}$ is a complex quasi-Hamiltonian
$G\times H$-space for the action
\[
(g,t)\cdot(C,\Sigma_1,\ldots,\Sigma_{2p},\lambda)
 =(tCg^{-1},t\Sigma_1t^{-1},\ldots,t\Sigma_{2p}t^{-1},\lambda).
\]
Set $\epsilon=e^{\pi\mathi\lambda/p}$, and, for $1\leq j\leq p$, define
\begin{align*}
 D_j&=\epsilon^{-j}
 \Sigma_{2p+1-j}^{-1}\cdots \Sigma_{2p-1}^{-1}\Sigma_{2p}^{-1}C,\\
 E_j&=\epsilon^{j-2p}\Sigma_j\cdots \Sigma_2\Sigma_1\epsilon^{2p}C,
 \qquad D_0=E_0=C,
\end{align*}
and $D=D_p$, $E=E_p$. If
\[
 \mathcal D_j=D_j^*\theta,\qquad
 \mathcal E_j=E_j^*\theta,\qquad
 \overline{\mathcal D}=D^*\overline\theta,\qquad
 \overline{\mathcal E}=E^*\overline\theta,
\]
where $\theta$ and $\overline\theta$ are the left and right
Maurer-Cartan forms on ${\rm GL}_n$, then the quasi-Hamiltonian two-form is
\[
 \omega
 =\frac12(\overline{\mathcal D},\overline{\mathcal E})
 +\frac12\sum_{j=1}^{p}
 \bigl(
   (\mathcal D_j,\mathcal D_{j-1})
   -(\mathcal E_j,\mathcal E_{j-1})
 \bigr).
\]
\end{thm}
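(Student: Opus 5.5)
The plan is to avoid checking the quasi-Hamiltonian axioms of \cite{AMM} for $\omega$ all at once. Instead I would build $\widetilde{\mathcal M}$ from elementary quasi-Hamiltonian pieces using the standard operations (fusion and reduction by a diagonal action). For each of the $2p$ triangular factors I would use the chain of group-valued maps $D_0=C, D_1,\ldots,D_p$ and $E_0=C,E_1,\ldots,E_p$. Each step $D_{j-1}\mapsto D_j$ left-multiplies by $\epsilon^{-1}\Sigma_{2p+1-j}^{-1}$ up to an $H$-valued twist, and each step $E_{j-1}\mapsto E_j$ multiplies by a factor $\epsilon\,\Sigma_j$. So each step is modelled on the second-order-pole fission space $G\times U_+\times U_-\times\mathfrak h$ of \cite{Boalch4}, with its two-form $\frac12(\mathcal D_j,\mathcal D_{j-1})-\frac12(\mathcal E_j,\mathcal E_{j-1})$. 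The term $\frac12(\overline{\mathcal D},\overline{\mathcal E})$ is the gluing term of the internally fused double. I would take the $G$-component of the moment map to be $\mu_G=E^{-1}D$ (suitably normalised) and the $H$-component to be $\mu_H=e^{-2\pi\mathi\lambda}$ times the diagonal part of the product of the $\Sigma$'s.

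The verification splits into three steps.

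\emph{Invariance and equivariance.} I would first check that $\omega$ is $G\times H$-invariant and that $\mu$ is equivariant. Under $(g,t)$ we have $D_j\mapsto tD_jg^{-1}$ and $E_j\mapsto tE_jg^{-1}$, because $t$ commutes with $\epsilon$. The left Maurer--Cartan forms therefore transform by ${\rm Ad}_g$, which gives invariance of each pairing $(\mathcal D_j,\mathcal D_{j-1})$. The term $(\overline{\mathcal D},\overline{\mathcal E})$ is handled with the right forms, where the $t$-conjugation produces terms involving $t^{-1}dt$. These extra terms cancel because $\Sigma_i$ is unipotent and $t$ is diagonal.

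\emph{The condition $d\omega=-\mu^*\chi$.} I would use the identity $d(\mathcal A,\mathcal B)$ for $\mathcal A=A^*\theta$ and $\mathcal B=B^*\theta$, which expresses $(A^{-1}B)^*\chi$ as $A^*\chi-B^*\chi$ plus $d$ of the pairing. Summing this over the chain telescopes. Each step contributes the Cartan form of the factor $\epsilon\Sigma$. Since $\Sigma$ lies in a unipotent subgroup $U_\pm$, on which $\chi$ vanishes, only the $\mathfrak h$-parts survive, and these assemble into $\mu_H^*\chi_H$.

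\emph{The moment map condition $\iota_{v_\xi}\omega=\frac12\mu^*(\theta+\overline\theta,\xi)$ and nondegeneracy $\ker\omega_x\cap\ker d\mu_x=0$.} The moment map condition is a local computation using the same telescoping. For nondegeneracy I would argue by dimension count. Since $\dim \widetilde{\mathcal M}=\dim G+2p\dim U_++\dim\mathfrak h$, it suffices to show that a tangent vector in $\ker\omega\cap\ker d\mu$ has zero component in each $\Sigma_j$. I would do this by induction along the chain. Freezing $D_{j-1}$ and $E_{j-1}$ reduces the claim to the second-order case, where Gauss decomposition $U_-HU_+$ near the identity shows that the map $(\Sigma_+,\Sigma_-,\lambda)\mapsto$ product is a local diffeomorphism onto the big cell.

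I expect nondegeneracy to be the main obstacle, since invariance, the closedness condition and the moment map condition reduce to formal telescoping. My first attempt would be to realise $\widetilde{\mathcal M}$ as an iterated fusion of $p$ copies of the second-order space, reduced by the intermediate diagonal $H$-actions. Fusion and reduction are known to preserve the quasi-Hamiltonian property \cite{AMM}, so nondegeneracy would follow from the order-two case in \cite{Boalch4}. It would then remain only to match the resulting two-form with the stated $\omega$ using the explicit fusion formula.
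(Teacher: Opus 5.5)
The paper does not prove this statement; it quotes it from \cite{Boalch4}. Your proposal therefore has to stand on its own, and as written it has two genuine gaps.

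\textbf{First gap: the $H$-component of your moment map is wrong.} For $\xi\in\mathfrak h$ the action gives $D_j\mapsto tD_j$ and $E_j\mapsto tE_j$. Write $D_j=a_jD_{j-1}$ and $E_j=b_jE_{j-1}$, with $a_j=\epsilon^{-j}\Sigma_{2p+1-j}^{-1}\epsilon^{j-1}$ and $b_j=\epsilon^{j-2p}\Sigma_j\epsilon^{2p+1-j}$, and set $\overline{\mathcal D}_j=dD_j\,D_j^{-1}$. Then
\[
\iota_{v_\xi}\tfrac12(\mathcal D_j,\mathcal D_{j-1})=\tfrac12(\xi,\overline{\mathcal D}_j-\overline{\mathcal D}_{j-1})-\tfrac12(\xi,a_j^{-1}da_j+da_j\,a_j^{-1}),
\]
and the $E$-chain is analogous.
\begin{itemize}
\item Since $\Sigma^{-1}d\Sigma$ is strictly triangular, only the $\epsilon$'s reach the diagonal. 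Each of the $2p$ steps therefore contributes $\frac{\pi\mathi}{p}(\xi,d\lambda)$.
\item The telescoped terms cancel against $\iota_{v_\xi}\frac12(\overline{\mathcal D},\overline{\mathcal E})=\frac12(\xi,\overline{\mathcal E}-\overline{\mathcal D})$, so $\iota_{v_\xi}\omega=2\pi\mathi(\xi,d\lambda)$.
\item Hence $\mu_H=e^{\pm2\pi\mathi\lambda}$, with no dependence on the $\Sigma$'s. The sign is tied to the same convention that selects $\mu_G\in\{E^{-1}D,D^{-1}E\}$.
\end{itemize}
Because $H$ is abelian, this condition fixes $\mu_H^{-1}d\mu_H$, so your extra diagonal factor is not a normalisation choice. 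For $n=2$, $p=1$, the diagonal part of $\Sigma_2\Sigma_1$ is $\operatorname{diag}(1,1+ab)$, which would add the nonzero term $d\log(1+ab)$. Your nondegeneracy step is phrased through $\ker d\mu$, so it inherits this error.

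Two smaller slips:
\begin{itemize}
\item Invariance produces no $t^{-1}dt$ terms. For fixed $(g,t)$, left forms move by $\operatorname{Ad}_g$ and right forms by $\operatorname{Ad}_t$, so invariance is immediate.
\item In the $d\omega$ computation nothing ``assembles into $\mu_H^*\chi_H$''. Here $\chi_H=0$, and $a_j^*\chi=b_j^*\chi=0$ because $\chi$ vanishes on $U_\pm$ and on $H$ and $(\mathfrak u_\pm,\mathfrak h)=0$. The telescoping gives $d\omega=-\mu_G^*\chi$ directly.
\end{itemize}

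\textbf{Second gap: nondegeneracy, the real content, is not established.} Freezing $D_{j-1},E_{j-1}$ does not reduce you to the second-order space, for three reasons.
\begin{itemize}
\item In the $p=1$ space both chains start at $D_0=E_0=C$, whereas $D_{j-1}\neq E_{j-1}$ for $j\ge2$.
\item A single $\lambda$ is shared by all $p$ steps, so no step has its own $\mathfrak h$-factor for the Gauss-decomposition argument. Indeed, for $p>1$ the product map $(U_+\times U_-)^p\times\mathfrak h\to G$ has fibres of dimension $(p-1)n(n-1)$.
\item The components of $v$ in other blocks enter every block through $\mathcal D_j$, $\mathcal E_j$ and $(\overline{\mathcal D},\overline{\mathcal E})$.
\end{itemize}
Your fallback fusion route does not produce $\widetilde{\mathcal M}$ either. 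Fusing $p$ copies of the $2n^2$-dimensional space $G\times U_+\times U_-\times\mathfrak h$ and reducing by $p-1$ copies of $H$ gives dimension $2pn^2-2(p-1)n$. But $\dim\widetilde{\mathcal M}=(p+1)n^2-(p-1)n$; the surplus is $p-1$ copies of $G/H$ coming from the extra framings, and surplus $G$-actions remain. You also cannot glue the $H$ of one copy to the $G$ of the next, because a quasi-Hamiltonian $G$-space does not restrict to a quasi-Hamiltonian $H$-space.

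What is needed is an actual argument. With the correct $\mu_H$, $d\mu_H(v)=0$ kills $\delta\lambda$. Once all $\delta\Sigma_k=0$, $v$ is a generating vector $v_\eta$, and $\eta=0$ follows from the moment map condition because $d\mu_G$ is everywhere surjective. So the crux is to show directly that $\iota_v\omega=0$ and $d\mu_G(v)=0$ force every $\delta\Sigma_k=0$, using $\mathfrak g=\mathfrak u_-\oplus\mathfrak h\oplus\mathfrak u_+$ and $(\mathfrak u_\pm,\mathfrak u_\pm)=0$. Alternatively, you need a gluing decomposition whose pieces have the correct dimensions and symmetry groups.
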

\begin{rmk}\label{opposite}
In the theorem, the convention is $\Sigma_{\rm odd}\in U_+$, whereas our initial anti-Stokes ray convention
requires the odd Stokes matrices in $U_-$. The two descriptions are
related by a cyclic relabelling
\[
s_i=\Sigma_{i+1} \text{ for } 1\le i<2p, \text{ and  }
s_{2p}=e^{-2\pi i\lambda}\Sigma_1 e^{2\pi i\lambda}.
\]
\end{rmk}
\begin{rmk}
In the case $p=1$, as pointed out in \cite{Boalch1, Boalch4}, the space of Stokes matrices equipped with the inherited Poisson structure is identified with the dual Poisson-Lie group ${\rm GL}_n^*$ \cite{LuW}. For general $p>1$, there is also a geometric construction of the quasi-Hamiltonian geometry due to Li-Bland-\v{S}evera \cite{LBS}.
\end{rmk}

Since $u$ is regular, there exists a unique formal power series $\hat{h}(z)\in {\rm End}(\mathbb{C}^n)\llbracket z\rrbracket$ with $\hat{h}(0)=1$, such that the formal gauge transform 
\begin{equation}\label{clsirrtype}
    \hat{h}(z)[A(z)]=\hat{h}(z)^{-1} A(z) \hat{h}(z)-\hat{h}(z)^{-1}\frac{d\hat{h}}{dz}(z)=\frac{u}{z^{p+1}}+\frac{d_p}{z^p}+\cdots+\frac{d_2}{z^2}+\frac{d_1}{z},
\end{equation}
where $A(z)$ is the coefficient matrix of \eqref{heq}, and $d_p,d_{p-1},...,d_1$ is a series of $n\times n$ diagonal matrices. The irregular part $\frac{u}{z^{p+1}}+\sum_{m=2}^p\frac{d_m}{z^m}$ is called the irregular type of equation \eqref{heq}. Theorem \ref{oneformal} is its quantum analog.

The $d_p,\ldots,d_2$, seen as functions on ${\mathcal M}_{dR}^{(p)}(u)$, are Casimirs, so the Poisson structure is fibrewise on each subspace ${\mathcal M}_{dR}^{(p)}(u,d_p,\ldots,d_2)$ with fixed irregular type. Then we have

\begin{thm}\cite{Boalch2}\label{kthm}
The Riemann-Hilbert-Birkhoff (also known as irregular Riemann-Hilbert) map \[\nu: {\mathcal M}_{dR}^{(p)}(u,d_p,...,d_2)\rightarrow \mathcal{M}_B^{(p)}~;~(A_p,...,A_2,B)\mapsto (\Sigma_1,...,\Sigma_{2p},d_1)\]
associating the classical Stokes matrices to the differential equation \eqref{heq}, is a generally local analytic Poisson isomorphism.
\end{thm}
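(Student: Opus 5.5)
The plan is to deduce the Poisson property from the semiclassical limit of the algebra homomorphism $\nu_\hbar(u)$ established above, and then to get the local isomorphism from a dimension count plus a generic rank argument. Throughout, $\mathcal M_B^{(p)}$ carries the bracket of Proposition \ref{StokesPoisson}. Step four below identifies this bracket with Boalch's quasi-Hamiltonian bracket of Theorem \ref{qHamiltonian}.

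\emph{Step one: the Poisson property on completed local rings.} For $x,y\in U_\hbar^{(p)}(R)$ the homomorphism property gives $\nu_\hbar(u)([x,y])=[\nu_\hbar(u)x,\nu_\hbar(u)y]$ in $\wUph$. Since $\hbar$ is central in $\wUph$, dividing by $\hbar$ and reducing modulo $\hbar$ gives
\[
\nu(u)^*\{{\rm scl}\,x,{\rm scl}\,y\}=\{\nu(u)^*{\rm scl}\,x,\nu(u)^*{\rm scl}\,y\}.
\]
This uses the commutative diagram of the preceding theorem. It requires that $\hbar$ is not a zero divisor in $\wUph$ and in $U^{(p)}_\hbar(R)$. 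For $\wUph$ this follows from the PBW lemma of Section \ref{PoissonDeRham}. For $U^{(p)}_\hbar(R)$ I would only use the images $\nu_\hbar(u)x$, so torsion-freeness of $\wUph$ suffices. Applying this to the coordinate generators $(l_i)_{ab}$ and $K_a$ shows that $\nu(u)^*$ intertwines the brackets of Proposition \ref{StokesPoisson} with \eqref{Poisson1}--\eqref{Poisson2} on the Taylor expansions at $0\in\mathcal M_{dR}^{(p)}(u)$.

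\emph{Step two: from Taylor series to analytic functions.} Since $\nu(u)$ is analytic on the connected open domain where the Borel sums exist, both sides of the bracket identity are analytic functions there. They agree as power series at the origin, so by the identity theorem they agree on the whole domain. Hence $\nu$ is an analytic Poisson map. The $d_p,\ldots,d_2$ are images of the central elements $(D_m)_{kk}$ of Lemma \ref{Dcommute}, so they are Casimirs. Restricting to a fibre $\mathcal M_{dR}^{(p)}(u,d_p,\ldots,d_2)$ therefore still gives a Poisson map.

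\emph{Step three: the local isomorphism.} The fibre has dimension $pn^2-(p-1)n$. The target $(U_-\times U_+)^p\times\h$ has dimension $pn(n-1)+n$, which is the same. So it suffices to find one point of each fibre where $d\nu$ is injective; analyticity then makes injectivity hold on a dense open set, which gives a generic local analytic isomorphism. I would compute $d\nu$ at a convenient point, say the diagonal connection $A_m=d_m$, $B=-d_1$, whose Stokes matrices are trivial. There the first-order variation of $s_i$ is an explicit Laplace-type integral of the off-diagonal entries of the variation $(\delta A_p,\ldots,\delta B)$ against $e^{(u_i-u_j)/pz^p}$ along the Stokes sectors. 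This yields a $p$-by-$p$ Vandermonde-type system in the $p$ off-diagonal coefficient entries for each pair $i\neq j$, which I expect to be invertible.

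\emph{Step four: matching with Boalch's bracket (main obstacle).} Proposition \ref{StokesPoisson} must coincide with the bracket that the quasi-Hamiltonian two-form of Theorem \ref{qHamiltonian} induces on the reduction by ${\rm GL}_n$. I would verify this after the relabelling in Remark \ref{opposite} by computing the reduced bracket on the slice $C=1$. The first thing to try is the case $p=1$, where both sides are known to give ${\rm GL}_n^*$. For general $p$ I would then check the fusion-type mixed terms \eqref{bracket1} between non-adjacent $\Sigma_j$ against the sum $\sum_j(\mathcal D_j,\mathcal D_{j-1})$. If this identification is too heavy, the fallback is to state the theorem with the bracket of Proposition \ref{StokesPoisson} on $\mathcal M_B^{(p)}$.
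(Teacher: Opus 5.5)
The gap is Step four, and it carries the whole content of the statement. In Theorem~\ref{kthm} the Poisson structure on $\mathcal M_B^{(p)}$ is Boalch's: the one induced by the ${\rm GL}_n$-reduction of the quasi-Hamiltonian two-form of Theorem~\ref{qHamiltonian}.

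Your Steps one to three show less than that. Granting the Vandermonde-type invertibility, which you only expect, they give a generically locally invertible Poisson map for the bracket of Proposition~\ref{StokesPoisson}. That is essentially what the paper already extracts from Theorem~\ref{thmRLL} via the semiclassical limit of $\nu_\hbar(u)$.

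The identification of that bracket with Boalch's is Corollary~\ref{twoPoisson}. In the paper this corollary is obtained \emph{as a consequence of} Theorem~\ref{kthm}. The paper does not prove Theorem~\ref{kthm}; it quotes it from Boalch. Boalch's proof uses the irregular Atiyah--Bott construction: the RHB map from an extended orbit to framed Stokes data pulls $\omega$ back to the symplectic form, and one then reduces by ${\rm GL}_n$ using the decoupling lemma. So you cannot use Corollary~\ref{twoPoisson} in Step four without circularity. Your fallback, equipping $\mathcal M_B^{(p)}$ with the bracket of Proposition~\ref{StokesPoisson}, proves a different and, given the paper, much easier statement.

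To close the gap you would have to do the computation you only outline:
\begin{enumerate}
\item start from $\omega$ and the group-valued moment map of Theorem~\ref{qHamiltonian} and write down the quasi-Poisson bivector;
\item restrict it to ${\rm GL}_n$-invariant functions, e.g.\ on the slice $C=1$;
\item match the result term by term with \eqref{bracket1}--\eqref{6Pidentity} for every $p$, tracking the cyclic relabelling of Remark~\ref{opposite} and the normalisation $\frac{1}{2\pi\mathi}\omega$.
\end{enumerate}
The case $p=1$ does not settle this. The non-adjacent brackets \eqref{bracket1}, and the terms $(\mathcal D_j,\mathcal D_{j-1})$ and $(\mathcal E_j,\mathcal E_{j-1})$ with $j\ge 2$, only appear for $p>1$.

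Step three also needs to be actually carried out. In Boalch's approach the local isomorphism is automatic, because the pulled-back form is nondegenerate on the extended orbit. In yours the Poisson property alone will not give it: the fibres $\mathcal M_{dR}^{(p)}(u,d_p,\ldots,d_2)$ are not symplectic leaves, since the $\operatorname{tr}B^k$ are further Casimirs.

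If Step four were completed, your route would be a genuinely new, finite-dimensional proof of Boalch's theorem. As written, it establishes only the Poisson property for a bracket not yet shown to be the one in the statement.
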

 
\begin{rmk}
The original theorem in \cite{Boalch2} is given in the framed moduli space setting. After the framing $GL_n$ reduction, we get the present form. In particular, the Poisson bracket \eqref{Poisson1}-\eqref{Poisson2} on ${\mathcal M}_{dR}^{(p)}(u,d_p,...,d_2)$ is induced from the $GL_n$-reduction of the Poisson structure on the framed moduli space (extended coadjoint orbit), by the decouple lemma (Lemma 2.4 in \cite{Boalch2}, or Proposition 14(2) in \cite{Boalch4}).     
\end{rmk}
As a consequence, we have
\begin{cor}\label{twoPoisson}
The Poisson bracket on ${\mathcal M}_{B}^{(p)}=(U_-\times U_+)^p\times\mathfrak h$, induced from the ${\rm GL}_n$ reduction of the quasi-Hamiltonian two form $\frac{1}{2\pi\mathi}\omega$, is given by the matrix coordinate expressions \eqref{bracket1}-\eqref{6Pidentity} (provided the cyclic relabelling as in Remark \ref{opposite} is taken into account). 
\end{cor}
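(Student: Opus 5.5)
The plan is to deduce the corollary from two facts: the classical RHB map $\nu(u)$ is a Poisson map (Theorem \ref{kthm}), and its pullback is the semiclassical limit of the algebra homomorphism $\nu_\hbar(u)$ (the preceding theorem). Concretely, I would show that the bracket \eqref{bracket1}--\eqref{6Pidentity} and the $\mathrm{GL}_n$-reduced quasi-Hamiltonian bracket $\{\cdot,\cdot\}_{qH}$ agree after pulling back along $\nu(u)$. Then I would use that $\nu(u)$ is a local analytic isomorphism on each symplectic leaf to identify the two brackets on $\mathcal M_B^{(p)}$ itself.

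\textbf{Step 1.} Let $\{\cdot,\cdot\}_{scl}$ be the bracket of Proposition \ref{StokesPoisson}. Take coordinate functions $f,g$ on $\mathcal M_B^{(p)}$ with lifts $\tilde f,\tilde g\in U_\hbar^{(p)}(R)$. Since $\nu_\hbar(u)$ is an algebra homomorphism, applying $\overline{\hbar^{-1}[\cdot,\cdot]}$ on both sides and using the commutative diagram gives
\[
\nu(u)^*\{f,g\}_{scl}=\{\nu(u)^*f,\nu(u)^*g\}_{dR},
\]
where $\{\cdot,\cdot\}_{dR}$ is the bracket \eqref{Poisson1}--\eqref{Poisson2}. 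For this I need the semiclassical limit on $\wUph$ to be compatible with brackets. That follows from the PBW lemma, since $\wUph/\hbar\wUph\cong\widehat{\mathcal O}_{{\mathcal M}_{dR}^{(p)}}$ and $\hbar$ is a non-zero-divisor.

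\textbf{Step 2.} Theorem \ref{kthm} and the remark after it give the same identity with $\{\cdot,\cdot\}_{qH}$ in place of $\{\cdot,\cdot\}_{scl}$, after the cyclic relabelling of Remark \ref{opposite}. This uses that $d_p,\dots,d_2$ are Casimirs, so the statement holds fibrewise over the irregular type.

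\textbf{Step 3.} Subtracting the two identities gives $\nu(u)^*\bigl(\{f,g\}_{scl}-\{f,g\}_{qH}\bigr)=0$ in $\widehat{\mathcal O}_{\mathcal M_{dR}^{(p)}}$. Because $\nu$ is a local analytic isomorphism on a neighbourhood of generic points of each fibre, $\nu(u)^*$ is injective on germs there. Hence the difference vanishes on an open set, and therefore everywhere on $(U_-\times U_+)^p\times\mathfrak h$ by analyticity, since both brackets are given by polynomial or analytic expressions in the matrix entries.

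\textbf{Main obstacle.} The difficulty is the mismatch of base points and domains. Our completions are taken at $(1,\dots,1,0)$, whose preimage is the origin of $\mathcal M_{dR}^{(p)}$, while Theorem \ref{kthm} is only a generic local isomorphism. I would handle this by working with convergent expansions: the Stokes data are analytic in $(A_p,\dots,B)$, so the identity of formal germs at the origin becomes an identity of analytic functions on a neighbourhood. Injectivity then has to be argued at a nearby generic point inside that neighbourhood, and I would rely on the image of $\nu$ being open there. If this fails on the fibres with fixed $d_p,\dots,d_2$, the fallback is to vary $d_p,\dots,d_2$ and use that the union of the images is open. As a consistency check I would compare the $\lambda$-brackets \eqref{6Pidentity} directly with the moment map action of $H$.
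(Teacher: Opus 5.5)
Your argument is correct and is essentially the one the paper intends. The paper states the corollary as an immediate consequence of Theorem \ref{kthm} and the semiclassical limit of $\nu_\hbar(u)$: $\nu(u)^*$ intertwines both $\{\cdot,\cdot\}_{scl}$ and the reduced quasi-Hamiltonian bracket with the de Rham bracket, and $\nu$ is a generic local isomorphism on the fibres of fixed irregular type. Your upgrade of the formal identity at the origin to an analytic one, evaluated at a generic point of a fibre, supplies a detail the paper leaves implicit. One small correction: what you need is $\nu(0)=(1,\dots,1,0)$, not that the origin is the whole preimage of that point.
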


\subsection*{Acknowledgements}
\noindent It is a pleasure to thank Anton Alekseev for his valuable advice and incisive
comments. Several key points he raised during our discussions were essential
to the completion of this paper. The author is supported by the National Key Research and Development Program of China (No. 2021YFA1002000).

\Addresses
\end{document}